\theoremstyle{remark}
\newtheorem{Remark}{Remark}
\theoremstyle{plain}
\newtheorem{Theorem}{Theorem}
\newtheorem{Corollary}{Corollary}
\theoremstyle{definition}
\newtheorem{Definition}{Definition}
\begin{document}
\title{Interference Channel with State Information}
\author{Lili~Zhang, 
        Jinhua~Jiang, 
        and~Shuguang~Cui
\thanks{L. Zhang and S. Cui are with the Department
of Electrical and Computer Engineering, Texas A$\&$M University,
College Station, TX 77843 USA
(e-mail: lily.zhang@tamu.edu; cui@ece.tamu.edu).}
\thanks{J. Jiang is with the Department of Electrical Engineering, Stanford University, Stanford, CA 94305 USA (email: jhjiang@stanford.edu).}
}
\maketitle
\begin{abstract}
In this paper, we study the state-dependent two-user interference channel,
where the state information is non-causally known at both transmitters but
unknown to either of the receivers. We first propose two coding schemes for
the discrete memoryless case: simultaneous encoding for the sub-messages in
the first one and superposition encoding in the second one, both with rate
splitting and Gel'fand-Pinsker coding. The corresponding achievable rate
regions are established. Moreover, for the Gaussian case, we focus on the
simultaneous encoding scheme and propose an \emph{active interference
cancellation} mechanism, which is a generalized dirty-paper coding technique,
to partially eliminate the state effect at the receivers. The corresponding
achievable rate region is then derived. We also propose several heuristic
schemes for some special cases: the strong interference case, the mixed
interference case, and the weak interference case. For the strong and mixed
interference case, numerical results are provided to show that active
interference cancellation significantly enlarges the achievable rate region.
For the weak interference case, flexible power splitting instead of active
interference cancellation improves the performance significantly. 

\end{abstract}
\IEEEpeerreviewmaketitle
\section{Introduction} \label{sec_intro}
The interference channel (IC) models the situation where several independent
transmitters communicate with their corresponding receivers simultaneously
over a common spectrum. Due to the shared medium, each receiver suffers from
interferences caused by the transmissions of other transceiver pairs. The
research of IC was initiated by Shannon~\cite{shannon} and the channel was
first thoroughly studied by Ahlswede~\cite{ahlswede}. Later,
Carleial~\cite{carleial} established an improved achievable rate region by
applying the superposition coding scheme. In~\cite{han_kobayashi}, Han and
Kobayashi obtained the best achievable rate region known to date for the
general IC by utilizing simultaneous decoding at the receivers. Recently, this
rate region has been re-characterized with superposition encoding for the
sub-messages~\cite{cmg_region,new_fourier}. However, the capacity region of
the general IC is still an open problem~\cite{han_kobayashi}.


The capacity region for the corresponding Gaussian case is also unknown except
for several special cases, such as the strong Gaussian IC and the very strong
Gaussian IC~\cite{sato,carleial_2}. In addition, Sason~\cite{sason}
characterized the sum capacity for a special case of the Gaussian IC called
the degraded Gaussian IC. For more general cases, Han-Kobayashi
region~\cite{han_kobayashi} is still the best achievable rate region known to
date. However, for the general Gaussian interference channel, the calculation
of the Han-Kobayashi region bears high complexity. The authors in~\cite{david}
proposed a simpler heuristic coding scheme, for which they set the private
message power at both transmitters in a special way such that the interfered
private signal-to-noise ratio (SNR) at each receiver is equal to $1$. An upper
bound on the capacity was also derived in~\cite{david} and it was shown that
the gap between the heuristic lower bound and the capacity upper bound is less
than one bit for both weak and mixed interference cases.

Many variations of the interference channel have also been studied, including
the IC with feedback~\cite{jinhua} and the IC with conferencing
encoders/decoders~\cite{caoyi}. Here, we study another variation of the IC:
the state-dependent two-user IC with state information non-causally known at
both transmitters. This situation may arise in a multi-cell downlink
communication scenario as shown in Fig. \ref{fig_example}, where two
interested cells are interfering with each other and the mobiles suffer from
some common interference (which can be from other neighboring cells and viewed
as state) non-causally known at both of the two base-stations via certain
collaboration with the neighboring base-station. Notably, communication over
state-dependent channels has drawn lots of attentions due to its wide
applications such as information embedding~\cite{info_embedding} and computer
memories with defects~\cite{gamal}. The corresponding framework was also
initiated by Shannon in~\cite{shannon_2}, which established the capacity of a
state-dependent discrete memoryless (DM) point-to-point channel with causal
state information at the transmitter. In~\cite{gelfand}, Gel'fand and Pinsker
obtained the capacity for such a point-to-point case with the state
information non-causally known at the transmitter. Subsequently,
Costa~\cite{costa_dpc} extended Gel'fand-Pinsker coding to the state-dependent
additive white Gaussian noise (AWGN) channel, where the state is an additive
zero-mean Gaussian interference. This result is known as the dirty-paper
coding (DPC) technique, which achieves the capacity as if there is no such an
interference. For the multi-user case, extensions of the afore-mentioned
schemes appeared in~\cite{gelfand_2, kim, dpc_bc, mac_dpc_laneman} for the
multiple access channel (MAC), the broadcast channel, and the degraded
Gaussian relay channel, respectively.

In this paper, we study the state-dependent IC with state information
non-causally known at the transmitters and develop two coding schemes, both of
which jointly apply rate splitting and Gel'fand-Pinsker coding. In the first
coding scheme, we deploy simultaneous encoding for the sub-messages, and in
the second one, we deploy superposition encoding for the sub-messages. The
associated achievable rate regions are derived based on the respective coding
schemes. Then we specialize the achievable rate region corresponding to the
simultaneous encoding scheme in the Gaussian case, where the common additive
state is a zero-mean Gaussian random variable. Specifically, we introduce the
notion of \emph{active interference cancellation}, which generalizes
dirty-paper coding by utilizing some transmitting power to partially cancel
the common interference at both receivers. Furthermore, we propose heuristic
schemes for the strong Gaussian IC, the mixed Gaussian IC, and the weak
Gaussian IC with state information, respectively. For the strong Gaussian IC
with state information, the transmitters only send common messages and the DPC
parameters are optimized for one of the two resulting MACs. For the mixed
Gaussian IC with state information, one transmitter sends common message and
the other one sends private message, with DPC parameters optimized only for
one receiver. For the weak interference case, we apply rate
splitting, set the private message power at both transmitters to have the
interfered private SNR at each receiver equal to $1$~\cite{david}, utilize
sequential decoding, and optimize the DPC parameters for one of the MACs. The
time-sharing technique is applied in all the three cases to obtain enlarged
achievable rate regions. Numerical comparisons among the achievable rate
regions and the capacity outer bound are also provided. For the strong and
mixed interference cases, we show that the active interference cancellation
mechanism improves the performance significantly; for the weak interference
case, it is flexible power allocation instead of active interference
cancellation that enlarges the achievable rate region significantly. 

The rest of the paper is organized as follows. The channel model and the
definition of achievable rate region are presented in Section \ref{sec_2}. In
Section \ref{sec_3}, we provide two achievable rate regions for the discrete
memoryless IC with state information non-causally known at both transmitters,
based on the two different coding schemes, respectively. In Section
\ref{sec_4}, we discuss the Gaussian case and present the main idea of active
interference cancellation. The strong interference, mixed interference, and
weak interference cases are studied in Section \ref{sec_5}, \ref{sec_6}, and
\ref{sec_7}, respectively.
In Section~\ref{sec_9}, numerical results comparing different inner
bounds against the outer bound are given. Finally, we conclude the paper in
Section \ref{sec_conclusion}.
\begin{figure}[!t]
\centering
\rotatebox{90}{\includegraphics[width=2.5in]{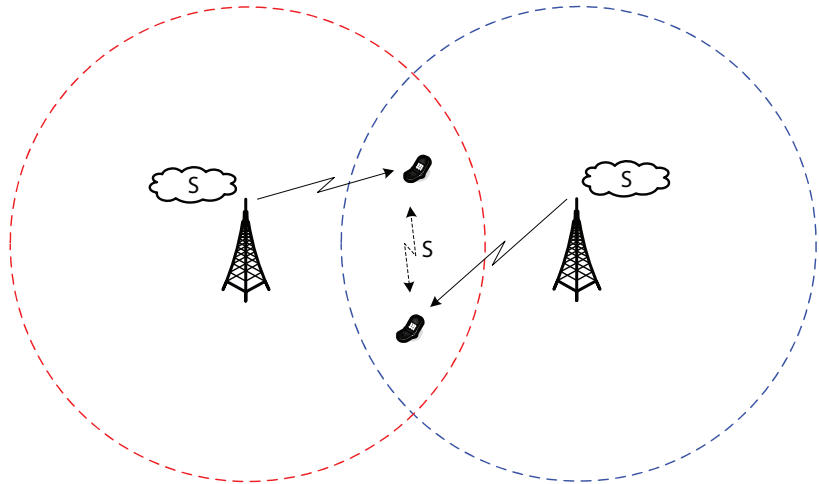}} \caption{A
multi-cell downlink communication example, which can be modeled as an
interference channel with state information non-causally known at both
transmitters.} \label{fig_example}
\end{figure}
\section{Channel Model}\label{sec_2}
Consider the interference channel as shown in
Fig.~\ref{fig_relay_channel_model}, where two transmitters communicate with
the corresponding receivers through a common medium that is dependent on state
$S$. The transmitters do not cooperate with each other; however, they both
know the state information $S$ non-causally, which is known to neither of the
receivers. Each receiver needs to decode the information from the
corresponding transmitter.
\subsection{Discrete Memoryless Case}\label{sec_2_DM}
We use the following notations for the DM channel. The random
variable is defined as $X$ with value $x$ in a finite set $\mathcal{X}$. Let
$p_{X}(x)$ be the probability mass function of $X$ on $\mathcal{X}$. The
corresponding sequences are denoted by $x^n$ with length $n$.
\begin{figure}[!t]
\centering
\includegraphics[width=3.5in]{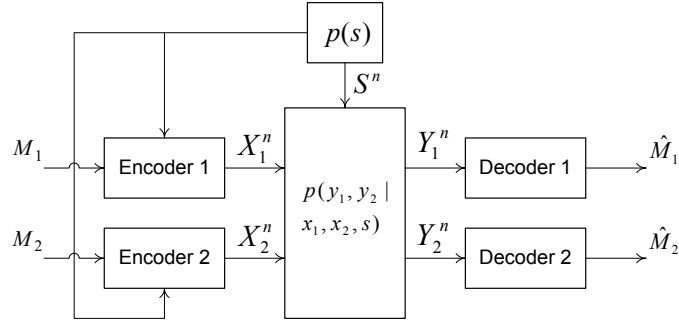}
\caption{The interference channel with state information non-causally known at
both transmitters.} \label{fig_relay_channel_model}
\end{figure}

The state-dependent two-user interference channel is defined by
$(\mathcal{X}_1,\mathcal{X}_2,\mathcal{Y}_1,\mathcal{Y}_2,\mathcal{S},p(y_1,y_2|x_1,x_2,s))$,
where $\mathcal{X}_1,\mathcal{X}_2$ are two input alphabet sets,
$\mathcal{Y}_1,\mathcal{Y}_2$ are the corresponding output alphabet sets,
$\mathcal{S}$ is the state alphabet set, and $p(y_1,y_2|x_1,x_2,s)$ is the
conditional probability of $(y_1,y_2) \in \mathcal{Y}_1 \mathcal{\times}
\mathcal{Y}_2$ given $(x_1,x_2,s)\in \mathcal{X}_1
\mathcal{\times}\mathcal{X}_2 \mathcal{\times}\mathcal{S}$. The channel is
assumed to be memoryless, i.e.,
\[
p(y_1^n,y_2^n|x_1^n,x_2^n,s^n) = \prod_{i=1}^{n} p(y_{1i},y_{2i}|x_{1i},x_{2i},s_{i}),
\]
where $i$ is the element index for each sequence.

A $(2^{nR_1},2^{nR_2},n)$ code for the above channel consists of two
independent message sets $\{1,2,\cdots,2^{nR_{1}}\}$ and
$\{1,2,\cdots,2^{nR_{2}}\}$, two encoders that respectively assign two
codewords to messages $m_1 \in \{1,2,\cdots,2^{nR_{1}}\}$ and $m_2 \in
\{1,2,\cdots,2^{nR_{2}}\}$ based on the non-causally known state information
$s^n$, and two decoders that respectively determine the estimated messages
$\hat{m}_1$ and $\hat{m}_2$ or declare an error from the received sequences.

The average probability of error is defined as:
\begin{equation}
P_e^{(n)} = \frac{1}{2^{n(R_1+R_2)}}\sum_{m_1,m_2} \textrm{Pr}\{\hat{m}_1 \neq m_1 \textrm{ or } \hat{m}_2 \neq m_2|(m_1,m_2) \textrm{ is sent}\},
\end{equation}
where $(m_1,m_2)$ is assumed to be uniformly distributed over
$\{1,2,\cdots,2^{nR_{1}}\}\times\{1,2,\cdots,2^{nR_{2}}\}$.

\begin{Definition}
A rate pair $(R_1,R_2)$ of non-negative real values is achievable if there
exists a sequence of $(2^{nR_1},2^{nR_2},n)$ codes with $P_e^{(n)}\to 0$ as $n
\to \infty$. The set of all achievable rate pairs is defined as the capacity
region.
\end{Definition}
\subsection{Gaussian Case}\label{sec_2_Gaussian}
\begin{figure}[!t]
\centering
\rotatebox{90}{\includegraphics[width=2.5in]{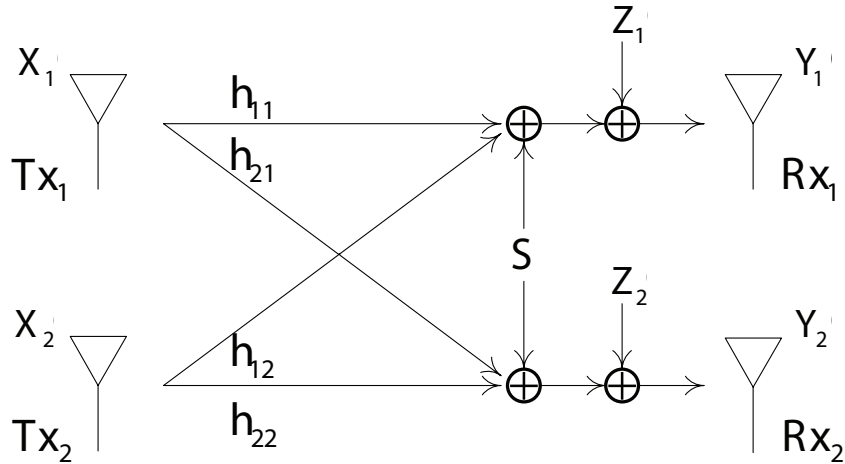}}
\caption{The Gaussian interference channel with state information non-causally
known at both transmitters.} \label{fig_channel_model}
\end{figure}
The Gaussian counterpart of the previously defined DM channel is shown in Fig.
\ref{fig_channel_model}, where two transmitters communicate with the
corresponding receivers through a common channel that is dependent on state
$S$, which can be treated as a common interference. The corresponding signal
structure can be described by the following channel input and output
relationship:
\begin{eqnarray*}
Y_1' &=& h_{11}X_1'+h_{12}X_2'+S+Z_1',\\
Y_2' &=& h_{22}X_2'+h_{21}X_1'+S+Z_2',
\end{eqnarray*}
where $h_{ij}$ is the real link amplitude gain from the $j$th transmitter to
the $i$th receiver, $X_i'$ and $Y_i'$ are the channel input and output,
respectively, and $Z_i'$ is the zero-mean AWGN noise with variance $N_i$, for
$i=1,2$ and $j=1,2$. Both receivers also suffer from a zero-mean additive
white Gaussian interference $S$ with variance $K$, which is non-causally known
at both transmitters\footnotemark. Note that for this AWGN
model, all the random variables are defined over the field of real numbers
$\mathbb{R}$.
\footnotetext[1]{In general, the additive
states over the two links may not be the same, i.e., we may have $Y_1' =
h_{11}X_1'+h_{12}X_2'+S_1+Z_1'$ and $Y_2' = h_{22}X_2'+h_{21}X_1'+S_2+Z_2'$.
However, in this paper we only focus on the simplest scenario: $S_1=S_2=S$.
The more general cases with state $(S_1,S_2)$ and different knowledge levels at the two transmitters will be studied in our future work.}

Without loss of generality, we transform the signal model into the following
standard form~\cite{han_kobayashi}:
\begin{eqnarray}\label{eq_gaussian_model1}
Y_1 &=& X_1 + \sqrt{g_{12}}X_2 + \frac{1}{\sqrt{N_1}}S+Z_1,\\
Y_2 &=& X_2 + \sqrt{g_{21}}X_1 + \frac{1}{\sqrt{N_2}}S+Z_2,\label{eq_gaussian_model2}
\end{eqnarray}
where
\begin{eqnarray*}
Y_1 &=& \frac{Y_1'}{\sqrt{N_1}},\ X_1 = \frac{h_{11}X_1'}{\sqrt{N_1}},\ g_{12} = \frac{h_{12}^2N_2}{h_{22}^2N_1},\ Z_1 = \frac{Z_1'}{\sqrt{N_1}},\\
Y_2 &=& \frac{Y_2'}{\sqrt{N_2}},\ X_2 = \frac{h_{22}X_2'}{\sqrt{N_2}},\ g_{21} = \frac{h_{21}^2N_1}{h_{11}^2N_2},\ Z_2 = \frac{Z_2'}{\sqrt{N_2}}.
\end{eqnarray*}
Note that $Z_1$ and $Z_2$ have unit variance in \eqref{eq_gaussian_model1} and
\eqref{eq_gaussian_model2}. We also impose the following power constraints on
the channel inputs $X_1$ and $X_2$:
\[
\frac{1}{n}\sum_{i=1}^{n}(X_{1i})^2\leq P_1,\textrm{ and }\frac{1}{n}\sum_{i=1}^{n}(X_{2i})^2\leq P_2.
\]
\section{Achievable Rate Regions for DM Interference Channel with State Information} \label{sec_3}
In this section, we propose two new coding schemes for the DM interference
channel with state information non-causally known at both transmitters and
quantify the associated achievable rate regions. For both coding schemes, we
jointly deploy rate splitting and Gel'fand-Pinsker coding. Specifically, in
the first coding scheme, we use simultaneous encoding on the sub-messages,
while in the second one we apply superposition encoding.
\subsection{Coding Scheme I: Simultaneous Encoding}\label{sec_3_simul}
Now we introduce the following rate region achieved by the first coding
scheme, which combines rate splitting and Gel'fand-Pinsker coding. Let us
consider the auxiliary random variables $Q$, $U_1$, $V_1$, $U_2$, and $V_2$,
defined on arbitrary finite sets $\mathcal{Q}$, $\mathcal{U}_1$,
$\mathcal{V}_1$, $\mathcal{U}_2$, and $\mathcal{V}_2$, respectively. The joint
probability distribution of the above auxiliary random variables and the state
variable $S$ is chosen to satisfy the form
$p(s)p(q)p(u_1|q,s)p(v_1|q,s)p(u_2|q,s)p(v_2|q,s)$. Moreover, for a given $Q$,
we let the channel input $X_j$ be an arbitrary deterministic function of
$U_j$, $V_j$, and $S$. The achievable rate region of the simultaneous encoding
scheme is given in the following theorem.
\begin{Theorem}\label{theorem_1}
For a fixed probability distribution
$p(q)p(u_1|q,s)p(v_1|q,s)p(u_2|q,s)p(v_2|q,s)$, let $\mathcal{R}_1$ be the set
of all non-negative rate tuple $(R_{10},R_{11},R_{20},R_{22})$ satisfying
\setlength{\arraycolsep}{2pt}{\small
\begin{eqnarray}
\label{eq_rate_constraint_11} R_{11} &\leq& I(U_1;U_2|Q)+I(U_1,U_2;V_1|Q)+I(V_1;Y_1|U_1,U_2,Q)-I(V_1;S|Q),\\
\label{eq_rate_constraint_12} R_{10} &\leq& I(U_1;U_2|Q)+I(U_1,U_2;V_1|Q)+I(U_1;Y_1|V_1,U_2,Q)-I(U_1;S|Q),\\
\label{eq_rate_constraint_13} R_{10}+R_{11} &\leq& I(U_1;U_2|Q)+I(U_1,U_2;V_1|Q)+I(U_1,V_1;Y_1|U_2,Q)-I(U_1;S|Q)-I(V_1;S|Q),\\
\label{eq_rate_constraint_14} R_{11}+R_{20} &\leq& I(U_1;U_2|Q)+I(U_1,U_2;V_1|Q)+I(V_1,U_2;Y_1|U_1,Q)-I(V_1;S|Q)-I(U_2;S|Q),\\
\label{eq_rate_constraint_15} R_{10}+R_{20} &\leq& I(U_1;U_2|Q)+I(U_1,U_2;V_1|Q)+I(U_1,U_2;Y_1|V_1,Q)-I(U_1;S|Q)-I(U_2;S|Q),\\
\label{eq_rate_constraint_16} R_{10}+R_{11}+R_{20} &\leq& I(U_1;U_2|Q)+I(U_1,U_2;V_1|Q)+I(U_1,V_1,U_2;Y_1|Q)-I(U_1;S|Q)-I(V_1;S|Q)-I(U_2;S|Q),\\
\label{eq_rate_constraint_21} R_{22} &\leq& I(U_2;U_1|Q)+I(U_2,U_1;V_2|Q)+I(V_2;Y_2|U_2,U_1,Q)-I(V_2;S|Q),\\
\label{eq_rate_constraint_22} R_{20} &\leq& I(U_2;U_1|Q)+I(U_2,U_1;V_2|Q)+I(U_2;Y_2|V_2,U_1,Q)-I(U_2;S|Q),\\
\label{eq_rate_constraint_23} R_{20}+R_{22} &\leq& I(U_2;U_1|Q)+I(U_2,U_1;V_2|Q)+I(U_2,V_2;Y_2|U_1,Q)-I(U_2;S|Q)-I(V_2;S|Q),\\
\label{eq_rate_constraint_24} R_{22}+R_{10} &\leq& I(U_2;U_1|Q)+I(U_2,U_1;V_2|Q)+I(V_2,U_1;Y_2|U_2,Q)-I(V_2;S|Q)-I(U_1;S|Q),\\
\label{eq_rate_constraint_25} R_{20}+R_{10} &\leq& I(U_2;U_1|Q)+I(U_2,U_1;V_2|Q)+I(U_2,U_1;Y_2|V_2,Q)-I(U_2;S|Q)-I(U_1;S|Q),\\
\label{eq_rate_constraint_26} R_{20}+R_{22}+R_{10} &\leq& I(U_2;U_1|Q)+I(U_2,U_1;V_2|Q)+I(U_2,V_2,U_1;Y_2|Q)-I(U_2;S|Q)-I(V_2;S|Q)-I(U_1;S|Q).
\end{eqnarray}}
Then for any $(R_{10},R_{11},R_{20},R_{22}) \in \mathcal{R}_1$, the rate pair
$(R_{10}+R_{11},R_{20}+R_{22})$ is achievable for the DM interference channel
with state information defined in Section~\ref{sec_2}.
\end{Theorem}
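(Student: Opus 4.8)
The plan is to establish the region by a random-coding argument that couples Gel'fand--Pinsker binning at each transmitter with Han--Kobayashi rate splitting, followed by simultaneous joint-typicality decoding at each receiver. First I would fix the factorization $p(q)p(u_1|q,s)p(v_1|q,s)p(u_2|q,s)p(v_2|q,s)$ together with the deterministic maps $x_j=f_j(u_j,v_j,s)$, and generate all codebooks on a fixed time-sharing sequence $q^n$. For transmitter~1 I would draw $2^{n(R_{10}+\tilde R_{10})}$ sequences $u_1^n$ i.i.d.\ from $\prod_i p(u_{1i}|q_i)$, partitioned into $2^{nR_{10}}$ bins indexed by the common sub-message $m_{10}$, and likewise $2^{n(R_{11}+\tilde R_{11})}$ sequences $v_1^n$ in $2^{nR_{11}}$ bins indexed by $m_{11}$; transmitter~2 is symmetric, with $U_2,V_2$ and bin rates $\tilde R_{20},\tilde R_{22}$. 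The extra bin indices $\tilde R$ are the Gel'fand--Pinsker degrees of freedom that will later be eliminated.

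For the encoding step, given $(m_{10},m_{11})$ and the non-causal state $s^n$, transmitter~1 searches its two designated bins for a pair $(u_1^n,v_1^n)$ jointly typical with $s^n$; since $U_1$ and $V_1$ are conditionally independent given $(Q,S)$, the covering lemma guarantees success with high probability once $\tilde R_{10}>I(U_1;S|Q)$ and $\tilde R_{11}>I(V_1;S|Q)$, and these two penalties are exactly what produce the $-I(U_1;S|Q)$ and $-I(V_1;S|Q)$ terms (and, at transmitter~2, the $-I(U_2;S|Q)$ term). The crucial structural point is that both transmitters bin against the \emph{same} $S$, so the selected codewords are all jointly typical with $s^n$ and hence mutually correlated; their empirical joint statistics approach $p(u_1,v_1,u_2,v_2|q)$, which is precisely what makes the correlation quantities $I(U_1;U_2|Q)$ and $I(U_1,U_2;V_1|Q)$ available to the decoders. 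Each receiver then performs simultaneous decoding: receiver~1 seeks a unique triple $(m_{10},m_{11},m_{20})$ for which some choice of bin indices renders $(u_1^n,v_1^n,u_2^n,y_1^n)$ jointly typical under $p(u_1,v_1,u_2,y_1|q)$, treating $V_2$ (hence $m_{22}$) as noise.

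The heart of the proof is the error analysis at receiver~1. I would index the error events by the subset $A\subseteq\{U_1,V_1,U_2\}$ of sub-messages decoded incorrectly; the six relevant subsets (all nonempty ones except $\{U_2\}$ alone, which is controlled at receiver~2) generate the six inequalities \eqref{eq_rate_constraint_11}--\eqref{eq_rate_constraint_16}. For a fixed $A$, the packing lemma bounds the probability that some tuple with wrong messages in $A$ is jointly typical with $y_1^n$: there are $2^{n\sum_{i\in A}(R_i+\tilde R_i)}$ competing tuples, and because the competing codewords are drawn from the correlated ensemble, the typicality exponent is a conditional mutual information $I(W_A;W_{A^c},Y_1|Q)$ which, after reorganization with the codeword correlations, assembles the prefix $I(U_1;U_2|Q)+I(U_1,U_2;V_1|Q)$; subtracting the bin-rate penalties $\sum_{i\in A}I(W_i;S|Q)$ and cancelling the common $\tilde R_i$ terms gives the stated inequality, and receiver~2 is fully symmetric, yielding \eqref{eq_rate_constraint_21}--\eqref{eq_rate_constraint_26}. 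I expect the main obstacle to be exactly this step: correctly identifying the joint-typicality exponent for each $A$ when the competing codewords carry correct messages but mismatched bin indices, so that every correlation term appears with the right sign, and then carrying out the Fourier--Motzkin elimination of $\tilde R_{10},\tilde R_{11},\tilde R_{20}$ to recover the region verbatim; the codebook averaging (expurgation) and the $\epsilon\to0$, $n\to\infty$ limiting arguments are routine by comparison.
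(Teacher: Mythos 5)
Your proposal matches the paper's proof in all essentials: independently generated $U_j$- and $V_j$-codebooks with Gel'fand--Pinsker bin indices, covering at the encoders yielding the $-I(U_j;S|Q)$ and $-I(V_j;S|Q)$ penalties, simultaneous joint-typicality decoding at each receiver with the interfering private codeword treated as noise, packing bounds organized by which sub-messages (and which bin indices) are in error producing the correlation prefix $I(U_1;U_2|Q)+I(U_1,U_2;V_1|Q)$, and elimination of the auxiliary bin rates to obtain the stated region. The only cosmetic discrepancy is that you describe decoder~1 as seeking a unique triple $(m_{10},m_{11},m_{20})$ while the paper (and your own list of error events, which correctly omits the $\{U_2\}$-alone event) uses non-unique decoding of the interferer's common message, requiring uniqueness only in $(m_{10},m_{11})$.
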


\begin{Remark}
The detailed proof is given in Appendix~\ref{appendix_1} with the outline
sketched as follows. For the coding scheme in Theorem \ref{theorem_1}, the
message at transmitter $j$ ($j=1$ or $2$) is splitted into two parts: the
public message $m_{j0}$ and the private message $m_{jj}$. Furthermore,
Gel'fand-Pinsker coding is utilized to help both transmitters send the
messages with the non-causal knowledge of the state information. Specifically,
transmitter $j$ finds the corresponding public codeword $u_j$ and the private
codeword $v_j$ such that they are jointly typical with the state $s^n$. Then
the transmitting codeword is constructed as a deterministic
function of the public codeword $u_j$, the private codeword $v_j$, and the
state $s^n$. At the receiver side, decoder $j$ tries to decode the
corresponding messages from transmitter $j$ and the public message of the
interfering transmitter. The rest follows by the usual error event grouping
and error probability analysis.
\end{Remark}
\begin{Remark}
The auxiliary random variables in Theorem~\ref{theorem_1} can be interpreted
as follows: $Q$ is the time-sharing random variable; $U_j$ and $V_j$ ($j=1$ or
$2$) are the auxiliary random variables to carry the public and private
messages at transmitter $j$, respectively. It can be easily seen from the
joint probability distribution that $U_j$ and $V_j$ are conditionally
independent given $Q$ and $S$, which means that the public and private
messages are encoded ``simultaneously".
\end{Remark}

An explicit description of the achievable rate region can be obtained by
applying the Fourier-Motzkin algorithm~\cite{cmg_region} on our implicit
description \eqref{eq_rate_constraint_11}-\eqref{eq_rate_constraint_26}, as
shown in the next corollary.
\begin{Corollary}
For a fixed probability distribution
$p(q)p(u_1|q,s)p(v_1|q,s)p(u_2|q,s)p(v_2|q,s)$, let $\hat{\mathcal{R}}_{1}$ be
the set of all non-negative rate pairs $(R_{1},R_{2})$ satisfying
\setlength{\arraycolsep}{1pt}{\small
\begin{eqnarray}
\label{eq_ex_rate_constraint_1} R_{1} &\leq& \min\{d_1,g_1,a_1+b_1,a_1+f_1,a_1+e_2,a_1+f_2,b_1+e_1,e_1+f_1,e_1+f_2\},\\
\label{eq_ex_rate_constraint_2} R_{2} &\leq& \min\{d_2,g_2,a_2+b_2,a_2+f_2,a_2+e_1,a_2+f_1,b_2+e_2,e_2+f_2,e_2+f_1\},\\
\label{eq_ex_rate_constraint_3} R_1+R_{2} &\leq& \min\{a_1+g_2,a_2+g_1,e_1+g_2,e_2+g_1,e_1+e_2,a_1+a_2+f_1,a_1+a_2+f_2,a_1+b_2+e_2,a_2+b_1+e_1\},\\
\label{eq_ex_rate_constraint_4} R_{1}+2R_{2} &\leq& \min\{e_1+f_1+2a_2,e_1+2a_2+f_2,e_1+a_2+g_2\},\\
\label{eq_ex_rate_constraint_5} 2R_1+R_{2} &\leq& \min\{e_2+f_2+2a_1,e_2+2a_1+f_1,e_2+a_1+g_1\},
\end{eqnarray}}
where
\begin{eqnarray*}
a_1 &=& I(U_1;U_2|Q)+I(U_1,U_2;V_1|Q)+I(V_1;Y_1|U_1,U_2,Q)-I(V_1;S|Q),\\
b_1 &=& I(U_1;U_2|Q)+I(U_1,U_2;V_1|Q)+I(U_1;Y_1|V_1,U_2,Q)-I(U_1;S|Q),\\
d_1 &=& I(U_1;U_2|Q)+I(U_1,U_2;V_1|Q)+I(U_1,V_1;Y_1|U_2,Q)-I(U_1;S|Q)-I(V_1;S|Q),\\
e_1 &=& I(U_1;U_2|Q)+I(U_1,U_2;V_1|Q)+I(V_1,U_2;Y_1|U_1,Q)-I(V_1;S|Q)-I(U_2;S|Q),\\
f_1 &=& I(U_1;U_2|Q)+I(U_1,U_2;V_1|Q)+I(U_1,U_2;Y_1|V_1,Q)-I(U_1;S|Q)-I(U_2;S|Q),\\
g_1 &=& I(U_1;U_2|Q)+I(U_1,U_2;V_1|Q)+I(U_1,V_1,U_2;Y_1|Q)-I(U_1;S|Q)-I(V_1;S|Q)-I(U_2;S|Q),\\
a_2 &=& I(U_2;U_1|Q)+I(U_2,U_1;V_2|Q)+I(V_2;Y_2|U_2,U_1,Q)-I(V_2;S|Q),\\
b_2 &=& I(U_2;U_1|Q)+I(U_2,U_1;V_2|Q)+I(U_2;Y_2|V_2,U_1,Q)-I(U_2;S|Q),\\
d_2 &=& I(U_2;U_1|Q)+I(U_2,U_1;V_2|Q)+I(U_2,V_2;Y_2|U_1,Q)-I(U_2;S|Q)-I(V_2;S|Q),\\
e_2 &=& I(U_2;U_1|Q)+I(U_2,U_1;V_2|Q)+I(V_2,U_1;Y_2|U_2,Q)-I(V_2;S|Q)-I(U_1;S|Q),\\
f_2 &=& I(U_2;U_1|Q)+I(U_2,U_1;V_2|Q)+I(U_2,U_1;Y_2|V_2,Q)-I(U_2;S|Q)-I(U_1;S|Q),\\
g_2 &=& I(U_2;U_1|Q)+I(U_2,U_1;V_2|Q)+I(U_2,V_2,U_1;Y_2|Q)-I(U_2;S|Q)-I(V_2;S|Q)-I(U_1;S|Q).
\end{eqnarray*}
Then any rate pair $(R_{1},R_{2}) \in \hat{\mathcal{R}}_1$ is achievable for
the DM interference channel with state information defined in
Section~\ref{sec_2}.
\end{Corollary}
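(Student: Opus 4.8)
The claim is that $\hat{\mathcal{R}}_1$ is exactly the projection of the four-dimensional region $\mathcal{R}_1$ of Theorem~\ref{theorem_1} onto the plane $(R_1,R_2)=(R_{10}+R_{11},\,R_{20}+R_{22})$, so the entire proof is a Fourier--Motzkin elimination. The plan is to substitute $R_{11}=R_1-R_{10}$ and $R_{22}=R_2-R_{20}$, keeping $(R_1,R_2)$ as the retained coordinates and treating $R_{10},R_{20}$ as the two variables to be projected out. Under this substitution the non-negativity requirements $R_{11}\ge 0$ and $R_{22}\ge 0$ become $R_{10}\le R_1$ and $R_{20}\le R_2$, while each of the twelve inequalities \eqref{eq_rate_constraint_11}--\eqref{eq_rate_constraint_26} turns into an affine inequality in $R_{10},R_{20},R_1,R_2$ whose right-hand side is one of $a_1,\dots,g_2$. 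For bookkeeping I would note that \eqref{eq_rate_constraint_13}, \eqref{eq_rate_constraint_16}, and \eqref{eq_rate_constraint_23} already contain no $R_{10}$ (indeed \eqref{eq_rate_constraint_13} reads $R_1\le d_1$ and \eqref{eq_rate_constraint_23} reads $R_2\le d_2$), which together with $R_{10}\ge0,\,R_{20}\ge0$ gives a clean separation into lower bounds, upper bounds, and constraints free of the eliminated variable.

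First I would eliminate $R_{10}$. Its lower bounds are $R_{10}\ge 0$, $R_{10}\ge R_1-a_1$ (from \eqref{eq_rate_constraint_11}), and $R_{10}\ge R_1+R_{20}-e_1$ (from \eqref{eq_rate_constraint_14}); its upper bounds are $R_{10}\le b_1$, $R_{10}\le R_1$, $R_{10}\le f_1-R_{20}$, $R_{10}\le e_2-R_2+R_{20}$, $R_{10}\le f_2-R_{20}$, and $R_{10}\le g_2-R_2$ (from \eqref{eq_rate_constraint_12}, the relaxed $R_{11}\ge0$, and \eqref{eq_rate_constraint_15}, \eqref{eq_rate_constraint_24}, \eqref{eq_rate_constraint_25}, \eqref{eq_rate_constraint_26}). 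Forming every lower/upper pair yields eighteen inequalities; some are free of $R_{20}$ (these feed directly into the $R_1$-only and $R_1+R_2$ families, for instance $R_1-a_1\le b_1$ gives $R_1\le a_1+b_1$, and $R_1+R_{20}-e_1\le e_2-R_2+R_{20}$ gives $R_1+R_2\le e_1+e_2$), and the rest become fresh bounds on $R_{20}$. I would then merge these with the bounds on $R_{20}$ that never involved $R_{10}$, namely $R_{20}\ge0$, $R_{20}\ge R_2-a_2$, $R_{20}\le b_2$, $R_{20}\le R_2$, and $R_{20}\le g_1-R_1$, and eliminate $R_{20}$ by the same pairing. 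The diagonal pairings---in which a lower bound carrying $R_{20}$ with coefficient $+1$ meets an upper bound also carrying $R_{20}$---are what produce the half-weight combinations: $R_{10}\ge R_1+R_{20}-e_1$ against $R_{10}\le f_2-R_{20}$ yields $R_{20}\le (e_1+f_2-R_1)/2$, which after pairing with $R_{20}\ge R_2-a_2$ gives the slanted constraint $R_1+2R_2\le 2a_2+e_1+f_2$ of \eqref{eq_ex_rate_constraint_4}; the mirror pairings through the bound $R_{20}\ge R_1+R_2-a_1-e_2$ generate the $2R_1+R_2$ constraints of \eqref{eq_ex_rate_constraint_5}. A direct check confirms that every inequality listed in \eqref{eq_ex_rate_constraint_1}--\eqref{eq_ex_rate_constraint_5} arises as one of these pairings, so none of the stated constraints is spurious and the projection is no smaller than $\hat{\mathcal{R}}_1$.

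The real work, and the main obstacle, is the reverse inclusion via redundancy removal: the two elimination rounds emit far more inequalities than the five families in \eqref{eq_ex_rate_constraint_1}--\eqref{eq_ex_rate_constraint_5}, and I must show each surviving inequality is implied by the retained ones. I would dispose of the genuinely trivial outputs first---pairings such as $0\le b_1$ or $-a_1\le 0$---by noting that non-emptiness of $\mathcal{R}_1$ forces each of $a_1,\dots,g_2\ge0$ (a constraint like $R_{11}\le a_1$ with $R_{11}\ge0$ is otherwise infeasible), so these are vacuous. For each remaining dominated inequality the argument is that it is a non-negative combination of constraints already in $\hat{\mathcal{R}}_1$; the representative case is the FME output $R_1+R_2\le a_1+e_2+f_1$, which is absent from the list yet equals the average $\tfrac12\big[(2R_1+R_2\le 2a_1+e_2+f_1)+(R_2\le e_2+f_1)\big]$ of a constraint from \eqref{eq_ex_rate_constraint_5} and one from \eqref{eq_ex_rate_constraint_2}. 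Carrying out this matching for all leftover inequalities is where essentially all the bookkeeping lives, and it is most safely executed with the Fourier--Motzkin routine of~\cite{cmg_region}. Once every emitted inequality is either vacuous or exhibited as such a combination, the projection of $\mathcal{R}_1$ coincides exactly with $\hat{\mathcal{R}}_1$; and since every point of the projection is achievable by Theorem~\ref{theorem_1}, so is every point of $\hat{\mathcal{R}}_1$.
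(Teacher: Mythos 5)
Your proposal is correct and follows exactly the route the paper indicates: the paper offers no written proof beyond the remark that the corollary follows by applying the Fourier--Motzkin algorithm of~\cite{cmg_region} to the implicit description \eqref{eq_rate_constraint_11}--\eqref{eq_rate_constraint_26}, and your elimination of $R_{10}$ and then $R_{20}$ (with the half-weight pairings producing the $R_1+2R_2$ and $2R_1+R_2$ constraints, and the redundancy/non-negativity cleanup) is a faithful and accurate execution of that procedure.
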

\subsection{Coding Scheme II: Superposition Encoding}\label{sec_3_superposition}
We now present the second coding scheme, which applies superposition encoding
for the sub-messages. Similar to the auxiliary random variables in
Theorem~\ref{theorem_1}, in the following theorem, $Q$ is also the
time-sharing random variable; $U_j$ and $V_j$ ($j=1$ or $2$) are the auxiliary
random variables to carry the public and private messages at transmitter $j$,
respectively. The difference here is the joint probability distribution
$p(s)p(q)p(u_1|s,q)p(v_1|u_1,s,q)p(u_2|s,q)p(v_2|u_2,s,q)$, where $U_j$ and
$V_j$ are not conditionally independent given $Q$ and $S$. This also implies
the notion of ``superposition encoding". The achievable rate region of the
superposition encoding scheme is given in the following theorem.
\begin{Theorem}\label{theorem_2}
For a fixed probability distribution $p(q)p(u_1|s,q)p(v_1|u_1,s,q)p(u_2|s,q)p(v_2|u_2,s,q)$, let $\mathcal{R}_2$ be the set of all non-negative rate tuple $(R_{10},R_{11},R_{20},R_{22})$ satisfying
\begin{eqnarray}
\label{eq_rate_constraint_2_11} R_{11} &\leq& I(U_1,V_1;U_2|Q)+I(V_1;Y_1|U_1,U_2,Q)-I(V_1;S|U_1,Q),\\
\label{eq_rate_constraint_2_12} R_{10}+R_{11} &\leq& I(U_1,V_1;U_2|Q)+I(U_1,V_1;Y_1|U_2,Q)-I(U_1,V_1;S|Q),\\
\label{eq_rate_constraint_2_13} R_{11}+R_{20} &\leq& I(U_1,V_1;U_2|Q)+I(V_1,U_2;Y_1|U_1,Q)-I(V_1;S|U_1,Q)-I(U_2;S|Q),\\
\label{eq_rate_constraint_2_14} R_{10}+R_{11}+R_{20} &\leq& I(U_1,V_1;U_2|Q)+I(U_1,V_1,U_2;Y_1|Q)-I(U_1,V_1;S|Q)-I(U_2;S|Q),\\
\label{eq_rate_constraint_2_21} R_{22} &\leq& I(U_2,V_2;U_1|Q)+I(V_2;Y_2|U_2,U_1,Q)-I(V_2;S|U_2,Q),\\
\label{eq_rate_constraint_2_22} R_{20}+R_{22} &\leq& I(U_2,V_2;U_1|Q)+I(U_2,V_2;Y_2|U_1,Q)-I(U_2,V_2;S|Q),\\
\label{eq_rate_constraint_2_23} R_{22}+R_{10} &\leq& I(U_2,V_2;U_1|Q)+I(V_2,U_1;Y_2|U_2,Q)-I(V_2;S|U_2,Q)-I(U_1;S|Q),\\
\label{eq_rate_constraint_2_24} R_{20}+R_{22}+R_{10} &\leq& I(U_2,V_2;U_1|Q)+I(U_2,V_2,U_1;Y_2|Q)-I(U_2,V_2;S|Q)-I(U_1;S|Q).
\end{eqnarray}
Then for any $(R_{10},R_{11},R_{20},R_{22}) \in  \mathcal{R}_2$, the rate pair
$(R_{10}+R_{11},R_{20}+R_{22})$ is achievable for the DM interference channel
with state information defined in Section \ref{sec_2}.
\end{Theorem}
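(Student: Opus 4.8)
The plan is to prove Theorem~\ref{theorem_2} by a random coding argument that parallels the proof of Theorem~\ref{theorem_1} sketched in the Remark, but with the codebook generation modified to reflect the superposition structure encoded in the joint distribution $p(q)p(u_1|s,q)p(v_1|u_1,s,q)p(u_2|s,q)p(v_2|u_2,s,q)$. First I would fix a time-sharing sequence $q^n$ drawn i.i.d.\ from $p(q)$ and, at transmitter~$j$, split message $m_j$ into a public part $m_{j0}$ and a private part $m_{jj}$. The key departure from Coding Scheme~I is the layered Gel'fand--Pinsker codebook: I would first generate a bin of public codewords $u_j^n$ according to $\prod_i p(u_{ji}|q_i)$, and then, superimposed on each chosen $u_j^n$, generate a bin of private codewords $v_j^n$ according to $\prod_i p(v_{ji}|u_{ji},q_i)$. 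Each bin is sized to carry the corresponding sub-message index together with enough extra ``cloud'' of Gel'fand--Pinsker indices to absorb the state. The encoder looks for a pair $(u_j^n,v_j^n)$ in the appropriate product bin that is jointly typical with $(s^n,q^n)$, and then forms $x_j^n$ as the prescribed deterministic function of $(u_j^n,v_j^n,s^n)$ symbolwise.

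Second, I would carry out the covering (encoding success) analysis. Because the public and private layers are generated sequentially, the typicality-encoding step is a two-stage multicoding argument: the outer layer requires roughly $I(U_j;S|Q)$ redundant indices so that some $u_j^n$ in its bin is jointly typical with $s^n$, and the inner layer then requires roughly $I(V_j;S|U_j,Q)$ further redundant indices so that, conditioned on the already-chosen $u_j^n$, some $v_j^n$ is jointly typical with $(u_j^n,s^n)$. This is exactly why the state-subtraction terms in \eqref{eq_rate_constraint_2_11}--\eqref{eq_rate_constraint_2_24} appear as $I(U_1,V_1;S|Q)=I(U_1;S|Q)+I(V_1;S|U_1,Q)$ and as $I(V_1;S|U_1,Q)$, rather than as the sum of two unconditional terms as in Theorem~\ref{theorem_1}; I would make the bin-size bookkeeping explicit and invoke the covering lemma at each layer, with the probability of an encoding failure vanishing as $n\to\infty$ provided the stated redundancy rates are met.

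Third comes the decoding and packing analysis, which is where the mutual-information sum terms are generated. Decoder~$1$ looks for a unique jointly typical tuple $(u_1^n,v_1^n,u_2^n,y_1^n,q^n)$ and I would enumerate the error events by which subsets of the three codewords $(u_1^n,v_1^n,u_2^n)$ are decoded incorrectly, applying the joint-typicality packing lemma to each event. Here the superposition structure $v_1^n$ on top of $u_1^n$ means the error events and their counts differ from the simultaneous-encoding case: there is no separate ``wrong $u_1$, correct $v_1$'' cloud event of the Theorem~\ref{theorem_1} type because $v_1^n$ is nested inside the chosen $u_1^n$, which is precisely why $\mathcal{R}_2$ has four inequalities per receiver rather than six. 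Each resulting bound takes the form (number of competing codewords in the offending bins) versus (the joint typicality probability), yielding a constraint of the shape $R_{10}+R_{11}+R_{20}\le I(U_1,V_1;U_2|Q)+I(U_1,V_1,U_2;Y_1|Q)-I(U_1,V_1;S|Q)-I(U_2;S|Q)$ and its sub-cases; the term $I(U_1,V_1;U_2|Q)$ arises because the two transmitters' codebooks are generated independently given $(S,Q)$ but then conditioned on $s^n$ at the encoder, introducing correlation between $(U_1,V_1)$ and $U_2$ that must be credited back exactly as in the Han--Kobayashi/Gel'fand--Pinsker fusion.

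The main obstacle, and the step I would treat most carefully, is the simultaneous control of the two sources of error that interact through the shared state: the multicoding redundancy must be large enough for successful encoding at both transmitters, while the total bin sizes must be small enough for successful decoding, and the same $s^n$ couples the two transmitters' encoders. I expect to handle this by a careful accounting that keeps the encoding-redundancy rates and the message rates as separate bookkeeping variables, showing that the net rate available for messages after subtracting the state-covering overhead is exactly the right-hand sides in \eqref{eq_rate_constraint_2_11}--\eqref{eq_rate_constraint_2_24}; averaging over the random codebook ensemble and then extracting a single good code completes the argument. The superposition nesting is what makes both the covering chain rule $I(U_1,V_1;S|Q)=I(U_1;S|Q)+I(V_1;S|U_1,Q)$ and the reduced error-event list fall out cleanly, so the bulk of the work is verifying that each error event's probability bound matches its claimed inequality rather than any genuinely new technique beyond what Theorem~\ref{theorem_1} already required.
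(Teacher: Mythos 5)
Your proposal follows essentially the same route as the paper's proof in Appendix~\ref{appendix_2}: a superposition (nested) Gel'fand--Pinsker codebook with $v_j^n$ generated conditionally on each $u_j^n$, a two-stage covering step yielding the redundancy rates $R_{j0}'\ge I(U_j;S|Q)$ and $R_{jj}'\ge I(V_j;S|U_j,Q)$, joint-typicality decoding with enumeration of the error events, elimination of the dominated constraints (which is exactly how the region collapses to four inequalities per receiver --- the ``wrong $m_{10}$, correct $m_{11}$'' events are still enumerated but yield redundant bounds because the nested $v_1^n$ codeword changes with $m_{10}$), and finally absorption of the covering overhead into the packing bounds. The only cosmetic difference is that you transmit a deterministic function of $(u_j^n,v_j^n,s^n)$ whereas the paper's proof sends $v_j^n$ directly; this does not affect the argument.
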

The detailed proof for Theorem \ref{theorem_2} is given in Appendix \ref{appendix_2}.
\begin{Remark}
Compared with the first coding scheme in Theorem \ref{theorem_1}, the rate
splitting structure is also applied in the achievable scheme of Theorem
\ref{theorem_2}. The main difference here is that instead of simultaneous
encoding, now the private message $m_{jj}$ is superimposed on the public
message $m_{j0}$ for the $j$th transmitter, $j=1$, $2$. In addition,
Gel'fand-Pinsker coding is utilized to help the transmitters send both public
and private messages.
\end{Remark}
\begin{Remark}
It can be easily seen that the achievable rate region $\mathcal{R}_1$ in
Theorem \ref{theorem_1} is a subset of $\mathcal{R}_2$, i.e., $\mathcal{R}_1
\subseteq \mathcal{R}_2$. However, whether these two regions can be equivalent
is still under investigation, which is motivated by the equivalence between
the simultaneous encoding region and the superposition encoding region for the
traditional IC~\cite{cmg_region}.
\end{Remark}
\section{The Gaussian Interference Channel with State Information}\label{sec_4}
In this section, we present the corresponding achievable rate region for the
Gaussian IC with state information defined in Section \ref{sec_2}. In addition
to applying dirty paper coding and rate splitting, here we also introduce the
idea of active interference cancellation, which allocates some source power to
cancel the state effect at the receivers.
\subsection{Active Interference Cancellation}\label{sec_4_active_cancel}
In the general Gaussian interference channel, the simultaneous encoding over
the sub-messages can be viewed as sending $X_j = A_j+B_j$ at the $j$th
transmitter, $j=1$, $2$, where $A_j$ and $B_j$ are independent and correspond
to the public and private messages, respectively. Correspondingly, for the
Gaussian IC with state information defined in Section \ref{sec_2}, we focus on
the coding scheme based on simultaneous encoding that was discussed in
Section~\ref{sec_3_simul}. Specifically, we apply dirty paper coding to both
public and private parts, i.e., we define the auxiliary variables as follows:
\begin{eqnarray}
U_1&=&A_1+\alpha_{10}S,\ V_1 = B_1+\alpha_{11}S,\\
U_2&=&A_2+\alpha_{20}S,\ V_2 = B_2+\alpha_{22}S.
\end{eqnarray}

In addition, we allow both transmitters to apply active interference
cancellation by allocating a certain amount of power to send counter-phase
signals against the known interference $S$, i.e.,
\begin{eqnarray}
X_1&=&A_1+B_1-\gamma_1S,
\\
X_2&=&A_2+B_2-\gamma_2S,
\end{eqnarray}
where $\gamma_1$ and $\gamma_2$ are active cancellation parameters. The idea is
to generalize dirty-paper coding by allocating some transmitting power to
cancel part of the state effect at both receivers. Assume
$A_1\sim\mathcal{N}(0,\beta_1(P_1-\gamma_1^2 K))$,
$B_1\sim\mathcal{N}(0,\bar{\beta}_1(P_1-\gamma_1^2 K))$,
$A_2\sim\mathcal{N}(0,\beta_2(P_2-\gamma_2^2 K))$, and
$B_2\sim\mathcal{N}(0,\bar{\beta}_2(P_2-\gamma_2^2 K))$, where
$\beta_1+\bar{\beta}_1=1$ and $\beta_2+\bar{\beta}_2=1$. According to the
Gaussian channel model defined in Section \ref{sec_2}, the received signals
can be determined as:
\begin{eqnarray*}
Y_1 &=& A_1+B_1+\sqrt{g_{12}}(A_2+B_2)+\mu_{1}S+Z_1,\\
Y_2 &=& A_2+B_2+\sqrt{g_{21}}(A_1+B_1)+\mu_{2}S+Z_2,
\end{eqnarray*}
where $\mu_{1}=\frac{1}{\sqrt{N_1}}-\gamma_1-\gamma_2\sqrt{g_{12}}$ and
$\mu_{2}=\frac{1}{\sqrt{N_2}}-\gamma_2-\gamma_1\sqrt{g_{21}}$.

For convenience, we denote $P_{A_1}=\beta_1(P_1-\gamma_1^2 K)$,
$P_{B_1}=\bar{\beta}_1(P_1-\gamma_1^2 K)$, $P_{A_2}=\beta_2(P_2-\gamma_2^2
K)$, and $P_{B_2}=\bar{\beta}_2(P_2-\gamma_2^2 K)$. Also define
$G_{U_1}=\alpha_{10}^2K/P_{A_1}$, $G_{V_1}=\alpha_{11}^2K/P_{B_1}$,
$G_{U_2}=\alpha_{20}^2K/P_{A_2}$, and $G_{V_2}=\alpha_{22}^2K/P_{B_2}$.
\subsection{Achievable Rate Region}\label{sec_4_rate_region}
The achievable rate region can be obtained by evaluating the rate region given
in Theorem \ref{theorem_1} with respect to the corresponding Gaussian
auxiliary variables and channel outputs.
\begin{Theorem}
Let $\mathcal{R}_{1}'$ be the set of all non-negative rate tuple
$(R_{10},R_{11},R_{20},R_{22})$ satisfying
\setlength{\arraycolsep}{1pt}{\small\begin{eqnarray*} R_{11} &\leq&
\frac{1}{2}\log\left(\frac{\left(1+P_{B_1}+g_{12}P_{B_2}\right)\left(1+G_{U_1}+G_{U_2}+G_{U_1}G_{U_2}\right)
+K(\alpha_{10}+\alpha_{20}\sqrt{g_{12}}-\mu_{1})^2\left(1+\frac{G_{U_1}G_{U_2}}
{1+G_{U_1}+G_{U_2}}\right)}{\left(1+g_{12}P_{B_2}\right)\left(1+G_{U_1}+G_{U_2}+G_{V_1}\right)+K(\alpha_{10}+\alpha_{20}\sqrt{g_{12}}+\alpha_{11}-\mu_{1})^2}\right),\\
R_{10} &\leq& \frac{1}{2}\log\left(\frac{\left(1+P_{A_1}+g_{12}P_{B_2}\right)\left(1+G_{V_1}+G_{U_2}+G_{V_1}G_{U_2}\right)+K(\alpha_{11}+\alpha_{20}\sqrt{g_{12}}-\mu_{1})^2\left(1+\frac{G_{V_1}G_{U_2}}
{1+G_{V_1}+G_{U_2}}\right)}{\left(1+g_{12}P_{B_2}\right)\left(1+G_{U_1}+G_{U_2}+G_{V_1}\right)+K(\alpha_{10}+\alpha_{20}\sqrt{g_{12}}+\alpha_{11}-\mu_{1})^2}\right),\\
R_{10}+R_{11} &\leq& \frac{1}{2}\log\left(\frac{\left(1+P_{A_1}+P_{B_1}+g_{12} P_{B_2}\right)\left(1+G_{U_2}\right)+K(\alpha_{20}\sqrt{g_{12}}-\mu_{1})^2}{\left(1+g_{12}P_{B_2}\right)\left(1+G_{U_1}+G_{U_2}+G_{V_1}\right)+K(\alpha_{10}+\alpha_{20}\sqrt{g_{12}}+\alpha_{11}-\mu_{1})^2}\right),\\
R_{11}+R_{20} &\leq& \frac{1}{2}\log\left(\frac{\left(1+P_{B_1}+g_{12}P_{A_2}+g_{12} P_{B_2}\right)\left(1+G_{U_1}\right)+K(\alpha_{10}-\mu_{1})^2}{\left(1+g_{12}P_{B_2}\right)\left(1+G_{U_1}+G_{U_2}+G_{V_1}\right)+K(\alpha_{10}+\alpha_{20}\sqrt{g_{12}}+\alpha_{11}-\mu_{1})^2}\right),\\
R_{10}+R_{20} &\leq& \frac{1}{2}\log\left(\frac{\left(1+P_{A_1}+g_{12}P_{A_2}+g_{12} P_{B_2}\right)\left(1+G_{V_1}\right)+K(\alpha_{11}-\mu_{1})^2}{\left(1+g_{12}P_{B_2}\right)\left(1+G_{U_1}+G_{U_2}+G_{V_1}\right)+K(\alpha_{10}+\alpha_{20}\sqrt{g_{12}}+\alpha_{11}-\mu_{1})^2}\right),\\
R_{10}+R_{11}+R_{20} &\leq& \frac{1}{2}\log\left(\frac{1+P_{A_1}+P_{B_1}+g_{12}P_{A_2}+g_{12}P_{B_2}+\mu_{1}^2 K}{\left(1+g_{12}P_{B_2}\right)\left(1+G_{U_1}+G_{U_2}+G_{V_1}\right)+K(\alpha_{10}+\alpha_{20}\sqrt{g_{12}}+\alpha_{11}-\mu_{1})^2}\right),\\
R_{22} &\leq& \frac{1}{2}\log\left(\frac{\left(1+P_{B_2}+g_{21}P_{B_1}\right)\left(1+G_{U_2}+G_{U_1}+G_{U_2}G_{U_1}\right)
+K(\alpha_{20}+\alpha_{10}\sqrt{g_{21}}-\mu_{2})^2\left(1+\frac{G_{U_2}G_{U_1}}
{1+G_{U_2}+G_{U_1}}\right)}{\left(1+g_{21}P_{B_1}\right)\left(1+G_{U_2}+G_{U_1}+G_{V_2}\right)+K(\alpha_{20}+\alpha_{10}\sqrt{g_{21}}+\alpha_{22}-\mu_{2})^2}\right),\\
R_{20} &\leq& \frac{1}{2}\log\left(\frac{\left(1+P_{A_2}+g_{21}P_{B_1}\right)\left(1+G_{V_2}+G_{U_1}+G_{V_2}G_{U_1}\right)+K(\alpha_{22}+\alpha_{10}\sqrt{g_{21}}-\mu_{2})^2\left(1+\frac{G_{V_2}G_{U_1}}
{1+G_{V_2}+G_{U_1}}\right)}{\left(1+g_{21}P_{B_1}\right)\left(1+G_{U_2}+G_{U_1}+G_{V_2}\right)+K(\alpha_{20}+\alpha_{10}\sqrt{g_{21}}+\alpha_{22}-\mu_{2})^2}\right),\\
R_{20}+R_{22} &\leq& \frac{1}{2}\log\left(\frac{\left(1+P_{A_2}+P_{B_2}+g_{21} P_{B_1}\right)\left(1+G_{U_1}\right)+K(\alpha_{10}\sqrt{g_{21}}-\mu_{2})^2}{\left(1+g_{21}P_{B_1}\right)\left(1+G_{U_2}+G_{U_1}+G_{V_2}\right)+K(\alpha_{20}+\alpha_{10}\sqrt{g_{21}}+\alpha_{22}-\mu_{2})^2}\right),\\
R_{22}+R_{10} &\leq& \frac{1}{2}\log\left(\frac{\left(1+P_{B_2}+g_{21}P_{A_1}+g_{21} P_{B_1}\right)\left(1+G_{U_2}\right)+K(\alpha_{20}-\mu_{2})^2}{\left(1+g_{21}P_{B_1}\right)\left(1+G_{U_2}+G_{U_1}+G_{V_2}\right)+K(\alpha_{20}+\alpha_{10}\sqrt{g_{21}}+\alpha_{22}-\mu_{2})^2}\right),\\
R_{20}+R_{10} &\leq& \frac{1}{2}\log\left(\frac{\left(1+P_{A_2}+g_{21}P_{A_1}+g_{21} P_{B_1}\right)\left(1+G_{V_2}\right)+K(\alpha_{22}-\mu_{2})^2}{\left(1+g_{21}P_{B_1}\right)\left(1+G_{U_2}+G_{U_1}+G_{V_2}\right)+K(\alpha_{20}+\alpha_{10}\sqrt{g_{21}}+\alpha_{22}-\mu_{2})^2}\right),\\
R_{20}+R_{22}+R_{10} &\leq&
\frac{1}{2}\log\left(\frac{1+P_{A_2}+P_{B_2}+g_{21}P_{A_1}+g_{21}P_{B_1}+\mu_{2}^2
K}{\left(1+g_{21}P_{B_1}\right)\left(1+G_{U_2}+G_{U_1}+G_{V_2}\right)+K(\alpha_{20}+\alpha_{10}\sqrt{g_{21}}+\alpha_{22}-\mu_{2})^2}\right).
\end{eqnarray*}}
Then for any $(R_{10},R_{11},R_{20},R_{22}) \in \mathcal{R}_1'$, the rate pair
$(R_{10}+R_{11},R_{20}+R_{22})$ is achievable for the Gaussian IC with state
information defined in Section \ref{sec_2}.\label{theorem_gaussian_1}
\end{Theorem}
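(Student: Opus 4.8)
The plan is to specialize the discrete-memoryless region of Theorem~\ref{theorem_1} to the jointly Gaussian ensemble of Section~\ref{sec_4_active_cancel} and to evaluate each of the twelve mutual-information bounds in closed form. First I would set the time-sharing variable $Q$ to a constant, so that it suffices to establish a single rate point for each admissible parameter choice. Before computing anything I would confirm that the ensemble is admissible for Theorem~\ref{theorem_1}: because $A_1,B_1,A_2,B_2,S$ are mutually independent and $U_1=A_1+\alpha_{10}S$, $V_1=B_1+\alpha_{11}S$, $U_2=A_2+\alpha_{20}S$, $V_2=B_2+\alpha_{22}S$, the variables $U_1,V_1,U_2,V_2$ are conditionally independent given $S$, so the required factorization $p(s)p(u_1|s)p(v_1|s)p(u_2|s)p(v_2|s)$ holds; moreover $X_1=U_1+V_1-(\alpha_{10}+\alpha_{11}+\gamma_1)S$ is a deterministic function of $(U_1,V_1,S)$, and $\mathrm{Var}(X_1)=P_{A_1}+P_{B_1}+\gamma_1^2K=P_1$, so the variance split was arranged precisely to meet the power constraint with equality.

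The organizing idea is to consolidate each bound, via the chain rule, into a single $\tfrac12\log$ of a variance ratio before substituting Gaussians. The two terms common to all six receiver-$1$ bounds combine as $I(U_1;U_2)+I(U_1,U_2;V_1)=I(U_1,V_1;U_2)+I(U_1;V_1)$, and the Gel'fand--Pinsker penalties merge with the residual correlation term; for instance $I(U_1;V_1)-I(U_1;S)-I(V_1;S)=-\tfrac12\log(1+G_{U_1}+G_{V_1})$, using $G_{U_1}=\alpha_{10}^2K/P_{A_1}$ and $G_{V_1}=\alpha_{11}^2K/P_{B_1}$. Likewise the forward receiver term merges with $I(\cdot;U_2)$ into a joint mutual information with $(U_2,Y_1)$; e.g. the sum bound \eqref{eq_rate_constraint_16} reduces to $I(U_1,V_1,U_2;Y_1)$ plus correlation and penalty corrections, and a useful sanity check is that its final numerator should equal $\mathrm{Var}(Y_1)=1+P_{A_1}+P_{B_1}+g_{12}P_{A_2}+g_{12}P_{B_2}+\mu_1^2K$, which matches the stated expression exactly. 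Each remaining factor is then a Gaussian conditional variance of $Y_1=A_1+B_1+\sqrt{g_{12}}(A_2+B_2)+\mu_1S+Z_1$ given a subset of $\{U_1,V_1,U_2\}$, computed from the standard formula $\mathrm{Var}(Y_1\mid W)=\mathrm{Var}(Y_1)-\mathrm{Cov}(Y_1,W)\,\mathrm{Cov}(W)^{-1}\,\mathrm{Cov}(W,Y_1)^{\!\top}$.

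The delicate point, and the step I expect to be the main obstacle, is that conditioning on an auxiliary such as $U_1=A_1+\alpha_{10}S$ reveals a linear mixture of the \emph{signal} $A_1$ and the \emph{state} $S$ at once, so the conditional law of $Y_1$ couples them; the leftover state contribution is exactly what generates the cross terms $K(\alpha_{10}+\alpha_{20}\sqrt{g_{12}}-\mu_1)^2$, $K(\alpha_{10}+\alpha_{20}\sqrt{g_{12}}+\alpha_{11}-\mu_1)^2$, and their variants appearing in the numerators and in the common denominator $(1+g_{12}P_{B_2})(1+G_{U_1}+G_{U_2}+G_{V_1})+K(\alpha_{10}+\alpha_{20}\sqrt{g_{12}}+\alpha_{11}-\mu_1)^2$. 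Keeping the bookkeeping of these $K(\cdots)^2$ cross terms correct through the covariance inversions is where errors are most likely to creep in. Once the six receiver-$1$ bounds are verified, the six receiver-$2$ bounds follow by the symmetric relabelling $1\leftrightarrow2$, $g_{12}\leftrightarrow g_{21}$, $\alpha_{10}\leftrightarrow\alpha_{20}$, $\alpha_{11}\leftrightarrow\alpha_{22}$, $\mu_1\leftrightarrow\mu_2$, $P_{B_2}\leftrightarrow P_{B_1}$, and $G_{V_1}\leftrightarrow G_{V_2}$, so no additional computation is needed.
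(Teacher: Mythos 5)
Your proposal follows exactly the route the paper takes: the paper's entire justification for Theorem~\ref{theorem_gaussian_1} is the one-line remark that the region is obtained by evaluating Theorem~\ref{theorem_1} with the Gaussian auxiliaries $U_j=A_j+\alpha_{j0}S$, $V_j=B_j+\alpha_{jj}S$ and inputs $X_j=A_j+B_j-\gamma_jS$ of Section~\ref{sec_4_active_cancel}, which is precisely your plan. Your additional checks (conditional independence of the auxiliaries given $S$, the power accounting $P_{A_1}+P_{B_1}+\gamma_1^2K=P_1$, and identities such as $I(U_1;V_1)-I(U_1;S)-I(V_1;S)=-\tfrac12\log(1+G_{U_1}+G_{V_1})$) are correct and in fact supply detail the paper omits.
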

Note that the achievable rate region $\mathcal{R}_1'$ depends on the power
splitting parameters, the active cancellation parameters, and the DPC
parameters. To be clear, we may write $\mathcal{R}_1'$ as
$\mathcal{R}_1'\left(\beta_1,\beta_2,\gamma_1,\gamma_2,\alpha_{10},\alpha_{11},\alpha_{20},\alpha_{22}\right)$.
\begin{Remark}
It can be easily seen that the above achievable rate region includes the
capacity region of the Gaussian MAC with state information, by only using the
common messages for both transmitters and optimizing the respective DPC
parameters.
\end{Remark}

The following corollary gives the achievable rate region for the Gaussian IC
with state information when the state power $K\to\infty$.
\begin{Corollary}
Let $\widetilde{\mathcal{R}}_{1}'$ be the set of all non-negative rate tuple $(R_{10},R_{11},R_{20},R_{22})$ satisfying
\begin{eqnarray*}
R_{10}+R_{11} &\leq& \frac{1}{2}\log\left(\frac{\left(1+P_{A_1}+P_{B_1}+g_{12} P_{B_2}\right)\frac{\alpha_{20}^2}{P_{A_2}}+(\alpha_{20}\sqrt{g_{12}}-\mu_{1})^2}{\left(1+g_{12}P_{B_2}\right)\left(\frac{\alpha_{10}^2}{P_{A_1}}+\frac{\alpha_{20}^2}{P_{A_2}}+\frac{\alpha_{11}^2}{P_{B_1}}\right)+(\alpha_{10}+\alpha_{20}\sqrt{g_{12}}+\alpha_{11}-\mu_{1})^2}\right),\\
R_{11}+R_{20} &\leq& \frac{1}{2}\log\left(\frac{\left(1+P_{B_1}+g_{12}P_{A_2}+g_{12} P_{B_2}\right)\frac{\alpha_{10}^2}{P_{A_1}}+(\alpha_{10}-\mu_{1})^2}{\left(1+g_{12}P_{B_2}\right)\left(\frac{\alpha_{10}^2}{P_{A_1}}+\frac{\alpha_{20}^2}{P_{A_2}}+\frac{\alpha_{11}^2}{P_{B_1}}\right)+(\alpha_{10}+\alpha_{20}\sqrt{g_{12}}+\alpha_{11}-\mu_{1})^2}\right),\\
R_{10}+R_{20} &\leq& \frac{1}{2}\log\left(\frac{\left(1+P_{A_1}+g_{12}P_{A_2}+g_{12} P_{B_2}\right)\frac{\alpha_{11}^2}{P_{B_1}}+(\alpha_{11}-\mu_{1})^2}{\left(1+g_{12}P_{B_2}\right)\left(\frac{\alpha_{10}^2}{P_{A_1}}+\frac{\alpha_{20}^2}{P_{A_2}}+\frac{\alpha_{11}^2}{P_{B_1}}\right)+(\alpha_{10}+\alpha_{20}\sqrt{g_{12}}+\alpha_{11}-\mu_{1})^2}\right),\\
R_{10}+R_{11}+R_{20} &\leq& \frac{1}{2}\log\left(\frac{\mu_{1}^2}{\left(1+g_{12}P_{B_2}\right)\left(\frac{\alpha_{10}^2}{P_{A_1}}+\frac{\alpha_{20}^2}{P_{A_2}}+\frac{\alpha_{11}^2}{P_{B_1}}\right)+(\alpha_{10}+\alpha_{20}\sqrt{g_{12}}+\alpha_{11}-\mu_{1})^2}\right),\\
R_{20}+R_{22} &\leq& \frac{1}{2}\log\left(\frac{\left(1+P_{A_2}+P_{B_2}+g_{21} P_{B_1}\right)\frac{\alpha_{10}^2}{P_{A_1}}+(\alpha_{10}\sqrt{g_{21}}-\mu_{2})^2}{\left(1+g_{21}P_{B_1}\right)\left(\frac{\alpha_{20}^2}{P_{A_2}}+\frac{\alpha_{10}^2}{P_{A_1}}+\frac{\alpha_{22}^2}{P_{B_2}}\right)+(\alpha_{20}+\alpha_{10}\sqrt{g_{21}}+\alpha_{22}-\mu_{2})^2}\right),\\
R_{22}+R_{10} &\leq& \frac{1}{2}\log\left(\frac{\left(1+P_{B_2}+g_{21}P_{A_1}+g_{21} P_{B_1}\right)\frac{\alpha_{20}^2}{P_{A_2}}+(\alpha_{20}-\mu_{2})^2}{\left(1+g_{21}P_{B_1}\right)\left(\frac{\alpha_{20}^2}{P_{A_2}}+\frac{\alpha_{10}^2}{P_{A_1}}+\frac{\alpha_{22}^2}{P_{B_2}}\right)+(\alpha_{20}+\alpha_{10}\sqrt{g_{21}}+\alpha_{22}-\mu_{2})^2}\right),\\
R_{20}+R_{10} &\leq& \frac{1}{2}\log\left(\frac{\left(1+P_{A_2}+g_{21}P_{A_1}+g_{21} P_{B_1}\right)\frac{\alpha_{22}^2}{P_{B_2}}+(\alpha_{22}-\mu_{2})^2}{\left(1+g_{21}P_{B_1}\right)\left(\frac{\alpha_{20}^2}{P_{A_2}}+\frac{\alpha_{10}^2}{P_{A_1}}+\frac{\alpha_{22}^2}{P_{B_2}}\right)+(\alpha_{20}+\alpha_{10}\sqrt{g_{21}}+\alpha_{22}-\mu_{2})^2}\right),\\
R_{20}+R_{22}+R_{10} &\leq&
\frac{1}{2}\log\left(\frac{\mu_{2}^2}{\left(1+g_{21}P_{B_1}\right)\left(\frac{\alpha_{20}^2}{P_{A_2}}+\frac{\alpha_{10}^2}{P_{A_1}}+\frac{\alpha_{22}^2}{P_{B_2}}\right)+(\alpha_{20}+\alpha_{10}\sqrt{g_{21}}+\alpha_{22}-\mu_{2})^2}\right).
\end{eqnarray*}
As the state power $K\to\infty$, for any $(R_{10},R_{11},R_{20},R_{22}) \in
\widetilde{\mathcal{R}}_1'$, the rate pair $(R_{10}+R_{11},R_{20}+R_{22})$ is
achievable for the Gaussian IC with state information defined in Section
\ref{sec_2}.\label{lemma_gaussian_asy}
\end{Corollary}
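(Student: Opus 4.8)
The plan is to realize $\widetilde{\mathcal{R}}_1'$ as the pointwise limit, as $K\to\infty$, of the region $\mathcal{R}_1'$ established in Theorem~\ref{theorem_gaussian_1}. Since $\mathcal{R}_1'$ is achievable for every finite $K$, any rate tuple lying in the interior of the limiting region will satisfy all twelve bounds of $\mathcal{R}_1'$ for all sufficiently large $K$ and is therefore achievable; hence it suffices to determine the limiting behavior of each of the twelve rate bounds. The organizing observation is that the four dirty-paper gains $G_{U_1}=\alpha_{10}^2K/P_{A_1}$, $G_{V_1}=\alpha_{11}^2K/P_{B_1}$, $G_{U_2}=\alpha_{20}^2K/P_{A_2}$ and $G_{V_2}=\alpha_{22}^2K/P_{B_2}$ each grow linearly in $K$, so the entire argument reduces to counting how many such factors appear, and in what combination, inside each numerator and denominator.

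First I would dispose of the four single-rate bounds (on $R_{11}$, $R_{10}$, $R_{22}$, $R_{20}$) by showing that each diverges, and hence becomes inactive, as $K\to\infty$. Taking the $R_{11}$ bound as the representative case, its numerator factors as $(1+P_{B_1}+g_{12}P_{B_2})(1+G_{U_1})(1+G_{U_2})+K(\alpha_{10}+\alpha_{20}\sqrt{g_{12}}-\mu_1)^2\bigl(1+G_{U_1}G_{U_2}/(1+G_{U_1}+G_{U_2})\bigr)$, whose leading term already grows like $K^2$ because of the genuine product $G_{U_1}G_{U_2}$. Its denominator, by contrast, contains only a sum of single $G$-factors together with one term linear in $K$, and is therefore of order $K$. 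Consequently the argument of the logarithm grows like $K$ and the bound tends to $+\infty$. The identical product-versus-sum count applies verbatim to the other three single-rate bounds, which is why none of them survives in $\widetilde{\mathcal{R}}_1'$.

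Next I would handle the eight sum-rate bounds, which are precisely those retained in the corollary. In each of these the numerator and the denominator contain at most a single $G$-factor, so both are of order $K$. I would divide numerator and denominator of each logarithm by $K$ and pass to the limit, using $G_{U_1}/K\to\alpha_{10}^2/P_{A_1}$ and the analogous limits for $G_{V_1}$, $G_{U_2}$, $G_{V_2}$, while every isolated additive constant is annihilated by the division. For example, the $R_{10}+R_{11}$ bound has numerator $(1+P_{A_1}+P_{B_1}+g_{12}P_{B_2})(1+G_{U_2})+K(\alpha_{20}\sqrt{g_{12}}-\mu_1)^2$, which after division by $K$ converges to $(1+P_{A_1}+P_{B_1}+g_{12}P_{B_2})\alpha_{20}^2/P_{A_2}+(\alpha_{20}\sqrt{g_{12}}-\mu_1)^2$, exactly the stated numerator; and the three-message bound keeps only its $\mu_1^2K$ term on top, producing the $\mu_1^2$ that appears in the corollary. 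Performing the same reduction for the remaining bounds, and exploiting the $1\leftrightarrow2$ symmetry between the receiver-1 and receiver-2 expressions, delivers all eight inequalities defining $\widetilde{\mathcal{R}}_1'$.

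The only step requiring genuine care is the growth-rate bookkeeping that separates the diverging bounds from the converging ones. The decisive feature is the contrast between a product such as $(1+G_{U_1})(1+G_{U_2})$, which appears only in the single-rate numerators and forces order-$K^2$ growth, and the sums of single $G$-factors that appear in every denominator and in every sum-rate numerator and keep them at order $K$. Once this distinction is made explicit, the nested correction $K\,G_{U_1}G_{U_2}/(1+G_{U_1}+G_{U_2})$ is readily checked to be of the same $K^2$ order and therefore harmless to the conclusion, and all the remaining limits are mechanical divisions by $K$.
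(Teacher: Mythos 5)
Your proposal is correct and takes essentially the same route as the paper, which states this corollary without an explicit proof as the $K\to\infty$ limit of the region in Theorem~\ref{theorem_gaussian_1}. Your growth-rate bookkeeping — the four single-rate bounds diverge because their numerators contain the order-$K^2$ products $(1+G_{U_1})(1+G_{U_2})$ (and the matching $K\,G_{U_1}G_{U_2}/(1+G_{U_1}+G_{U_2})$ term) over order-$K$ denominators, while each of the eight sum-rate bounds converges after dividing numerator and denominator by $K$ and using $G_{U_j}/K\to\alpha_{j0}^2/P_{A_j}$, $G_{V_j}/K\to\alpha_{jj}^2/P_{B_j}$ — reproduces exactly the eight retained inequalities.
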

\begin{Remark}
It can be easily seen that due to the special structure of
DPC~\cite{mac_dpc_laneman}, a nontrivial rate region can be achieved even when
the state power goes to infinity, as long as the state is non-causally known
at the transmitters.
\end{Remark}
In the following sections, we will consider several special cases of the
Gaussian IC with state information: the strong interference case, the mixed
interference case, and the weak interference case, respectively.
\section{The Strong Gaussian IC with State Information}\label{sec_5}
For the Gaussian IC with state information defined in Section \ref{sec_2}, the
channel is called strong Gaussian IC with state information if the
interference link gains satisfy $g_{21}\geq 1$ and $g_{12}\geq 1$. In this
section, we propose two achievable schemes for the strong Gaussian IC with
state information, and derive the corresponding achievable rate regions. An
enlarged achievable rate region is obtained by combining them with the
time-sharing technique.
\subsection{Scheme without Active Interference Cancellation}\label{sec_5_without_cancellation}
We first introduce a simple achievable scheme without active interference
cancellation, which is a building block towards the more general schemes coming
next. It is known that for the traditional strong Gaussian IC, the capacity
region can be obtained by the intersection of two MAC rate regions due to the
presence of the strong interference. However, for the strong Gaussian IC with
state information, the two MACs are not capacity-achieving simultaneously
since the optimal DPC parameters are different for these two MACs. Here we
propose a simple achievable scheme, which achieves the capacity for one of the
MACs and leaves the other MAC to suffer from the non-optimal DPC parameters.
Note that now all the source power is used to transmit the intended message at
both transmitters instead of being partly allocated to cancel the state effect
as in Section \ref{sec_4}.
\begin{Theorem}
Let $\mathcal{C}_{s1}$ be the set of all non-negative rate pairs $(R_1,R_2)$ satisfying
\begin{eqnarray*}
R_1 &\leq& \min\left\{\frac{1}{2}\log\left(1+P_1\right),\frac{1}{2}\log\left(\frac{\left(1+g_{21}P_{1}\right)\left(1+\frac{\alpha_{20}^2K}{P_2}\right)+K\left(\alpha_{20}-\frac{1}{\sqrt{N_2}}\right)^2}{1+\frac{\alpha_{20}^2K}{P_2}+\frac{\alpha_{10}^2K}{P_1}+K\left(\alpha_{20}+\alpha_{10}\sqrt{g_{21}}-\frac{1}{\sqrt{N_2}}\right)^2}\right)\right\},\\
R_2 &\leq&\frac{1}{2}\log\left(\frac{\left(1+P_{2}\right)\left(1+\frac{\alpha_{10}^2K}{P_1}\right)+K\left(\alpha_{10}\sqrt{g_{21}}-\frac{1}{\sqrt{N_2}}\right)^2}{1+\frac{\alpha_{20}^2K}{P_2}+\frac{\alpha_{10}^2K}{P_1}+K\left(\alpha_{20}+\alpha_{10}\sqrt{g_{21}}-\frac{1}{\sqrt{N_2}}\right)^2}\right),\\
R_1+R_2 &\leq& \min\left\{\frac{1}{2}\log\left(1+P_1+g_{12}P_2\right),\frac{1}{2}\log\left(\frac{1+P_{2}+g_{21}P_{1}+\frac{K}{N_2}}{1+\frac{\alpha_{20}^2K}{P_2}+\frac{\alpha_{10}^2K}{P_1}+K\left(\alpha_{20}+\alpha_{10}\sqrt{g_{21}}-\frac{1}{\sqrt{N_2}}\right)^2}\right)\right\},
\end{eqnarray*}
where $\alpha_{10}=\frac{P_1}{\sqrt{N_1}(1+P_1+g_{12}P_2)}$ and
$\alpha_{20}=\frac{\sqrt{g_{12}}P_2}{\sqrt{N_1}(1+P_1+g_{12}P_2)}$, which are
optimal for the MAC at receiver $1$. Then any rate pair $(R_1,R_2) \in
\mathcal{C}_{s1}$ is achievable for the strong Gaussian IC with state
information.

Similarly, let $\mathcal{C}_{s2}$ be the set of all non-negative rate pairs $(R_1,R_2)$ satisfying
\begin{eqnarray*}
R_1 &\leq& \frac{1}{2}\log\left(\frac{\left(1+P_{1}\right)\left(1+\frac{\alpha_{20}^2K}{P_2}\right)+K\left(\alpha_{20}\sqrt{g_{12}}-\frac{1}{\sqrt{N_1}}\right)^2}{1+\frac{\alpha_{10}^2K}{P_1}+\frac{\alpha_{20}^2K}{P_2}+K\left(\alpha_{10}+\alpha_{20}\sqrt{g_{12}}-\frac{1}{\sqrt{N_1}}\right)^2}\right),\\
R_2 &\leq& \min\left\{\frac{1}{2}\log\left(1+P_2\right),\frac{1}{2}\log\left(\frac{\left(1+g_{12}P_{2}\right)\left(1+\frac{\alpha_{10}^2K}{P_1}\right)+K\left(\alpha_{10}-\frac{1}{\sqrt{N_1}}\right)^2}{1+\frac{\alpha_{10}^2K}{P_1}+\frac{\alpha_{20}^2K}{P_2}+K\left(\alpha_{10}+\alpha_{20}\sqrt{g_{12}}-\frac{1}{\sqrt{N_1}}\right)^2}\right)\right\},\\
R_1+R_2 &\leq& \min\left\{\frac{1}{2}\log\left(1+P_2+g_{21}P_1\right),\frac{1}{2}\log\left(\frac{1+P_{1}+g_{12}P_{2}+ \frac{K}{N_1}}{1+\frac{\alpha_{10}^2K}{P_1}+\frac{\alpha_{20}^2K}{P_2}+K\left(\alpha_{10}+\alpha_{20}\sqrt{g_{12}}-\frac{1}{\sqrt{N_1}}\right)^2}\right)\right\},
\end{eqnarray*}
where $\alpha_{10}=\frac{\sqrt{g_{21}}P_1}{\sqrt{N_2}(1+P_2+g_{21}P_1)}$ and
$\alpha_{20}=\frac{P_2}{\sqrt{N_2}(1+P_2+g_{21}P_1)}$, which are optimal for
the MAC at receiver $2$). Then any rate pair $(R_1,R_2) \in \mathcal{C}_{s2}$
is achievable for the strong Gaussian IC with state
information.\label{theorem_strong_1}
\end{Theorem}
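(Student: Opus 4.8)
The plan is to obtain $\mathcal{C}_{s1}$ and $\mathcal{C}_{s2}$ as specializations of the general Gaussian region $\mathcal{R}_1'$ of Theorem~\ref{theorem_gaussian_1}, driven by the two simplifications dictated by the scheme description. First I would switch off the private messages and the active cancellation: set $R_{11}=R_{22}=0$, $\bar\beta_1=\bar\beta_2=0$ (hence $P_{B_1}=P_{B_2}=0$ and $P_{A_1}=P_1$, $P_{A_2}=P_2$), $\alpha_{11}=\alpha_{22}=0$, and $\gamma_1=\gamma_2=0$ (so that $\mu_1=1/\sqrt{N_1}$ and $\mu_2=1/\sqrt{N_2}$). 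Under these choices the auxiliary variables collapse to $U_1=A_1+\alpha_{10}S$ and $U_2=A_2+\alpha_{20}S$, and each transmitter spends all its power on a single common codeword.

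Next I would use the strong-interference hypothesis $g_{12},g_{21}\ge 1$ to argue that each receiver can reliably decode both common messages, exactly as in the classical strong Gaussian IC. Consequently the common rates $(R_1,R_2)=(R_{10},R_{20})$ must lie inside the achievable region of the Gaussian \emph{two-user MAC with common state} seen at receiver~$1$ \emph{and} inside the one seen at receiver~$2$, and the achievable region is their intersection. These two MAC regions are precisely the receiver-$1$ group \eqref{eq_rate_constraint_11}--\eqref{eq_rate_constraint_16} and the receiver-$2$ group \eqref{eq_rate_constraint_21}--\eqref{eq_rate_constraint_26} of Theorem~\ref{theorem_1} (equivalently, the first and last six inequalities of Theorem~\ref{theorem_gaussian_1}) after the reductions above; each contributes a single bound on $R_1$, a single bound on $R_2$, and a sum-rate bound. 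I would then evaluate the surviving Gel'fand-Pinsker mutual-information terms for the jointly Gaussian tuples $(U_1,U_2,S,Y_1)$ and $(U_1,U_2,S,Y_2)$ to get the explicit logarithms.

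The crux is the choice of DPC coefficients. For $\mathcal{C}_{s1}$ I would substitute the stated $\alpha_{10}=P_1/[\sqrt{N_1}(1+P_1+g_{12}P_2)]$ and $\alpha_{20}=\sqrt{g_{12}}P_2/[\sqrt{N_1}(1+P_1+g_{12}P_2)]$ and verify that they annihilate the state at receiver~$1$. Concretely, these are the MMSE coefficients for the MAC at receiver~$1$: one checks the sum identity $\alpha_{10}+\sqrt{g_{12}}\,\alpha_{20}=\frac{1}{\sqrt{N_1}}\cdot\frac{P_1+g_{12}P_2}{1+P_1+g_{12}P_2}$ together with the analogous conditional identities governing the individual-rate terms, so that the receiver-$1$ region collapses to the clean bounds $R_1\le\tfrac12\log(1+P_1)$, $R_2\le\tfrac12\log(1+g_{12}P_2)$, and $R_1+R_2\le\tfrac12\log(1+P_1+g_{12}P_2)$, as if the state were absent. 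With the \emph{same} coefficients the residual state $\tfrac{1}{\sqrt{N_1}}S$ no longer cancels at receiver~$2$, so the receiver-$2$ bounds retain the $\alpha$- and $K$-dependent terms displayed in $\mathcal{C}_{s1}$.

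Finally I would assemble the intersection and prune the redundant inequalities, keeping the binding bound for each of $R_1$, $R_2$, and $R_1+R_2$. In particular one verifies under $g_{12},g_{21}\ge 1$ that the clean receiver-$1$ bound $\tfrac12\log(1+g_{12}P_2)$ on $R_2$ is dominated by the receiver-$2$ bound, which explains why only the latter survives in the $R_2$ constraint of $\mathcal{C}_{s1}$, while both the clean and the residual bounds stay active (through the $\min$) for $R_1$ and for the sum rate. The region $\mathcal{C}_{s2}$ then follows verbatim by exchanging the roles of the two transmitter-receiver pairs, i.e.\ optimizing the DPC coefficients for receiver~$2$ instead. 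I expect the main obstacle to be the algebra of the last two steps: showing that the specific $\alpha$-values are simultaneously MMSE-optimal for \emph{both} users of the receiver-$1$ MAC (a joint, rather than single-user, cancellation), and then pinning down exactly which of the six intersected constraints are redundant.
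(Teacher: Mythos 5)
Your proposal is correct and follows essentially the same route as the paper: send only common messages with the DPC parameters tuned to the receiver-$1$ MAC (rendering it state-free and capacity-achieving there), evaluate the resulting non-optimal-DPC rate region at receiver $2$, and take the intersection of the two MAC regions, with $\mathcal{C}_{s2}$ obtained by symmetry. The only cosmetic difference is that you arrive at the two MAC regions by specializing Theorem~\ref{theorem_gaussian_1} (zero private power, zero cancellation parameters) rather than writing them down directly, and your pruning observation --- that the clean receiver-$1$ bound $\tfrac12\log(1+g_{12}P_2)$ on $R_2$ is dominated under $g_{12}\ge 1$ --- matches the paper's implicit simplification.
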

\begin{proof}
We only give the detailed proof for $\mathcal{C}_{s1}$ here. Similarly,
$\mathcal{C}_{s2}$ can be obtained by achieving the MAC capacity at receiver
$2$ and letting the MAC at receiver $1$ suffer from the non-optimal DPC
parameters.

Due to the presence of the strong interference, we only send common messages
at both transmitters instead of splitting the message into common and private
ones. Accordingly, both receivers need to decode the messages from both
transmitters. For the MAC at receiver $1$, the capacity region is given as:
\begin{eqnarray*}
R_1 &\leq& \frac{1}{2} \log\left(1+P_1\right) ,\\
R_2 &\leq& \frac{1}{2} \log\left(1+g_{12}P_2\right), \\
R_1+R_2 &\leq& \frac{1}{2} \log\left(1+P_1+g_{12}P_2\right),
\end{eqnarray*}
where DPC is utilized at both transmitters and the optimal DPC parameters are
$\alpha_{10}=\frac{P_1}{\sqrt{N_1}(1+P_1+g_{12}P_2)}$ and
$\alpha_{20}=\frac{\sqrt{g_{12}}P_2}{\sqrt{N_1}(1+P_1+g_{12}P_2)}$. However,
the MAC for receiver $2$ suffers from the non-optimal DPC parameters and has
the following achievable rate region:
\begin{eqnarray*}
R_1 &\leq& \frac{1}{2}\log\left(\frac{\left(1+g_{21}P_{1}\right)\left(1+\frac{\alpha_{20}^2K}{P_2}\right)+K\left(\alpha_{20}-\frac{1}{\sqrt{N_2}}\right)^2}{1+\frac{\alpha_{20}^2K}{P_2}+\frac{\alpha_{10}^2K}{P_1}+K\left(\alpha_{20}+\alpha_{10}\sqrt{g_{21}}-\frac{1}{\sqrt{N_2}}\right)^2}\right),\\
R_2 &\leq& \frac{1}{2}\log\left(\frac{\left(1+P_{2}\right)\left(1+\frac{\alpha_{10}^2K}{P_1}\right)+K\left(\alpha_{10}\sqrt{g_{21}}-\frac{1}{\sqrt{N_2}}\right)^2}{1+\frac{\alpha_{20}^2K}{P_2}+\frac{\alpha_{10}^2K}{P_1}+K\left(\alpha_{20}+\alpha_{10}\sqrt{g_{21}}-\frac{1}{\sqrt{N_2}}\right)^2}\right),\\
R_1+R_2 &\leq& \frac{1}{2}\log\left(\frac{1+P_{2}+g_{21}P_{1}+\frac{K}{N_2}}{1+\frac{\alpha_{20}^2K}{P_2}+\frac{\alpha_{10}^2K}{P_1}+K\left(\alpha_{20}+\alpha_{10}\sqrt{g_{21}}-\frac{1}{\sqrt{N_2}}\right)^2}\right).
\end{eqnarray*}
Consequently, we have the achievable region $\mathcal{C}_{s1}$ for the strong
Gaussian IC with state information, which is the intersection of the above two
rate regions for the two MACs.
\end{proof}
\subsection{Scheme with Active Interference Cancellation}\label{sec_5_with_cancellation}
For the strong Gaussian IC with state information, now we propose a more
general achievable scheme with active interference cancellation, which
allocates part of the source power to cancel the state effect at the
receivers. Specifically, DPC is used to achieve the capacity for one of the
MACs as shown in Section \ref{sec_5_without_cancellation}, and active
interference cancellation is employed at both transmitters to cancel the state
effect at the receivers. The corresponding achievable rate regions are
provided in the following theorem.
\begin{Theorem}
For any $\gamma_1^2 < P_1/K$ and $\gamma_2^2 < P_2/K$, let
$\mathcal{C}_{s3}(\gamma_1,\gamma_2)$ be the set of all non-negative rate
pairs $(R_1,R_2)$ satisfying
\setlength{\arraycolsep}{1pt}{\small\begin{eqnarray*}
R_1 &\leq& \min\left\{\frac{1}{2}\log\left(1+P_1-\gamma_1 ^2 K\right), \frac{1}{2}\log\left(\frac{\left(1+g_{21}(P_{1}-\gamma_1 ^2 K)\right)\left(1+\frac{\alpha_{20}^2K}{P_2-\gamma_2^2K}\right)+K\left(\alpha_{20}-\mu_{2}\right)^2}{1+\frac{\alpha_{20}^2K}{P_2-\gamma_2^2K}+\frac{\alpha_{10}^2K}{P_1-\gamma_1^2K}+K\left(\alpha_{20}+\alpha_{10}\sqrt{g_{21}}-\mu_{2}\right)^2}\right)\right\}\label{eq_strong_region_1_1},\\
R_2 &\leq& \frac{1}{2}\log\left(\frac{\left(1+P_{2}-\gamma_2^2K\right)\left(1+\frac{\alpha_{10}^2K}{P_1-\gamma_1^2K}\right)+K\left(\alpha_{10}\sqrt{g_{21}}-\mu_{2}\right)^2}{1+\frac{\alpha_{20}^2K}{P_2-\gamma_2^2K}+\frac{\alpha_{10}^2K}{P_1-\gamma_1^2K}+K\left(\alpha_{20}+\alpha_{10}\sqrt{g_{21}}-\mu_{2}\right)^2}\right),\label{eq_strong_region_1_2}\\
R_1+R_2 &\leq& \min\left\{\frac{1}{2}\log\left(1+P_1-\gamma_1 ^2 K+g_{12}\left(P_2-\gamma_2 ^2 K\right)\right),\frac{1}{2}\log\left(\frac{1+P_{2}-\gamma_2^2K+g_{21}(P_{1}-\gamma_1^2K)+\mu_{2}^2 K}{1+\frac{\alpha_{20}^2K}{P_2-\gamma_2^2K}+\frac{\alpha_{10}^2K}{P_1-\gamma_1^2K}+K\left(\alpha_{20}+\alpha_{10}\sqrt{g_{21}}-\mu_{2}\right)^2}\right)\right\},\label{eq_strong_region_1_3}
\end{eqnarray*}}
where
$\alpha_{10}=\frac{\mu_1(P_1-\gamma_1^2K)}{1+P_1-\gamma_1^2K+g_{12}(P_2-\gamma_2^2K)}$
and
$\alpha_{20}=\frac{\mu_1\sqrt{g_{12}}(P_2-\gamma_2^2K)}{1+P_1-\gamma_1^2K+g_{12}(P_2-\gamma_2^2K)}$,
which are optimal for the MAC at receiver $1$. Then any rate pair $(R_1,R_2)
\in \mathcal{C}_{s3}(\gamma_1,\gamma_2)$ is achievable for the strong Gaussian
IC with state information. Moreover, any rate pair in the convex hull (denoted
as $\hat{\mathcal{C}}_{s3}$) of $\mathcal{C}_{s3}(\gamma_1,\gamma_2)$ is also
achievable.

Similarly, for any $\gamma_1^2 < P_1/K$ and $\gamma_2^2 < P_2/K$, let
$\mathcal{C}_{s4}(\gamma_1,\gamma_2)$ be the set of all non-negative rate
pairs $(R_1,R_2)$ satisfying
\setlength{\arraycolsep}{1pt}{\small\begin{eqnarray*}
R_1 &\leq& \frac{1}{2}\log\left(\frac{\left(1+P_{1}-\gamma_1^2K\right)\left(1+\frac{\alpha_{20}^2K}{P_2-\gamma_2^2K}\right)+K\left(\alpha_{20}\sqrt{g_{12}}-\mu_1\right)^2}{1+\frac{\alpha_{10}^2K}{P_1-\gamma_1^2K}+\frac{\alpha_{20}^2K}{P_2-\gamma_2^2K}+K\left(\alpha_{10}+\alpha_{20}\sqrt{g_{12}}-\mu_1\right)^2}\right),\\
R_2 &\leq& \min\left\{\frac{1}{2}\log\left(1+P_2-\gamma_2^2K\right),\frac{1}{2}\log\left(\frac{\left(1+g_{12}(P_{2}-\gamma_2^2K)\right)\left(1+\frac{\alpha_{10}^2K}{P_1-\gamma_1^2K}\right)+K\left(\alpha_{10}-\mu_1\right)^2}{1+\frac{\alpha_{10}^2K}{P_1-\gamma_1^2K}+\frac{\alpha_{20}^2K}{P_2-\gamma_2^2K}+K\left(\alpha_{10}+\alpha_{20}\sqrt{g_{12}}-\mu_1\right)^2}\right)\right\},\\
R_1+R_2 &\leq& \min\left\{\frac{1}{2}\log\left(1+P_2-\gamma_2^2K+g_{21}(P_1-\gamma_1^2K)\right),\frac{1}{2}\log\left(\frac{1+P_{1}-\gamma_1^2K+g_{12}(P_{2}-\gamma_2^2K)+ K\mu_1^2}{1+\frac{\alpha_{10}^2K}{P_1-\gamma_1^2K}+\frac{\alpha_{20}^2K}{P_2-\gamma_2^2K}+K\left(\alpha_{10}+\alpha_{20}\sqrt{g_{12}}-\mu_1\right)^2}\right)\right\},
\end{eqnarray*}}
where
$\alpha_{10}=\frac{\mu_2\sqrt{g_{21}}(P_1-\gamma_1^2K)}{1+P_2-\gamma_2^2K+g_{21}(P_1-\gamma_1^2K)}$
and
$\alpha_{20}=\frac{\mu_2(P_2-\gamma_2^2K)}{1+P_2-\gamma_2^2K+g_{21}(P_1-\gamma_1^2K)}$,
which are optimal for the MAC at receiver $2$. Then any rate pair $(R_1,R_2)
\in \mathcal{C}_{s4}(\gamma_1,\gamma_2)$ is achievable for the strong Gaussian
IC with state information. Moreover, any rate pair in the convex hull (denoted
as $\hat{\mathcal{C}}_{s4}$) of $\mathcal{C}_{s4}(\gamma_1,\gamma_2)$ is also
achievable.\label{theorem_strong_2}
\end{Theorem}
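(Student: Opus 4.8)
The plan is to reduce this theorem to Theorem~\ref{theorem_strong_1} by recognizing that active interference cancellation only reshapes the effective MAC seen at each receiver. First I would specialize the construction of Section~\ref{sec_4_active_cancel} to the strong case by sending \emph{only} common messages, i.e.\ setting $B_1=B_2=0$ so that $\beta_1=\beta_2=1$ and $P_{A_j}=P_j-\gamma_j^2K$. The transmitted signals then reduce to $X_j=A_j-\gamma_jS$, and substituting into \eqref{eq_gaussian_model1}--\eqref{eq_gaussian_model2} yields
\begin{eqnarray*}
Y_1 &=& A_1+\sqrt{g_{12}}A_2+\mu_1 S+Z_1,\\
Y_2 &=& A_2+\sqrt{g_{21}}A_1+\mu_2 S+Z_2,
\end{eqnarray*}
with $\mu_1,\mu_2$ as defined in Section~\ref{sec_4_active_cancel}. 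The constraint $\gamma_j^2<P_j/K$ guarantees $P_{A_j}>0$, and a direct second-moment computation confirms $\mathbb{E}[X_j^2]=P_j$, so the power constraints hold with equality. The upshot is a state-dependent Gaussian MAC at each receiver that is structurally identical to the one analyzed in Theorem~\ref{theorem_strong_1} under the substitutions $1/\sqrt{N_i}\mapsto\mu_i$ and $P_i\mapsto P_i-\gamma_i^2K$.

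Next I would reuse the two-MAC argument. Since $g_{12},g_{21}\geq1$, each receiver can reliably decode both common messages, so the achievable region is the intersection of the two Gaussian MAC rate regions. Applying dirty-paper coding at both transmitters with auxiliaries $U_j=A_j+\alpha_{j0}S$, I would fix $\alpha_{10},\alpha_{20}$ to the MAC-DPC-optimal coefficients for receiver~$1$; by the Gaussian MAC-with-state result of~\cite{mac_dpc_laneman}, this choice fully suppresses the residual state $\mu_1 S$ and lets the MAC at receiver~$1$ attain its state-free capacity region, contributing the clean single-user rate $\frac{1}{2}\log(1+P_1-\gamma_1^2K)$ and sum rate $\frac{1}{2}\log(1+P_1-\gamma_1^2K+g_{12}(P_2-\gamma_2^2K))$ as the first arguments of the minima. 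The stated $\alpha_{10}=\mu_1 P_{A_1}/(1+P_{A_1}+g_{12}P_{A_2})$ and $\alpha_{20}=\mu_1\sqrt{g_{12}}P_{A_2}/(1+P_{A_1}+g_{12}P_{A_2})$ are exactly the MMSE-type Costa coefficients in these transformed variables, so checking their optimality is routine.

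The computational core lies in the second MAC. With $\alpha_{10},\alpha_{20}$ now \emph{fixed} at the values tuned for receiver~$1$, I would evaluate the Gel'fand--Pinsker rate terms $I(U_j;Y_2\mid\cdot)-I(U_j;S\mid\cdot)$ for the Gaussian model at receiver~$2$. Because these coefficients are non-optimal for $\mu_2 S$, the state is only partially cancelled, and the residual contributions surface precisely as the $K(\alpha_{20}-\mu_2)^2$, $K(\alpha_{10}\sqrt{g_{21}}-\mu_2)^2$, and $K\mu_2^2$ terms in the numerators, together with the $1+\alpha^2 K/P_{A}$ DPC-penalty factors in the denominators. These are the MAC specializations of the general rates in Theorem~\ref{theorem_gaussian_1} with $P_{B_1}=P_{B_2}=0$, which provides a cross-check. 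I anticipate this evaluation to be the main obstacle: it requires careful bookkeeping of the joint covariance of $(U_1,U_2,S,Y_2)$ and of several conditional-variance ratios, and the non-optimality of the $\alpha$'s means the usual clean cancellations do not occur.

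Finally, the intersection of the two MAC regions gives $\mathcal{C}_{s3}(\gamma_1,\gamma_2)$, and convex-hull achievability $\hat{\mathcal{C}}_{s3}$ follows from time-sharing across different cancellation pairs $(\gamma_1,\gamma_2)$. The region $\mathcal{C}_{s4}(\gamma_1,\gamma_2)$ is then obtained by an entirely symmetric argument in which the DPC parameters are optimized for receiver~$2$ and the residual-state penalty is incurred at receiver~$1$; swapping the indices $1\leftrightarrow2$, $g_{12}\leftrightarrow g_{21}$, and $\mu_1\leftrightarrow\mu_2$ maps one derivation onto the other.
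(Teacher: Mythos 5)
Your proposal matches the paper's intended argument: the paper omits this proof, stating only that it is the same as the proof of Theorem~\ref{theorem_strong_1} with active interference cancellation applied at both transmitters, which is exactly the reduction you carry out (common messages only, $X_j=A_j-\gamma_jS$, the substitutions $P_j\mapsto P_j-\gamma_j^2K$ and $1/\sqrt{N_i}\mapsto\mu_i$, and the intersection of the two MAC regions with DPC optimized for one receiver and the other suffering the non-optimal parameters). Your additional checks of the power constraint and of the Costa coefficients are consistent with the construction in Section~\ref{sec_4_active_cancel}.
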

The proof is omitted here since it is similar to that of Theorem
\ref{theorem_strong_1} except for applying active interference cancellation to
both users. Moreover, we see that the regions $\mathcal{C}_{s1}$ and
$\mathcal{C}_{s2}$ are equivalent to $\mathcal{C}_{s3}(0,0)$ and
$\mathcal{C}_{s4}(0,0)$, respectively, which means that the achievable scheme
without active interference cancellation is only a special case of the one with
active interference cancellation.

Note that an enlarged achievable rate region can be obtained by deploying the
time-sharing technique for any points in $\mathcal{C}_{s3}(\gamma_1,\gamma_2)$
and $\mathcal{C}_{s4}(\gamma_1,\gamma_2)$, which is described in the following
corollary.
\begin{Corollary}
The enlarged achievable rate region $\mathcal{C}_{s}$ for the strong Gaussian
IC with state information is given by the closure of the convex hull of
$\left(0,\frac{1}{2}\log\left(1+P_2\right)\right)$,
$\left(\frac{1}{2}\log\left(1+P_1\right),0\right)$, and all $(R_1,R_2)$ in
$\mathcal{C}_{s3}(\gamma_1,\gamma_2)$ and
$\mathcal{C}_{s4}(\gamma_1,\gamma_2)$ for any $\gamma_1^2<P_1/K$ and
$\gamma_2^2<P_2/K$.
\end{Corollary}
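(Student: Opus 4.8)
The plan is to realize the claimed set as the time-sharing closure of a collection of rate pairs that have each already been certified achievable, and then to invoke the standard facts that achievability is preserved under convex combination (time-sharing) and under taking closures. By construction, $\mathcal{C}_s$ is the closure of the convex hull of three ingredients: the two single-user corner points, together with every pair in $\mathcal{C}_{s3}(\gamma_1,\gamma_2)$ and $\mathcal{C}_{s4}(\gamma_1,\gamma_2)$ over the admissible range $\gamma_1^2<P_1/K$, $\gamma_2^2<P_2/K$. It therefore suffices to (i) verify the two corner points are achievable, (ii) recall that $\mathcal{C}_{s3}$ and $\mathcal{C}_{s4}$ are achievable, and (iii) argue that the convex-hull-then-closure operation keeps us inside the achievable set.

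First I would establish the corner points. For $\left(\frac{1}{2}\log(1+P_1),0\right)$, set $R_2=0$ and silence transmitter~2, so that $X_2=0$. Receiver~1 then observes the point-to-point channel $Y_1 = X_1 + \frac{1}{\sqrt{N_1}}S + Z_1$ with the additive Gaussian state $\frac{1}{\sqrt{N_1}}S$ known non-causally at transmitter~1; by Costa's dirty-paper coding~\cite{costa_dpc}, transmitter~1 achieves $\frac{1}{2}\log(1+P_1)$ as if the state were absent, yielding the desired pair. The point $\left(0,\frac{1}{2}\log(1+P_2)\right)$ follows by the symmetric argument with the roles of the transmitters exchanged. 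I would emphasize that these two points genuinely need to be listed separately: in $\mathcal{C}_{s3}$ (resp.\ $\mathcal{C}_{s4}$) the DPC parameters are optimal only for the MAC at one receiver, so the residual state at the other receiver strictly shrinks its single-user intercept, and neither corner point need lie in $\mathcal{C}_{s3}\cup\mathcal{C}_{s4}$.

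Next, Theorem~\ref{theorem_strong_2} already guarantees that every pair in $\mathcal{C}_{s3}(\gamma_1,\gamma_2)$ and $\mathcal{C}_{s4}(\gamma_1,\gamma_2)$ is achievable for all admissible parameter choices. To reach the convex hull I would use the usual time-sharing argument: given two achievable rate pairs with codes of blocklengths $n_1$ and $n_2$ and vanishing error probabilities, concatenate them over a block of length $n_1+n_2$, devoting a fraction $\lambda=n_1/(n_1+n_2)$ of the channel uses to the first code; by the union bound the overall error still vanishes, so the convex combination $\lambda(R_1,R_2)+(1-\lambda)(R_1',R_2')$ is achievable. Iterating over finitely many points recovers the entire convex hull of the union of the above families.

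The only step needing a little care, rather than a direct appeal to the earlier results, is the closure; this is also where I would expect the (modest) technical friction, since the parameter range $\gamma_j^2<P_j/K$ is open and the union over it need not be closed as $P_j-\gamma_j^2K\to0$. Here I would use that the achievable set is both downward-closed (a code for $(R_1',R_2')$ yields one for any $(R_1,R_2)$ with $R_1\le R_1'$ and $R_2\le R_2'$, by restricting to a subset of the messages) and topologically closed. Thus for any $(R_1,R_2)$ in the closure of the convex hull and any $\varepsilon>0$ there is an achievable pair within distance $\varepsilon$; shrinking the target rates slightly places them beneath an achievable pair, and letting $\varepsilon\to0$ shows $(R_1,R_2)$ itself is achievable. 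The overall obstacle is therefore purely bookkeeping: confirming that the continuum of choices $(\gamma_1,\gamma_2)$ does not disrupt the argument, which it does not, since time-sharing only ever combines finitely many already-achievable points and the closure then recovers the limiting ones.
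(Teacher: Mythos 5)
Your proposal is correct and follows essentially the same route the paper intends: the paper offers no explicit proof beyond the remark that the region ``can be obtained by deploying the time-sharing technique,'' and your argument simply fills in the standard details (single-user dirty-paper coding for the two corner points, block time-sharing for the convex hull, and downward-closedness plus a limiting argument for the closure). Your observation that the corner points must be listed separately—because the non-optimal DPC parameters at the other receiver can strictly cap the single-user intercepts of $\mathcal{C}_{s3}$ and $\mathcal{C}_{s4}$—is accurate and is exactly why the corollary includes them explicitly.
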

In Section~\ref{sec_9_strong}, we will numerically compare the above
achievable rate regions with an inner bound, which is denoted as
$\mathcal{C}_{s\_in}$ and defined by the achievable rate region when the
transmitters ignore the non-causal state information. The improvement due to
DPC and active interference cancellation is clearly shown there. We also
compare the above achievable rate regions with an outer bound (denoted by
$\mathcal{C}_{s\_o}$), which corresponds to the capacity region of the
traditional strong Gaussian IC~\cite{sato}. Such a correspondence is due to
the fact that the traditional Gaussian IC can be viewed as the idealization of
our channel model where the state is also known at the receivers.

\section{The Mixed Gaussian IC with State Information}\label{sec_6}
For the Gaussian IC with state information defined in Section \ref{sec_2}, the
channel is called mixed Gaussian IC with state information if the interference
link gains satisfy $g_{21}>1$, $g_{12}<1$ or $g_{21}<1$, $g_{12}>1$. In this
section, we propose two achievable schemes for the mixed Gaussian IC with
state information, and derive the corresponding achievable rate regions.
Similarly, we can enlarge the achievable rate region by combining them with
the time-sharing technique. Without loss of generality, from now on we assume
that $g_{21} > 1$ and $g_{12} < 1$.
\subsection{Scheme without Active Interference Cancellation}\label{sec_6_without_cancellation}
Similar to the strong Gaussian IC with state information, here we first
introduce a simple scheme without active interference cancellation, which
optimizes the DPC parameters for one receiver and leaves the other receiver
suffer from the non-optimal DPC parameters. Furthermore, receiver $1$ treats
the received signal from transmitter $2$ as noise, and receiver $2$ decodes
both messages from transmitter $1$ and transmitter $2$. Note that now all the
source power is used to send the intended messages at both transmitters
instead of employing active interference cancellation.
\begin{Theorem}
For any $\alpha_{22}$, let $\mathcal{C}_{m1}(\alpha_{22})$ be the set of all
non-negative rate pairs $(R_1,R_2)$ satisfying
\setlength{\arraycolsep}{1pt}{\small\begin{eqnarray*}
R_1 &\leq& \min\left\{\frac{1}{2}\log\left(1+\frac{P_1}{1+g_{12}P_2}\right),\frac{1}{2}\log\left(\frac{\left(1+g_{21}P_{1}\right)\left(1+\frac{\alpha_{22}^2K}{P_2}\right)+K\left(\alpha_{22}-\frac{1}{\sqrt{N_2}}\right)^2}{1+\frac{\alpha_{10}^2K}{P_1}+\frac{\alpha_{22}^2K}{P_2}+K\left(\alpha_{10}\sqrt{g_{21}}+\alpha_{22}-\frac{1}{\sqrt{N_2}}\right)^2}\right)\right\},\\
R_2 &\leq&\frac{1}{2}\log\left(\frac{\left(1+P_{2}\right)\left(1+\frac{\alpha_{10}^2K}{P_1}\right)+K\left(\alpha_{10}\sqrt{g_{21}}-\frac{1}{\sqrt{N_2}}\right)^2}{1+\frac{\alpha_{10}^2K}{P_1}+\frac{\alpha_{22}^2K}{P_2}+K\left(\alpha_{10}\sqrt{g_{21}}+\alpha_{22}-\frac{1}{\sqrt{N_2}}\right)^2}\right),\\
R_1+R_2 &\leq& \frac{1}{2}\log\left(\frac{1+P_{2}+g_{21}P_{1}+\frac{K}{N_2}}{1+\frac{\alpha_{10}^2K}{P_1}+\frac{\alpha_{22}^2K}{P_2}+K\left(\alpha_{10}\sqrt{g_{21}}+\alpha_{22}-\frac{1}{\sqrt{N_2}}\right)^2}\right),
\end{eqnarray*}}
where $\alpha_{10}=\frac{P_1}{\sqrt{N_1}(1+P_1+g_{12}P_2)}$ that is optimal
for the point-to-point link between transmitter $1$ and receiver $1$. Then any
rate pair $(R_1,R_2) \in \mathcal{C}_{m1}(\alpha_{22})$ is achievable for the
mixed Gaussian IC with state information. Moreover, any rate pair in the
convex hull (denoted as $\hat{\mathcal{C}}_{m1}$) of all
$\mathcal{C}_{m1}(\alpha_{22})$ is also achievable.

Similarly, let $\mathcal{C}_{m2}$ be the set of all non-negative rate pairs $(R_1,R_2)$ satisfying
\begin{eqnarray*}
R_1 &\leq& \frac{1}{2}\log\left(\frac{1+P_{1}+g_{12}P_2+\frac{K}{N_1}}{\left(1+g_{12}P_2\right)\left(1+\frac{\alpha_{10}^2K}{P_1}\right)+K\left(\alpha_{10}-\frac{1}{\sqrt{N_1}}\right)^2}\right),\\
R_2 &\leq& \frac{1}{2}\log\left(1+P_2\right),\\
R_1+R_2 &\leq& \frac{1}{2}\log\left(1+P_2+g_{21}P_1\right),
\end{eqnarray*}
where $\alpha_{10}=\frac{\sqrt{g_{21}}P_1}{\sqrt{N_2}(1+P_2+g_{21}P_1)}$ that
is optimal for the MAC at receiver $2$. Then any rate pair $(R_1,R_2) \in
\mathcal{C}_{m2}$ is achievable for the mixed Gaussian IC with state
information.\label{theorem_mixed_1}
\end{Theorem}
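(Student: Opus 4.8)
The plan is to obtain both regions as specializations of the Gaussian inner bound $\mathcal{R}_1'$ of Theorem~\ref{theorem_gaussian_1}, adapted to the mixed regime $g_{21}>1>g_{12}$. Since the cross-link into receiver~2 is strong and the one into receiver~1 is weak, I let transmitter~1 carry only a \emph{public} message and transmitter~2 only a \emph{private} message, and I disable active cancellation. In the notation of Section~\ref{sec_4_active_cancel} this means setting $\gamma_1=\gamma_2=0$ (so $\mu_1=1/\sqrt{N_1}$ and $\mu_2=1/\sqrt{N_2}$), $P_{B_1}=0$, and $P_{A_2}=0$, with $U_1=A_1+\alpha_{10}S$, $A_1\sim\mathcal{N}(0,P_1)$, and $V_2=B_2+\alpha_{22}S$, $B_2\sim\mathcal{N}(0,P_2)$. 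Under these choices $V_1$ and $U_2$ drop out ($G_{V_1}=G_{U_2}=0$, $\alpha_{11}=\alpha_{20}=0$): receiver~1 decodes $U_1$ while treating $\sqrt{g_{12}}B_2$ as Gaussian noise, and receiver~2 jointly decodes $(U_1,V_2)$ as a two-user Gel'fand--Pinsker MAC.

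First I would determine which of the twelve inequalities of $\mathcal{R}_1'$ remain active. With $R_{11}=R_{20}=0$ the inequalities involving the suppressed messages become slack, leaving a single receiver-1 bound on $R_{10}$ together with three receiver-2 MAC bounds (the apparently duplicated pairs collapse to the same expression once $U_2$ is null). Substituting the parameters above into the $R_{10}$ formula of Theorem~\ref{theorem_gaussian_1} turns its numerator into $\mathrm{Var}(Y_1)=1+P_1+g_{12}P_2+K/N_1$ and gives
\[
R_1\le\tfrac12\log\frac{1+P_1+g_{12}P_2+K/N_1}{(1+g_{12}P_2)\big(1+\alpha_{10}^2K/P_1\big)+K(\alpha_{10}-1/\sqrt{N_1})^2},
\]
the receiver-1 bound listed in $\mathcal{C}_{m2}$; the three receiver-2 bounds reduce by the same substitution to the remaining displayed expressions, whose common denominator is governed by the residual state coefficient $\alpha_{10}\sqrt{g_{21}}+\alpha_{22}-1/\sqrt{N_2}$.

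For $\mathcal{C}_{m1}(\alpha_{22})$ I would fix $\alpha_{10}=P_1/\big(\sqrt{N_1}(1+P_1+g_{12}P_2)\big)$, the Costa-optimal value at receiver~1; this collapses the receiver-1 bound to $\tfrac12\log(1+P_1/(1+g_{12}P_2))$, the first term of the $\min$ on $R_1$, while $\alpha_{22}$ stays free in the three receiver-2 bounds, producing the stated $\alpha_{22}$-parameterized region. For $\mathcal{C}_{m2}$ I would instead fix both $\alpha_{10}$ and $\alpha_{22}$ at the values that render receiver~2 state-free---equivalently, those minimizing the receiver-2 sum-rate denominator---namely $\alpha_{10}=\sqrt{g_{21}}P_1/\big(\sqrt{N_2}(1+P_2+g_{21}P_1)\big)$ and $\alpha_{22}=P_2/\big(\sqrt{N_2}(1+P_2+g_{21}P_1)\big)$; then all three receiver-2 bounds collapse to the clean no-state MAC constraints $R_1\le\tfrac12\log(1+g_{21}P_1)$, $R_2\le\tfrac12\log(1+P_2)$, and $R_1+R_2\le\tfrac12\log(1+P_2+g_{21}P_1)$, while receiver~1 keeps the suboptimal-$\alpha_{10}$ ratio form. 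The first MAC bound is then discarded as redundant: since $g_{21}>1>g_{12}$ one has $\tfrac12\log(1+g_{21}P_1)>\tfrac12\log(1+P_1/(1+g_{12}P_2))$, and the receiver-1 bound never exceeds the right-hand side, leaving exactly the two receiver-2 bounds of $\mathcal{C}_{m2}$. Achievability of $\hat{\mathcal{C}}_{m1}$ finally follows by time-sharing over $\alpha_{22}$.

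Because Theorem~\ref{theorem_gaussian_1} already furnishes the Gaussian-evaluated bounds, most of the argument is substitution; the one genuinely non-mechanical step is the $\mathcal{C}_{m2}$ optimization, where I must solve the two coupled stationarity equations for $(\alpha_{10},\alpha_{22})$ and verify that the sum-rate denominator collapses to $1+(K/N_2)/(1+P_2+g_{21}P_1)$, so that the sum bound becomes exactly $\tfrac12\log(1+P_2+g_{21}P_1)$ and, under the same substitution, the $R_2$ bound becomes $\tfrac12\log(1+P_2)$. I expect this coupled minimization, together with the dominance check that removes the redundant receiver-2 bound on $R_1$, to be where care is most needed; the substitutions into the remaining bounds are routine.
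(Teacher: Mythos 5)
Your proposal is correct and follows essentially the same route as the paper: transmitter $1$ sends only a common message, transmitter $2$ only a private message, the DPC parameters are optimized for one receiver while the other suffers the mismatched parameters, and the region is the intersection of the resulting point-to-point and MAC bounds. The only cosmetic differences are that you reach these bounds by specializing Theorem~\ref{theorem_gaussian_1} (with $P_{B_1}=P_{A_2}=0$, $\gamma_1=\gamma_2=0$) rather than writing the two component regions directly, and that you explicitly justify discarding the redundant constraint $R_1\le\frac{1}{2}\log\left(1+g_{21}P_1\right)$ from $\mathcal{C}_{m2}$, which the paper omits without comment.
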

\begin{proof}
We only give the detailed derivation for $\mathcal{C}_{m1}$ here. The region
$\mathcal{C}_{m2}$ can be obtained in a similar manner by achieving the MAC
capacity at receiver $2$ and letting receiver $1$ suffer from the non-optimal
$\alpha_{10}$.

Since the interference link gains satisfy $g_{21}> 1$ and $g_{12} < 1$, the
interference for receiver $1$ is weaker than its intended signal and the
interference for receiver $2$ is stronger than its intended signal.
Accordingly, we send common message at transmitter $1$ and private message at
transmitter $2$ instead of splitting the message into common and private
messages for both transmitters. For the direct link from transmitter $1$ to
receiver $1$, the capacity is
\[
R_1 \leq \frac{1}{2}\log\left(1+\frac{P_1}{1+g_{12}P_2}\right),
\]
where the DPC parameter is $\alpha_{10} =
\frac{P_1}{\sqrt{N_1}(1+P_1+g_{12}P_2)}$. However, the MAC at receiver $2$
suffers from the non-optimal $\alpha_{10}$ and the achievable rate region is:
\begin{eqnarray*}
R_1 &\leq& \frac{1}{2}\log\left(\frac{\left(1+g_{21}P_{1}\right)\left(1+\frac{\alpha_{22}^2K}{P_2}\right)+K\left(\alpha_{22}-\frac{1}{\sqrt{N_2}}\right)^2}{1+\frac{\alpha_{10}^2K}{P_1}+\frac{\alpha_{22}^2K}{P_2}+K\left(\alpha_{10}\sqrt{g_{21}}+\alpha_{22}-\frac{1}{\sqrt{N_2}}\right)^2}\right),\\
R_2 &\leq&\frac{1}{2}\log\left(\frac{\left(1+P_{2}\right)\left(1+\frac{\alpha_{10}^2K}{P_1}\right)+K\left(\alpha_{10}\sqrt{g_{21}}-\frac{1}{\sqrt{N_2}}\right)^2}{1+\frac{\alpha_{10}^2K}{P_1}+\frac{\alpha_{22}^2K}{P_2}+K\left(\alpha_{10}\sqrt{g_{21}}+\alpha_{22}-\frac{1}{\sqrt{N_2}}\right)^2}\right),\\
R_1+R_2 &\leq& \frac{1}{2}\log\left(\frac{1+P_{2}+g_{21}P_{1}+\frac{K}{N_2}}{1+\frac{\alpha_{10}^2K}{P_1}+\frac{\alpha_{22}^2K}{P_2}+K\left(\alpha_{10}\sqrt{g_{21}}+\alpha_{22}-\frac{1}{\sqrt{N_2}}\right)^2}\right),
\end{eqnarray*}
for any $\alpha_{22}$. Therefore, we have the achievable rate region
$\mathcal{C}_{m1}(\alpha_{22})$ as the intersections of the above two regions.
\end{proof}
\subsection{Scheme with Active Interference Cancellation}\label{sec_6_with_cancellation}
Now we propose a more general scheme with active interference cancellation,
which allocates some source power to cancel the state effect at both
receivers. Similarly, the DPC parameters are only optimized for one receiver,
and the other receiver suffers from the non-optimal DPC parameters. The
corresponding achievable rate regions are stated in the following theorem.
\begin{Theorem}
For any $\alpha_{22}$, $\gamma_1^2 < P_1/K$, and $\gamma_2^2 < P_2/K$, let
$\mathcal{C}_{m3}(\alpha_{22},\gamma_1,\gamma_2)$ be the set of all
non-negative rate pairs $(R_1,R_2)$ satisfying
\setlength{\arraycolsep}{1pt}{\small\begin{eqnarray*}
R_1 &\leq& \min\left\{\frac{1}{2}\log\left(1+\frac{P_1-\gamma_1^2K}{1+g_{12}\left(P_2-\gamma_2^2K\right)}\right),\frac{1}{2}\log\left(\frac{\left(1+g_{21}\left(P_{1}-\gamma_1^2K\right)\right)\left(1+\frac{\alpha_{22}^2K}{P_2-\gamma_2^2K}\right)+K\left(\alpha_{22}-\mu_2\right)^2}{1+\frac{\alpha_{10}^2K}{P_1-\gamma_1^2K}+\frac{\alpha_{22}^2K}{P_2-\gamma_2^2K}+K\left(\alpha_{10}\sqrt{g_{21}}+\alpha_{22}-\mu_2\right)^2}\right)\right\},\\
R_2 &\leq&\frac{1}{2}\log\left(\frac{\left(1+P_{2}-\gamma_2^2K\right)\left(1+\frac{\alpha_{10}^2K}{P_1-\gamma_1^2K}\right)+K\left(\alpha_{10}\sqrt{g_{21}}-\mu_2\right)^2}{1+\frac{\alpha_{10}^2K}{P_1-\gamma_1^2K}+\frac{\alpha_{22}^2K}{P_2-\gamma_2^2K}+K\left(\alpha_{10}\sqrt{g_{21}}+\alpha_{22}-\mu_2\right)^2}\right),\\
R_1+R_2 &\leq& \frac{1}{2}\log\left(\frac{1+P_{2}-\gamma_2^2K+g_{21}\left(P_{1}-\gamma_1^2K\right)+K\mu_2^2}{1+\frac{\alpha_{10}^2K}{P_1-\gamma_1^2K}+\frac{\alpha_{22}^2K}{P_2-\gamma_2^2K}+K\left(\alpha_{10}\sqrt{g_{21}}+\alpha_{22}-\mu_2\right)^2}\right),
\end{eqnarray*}}
where
$\alpha_{10}=\frac{\mu_1(P_1-\gamma_1^2K)}{1+P_1-\gamma_1^2K+g_{12}(P_2-\gamma_2^2K)}$
that is optimal for the point-to-point link between transmitter $1$ and
receiver $1$. Then any rate pair $(R_1,R_2) \in
\mathcal{C}_{m3}(\alpha_{22},\gamma_1,\gamma_2)$ is achievable for the mixed
Gaussian IC with state information. Moreover, any rate pair in the convex hull
(denoted as $\hat{\mathcal{C}}_{m3}$) of
$\mathcal{C}_{m3}(\alpha_{22},\gamma_1,\gamma_2)$ is also achievable.

Similarly, for any $\gamma_1^2 < P_1/K$ and $\gamma_2^2 < P_2/K$, let
$\mathcal{C}_{m4}(\gamma_1,\gamma_2)$ be the set of all non-negative rate
pairs $(R_1,R_2)$ satisfying
\begin{eqnarray*}
R_1 &\leq& \frac{1}{2}\log\left(\frac{1+P_1-\gamma_1^2K+g_{12}\left(P_2-\gamma_2^2K\right)+K\mu_1^2}{\left(1+g_{12}\left(P_2-\gamma_2^2K\right)\right)\left(1+\frac{\alpha_{10}^2K}{P_1-\gamma_1^2K}\right)+K\left(\alpha_{10}-\mu_1\right)^2}\right),\\
R_2 &\leq& \frac{1}{2}\log\left(1+P_2-\gamma_2^2K\right),\\
R_1+R_2 &\leq& \frac{1}{2}\log\left(1+P_2-\gamma_2^2K+g_{21}\left(P_1-\gamma_1^2K\right)\right),
\end{eqnarray*}
where
$\alpha_{10}=\frac{\mu_2\sqrt{g_{21}}(P_1-\gamma_1^2K)}{1+P_2-\gamma_2^2K+g_{21}\left(P_1-\gamma_1^2K\right)}$
that is optimal for the MAC at receiver $2$). Then any rate pair $(R_1,R_2)
\in \mathcal{C}_{m4}(\gamma_1,\gamma_2)$ is achievable for the mixed Gaussian
IC with state information. Moreover, any rate pair in the convex hull (denoted
as $\hat{\mathcal{C}}_{m4}$) of $\mathcal{C}_{m4}(\gamma_1,\gamma_2)$ is also
achievable.\label{theorem_mixed_2}
\end{Theorem}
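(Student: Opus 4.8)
The plan is to mirror the achievability argument of Theorem~\ref{theorem_mixed_1}, now layering active interference cancellation on top of it. Since $g_{21}>1$ and $g_{12}<1$, transmitter~$1$ sends only a common message while transmitter~$2$ sends only a private message, so no rate splitting is needed at either transmitter. Following Section~\ref{sec_4}, I would set $X_1=A_1-\gamma_1 S$ and $X_2=B_2-\gamma_2 S$ with $A_1\sim\mathcal{N}(0,P_1-\gamma_1^2K)$ and $B_2\sim\mathcal{N}(0,P_2-\gamma_2^2K)$, and apply dirty-paper coding through the auxiliary variables $U_1=A_1+\alpha_{10}S$ and $V_2=B_2+\alpha_{22}S$. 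Substituting into \eqref{eq_gaussian_model1}--\eqref{eq_gaussian_model2}, the received signals become $Y_1=A_1+\sqrt{g_{12}}\,B_2+\mu_1 S+Z_1$ and $Y_2=B_2+\sqrt{g_{21}}\,A_1+\mu_2 S+Z_2$, with $\mu_1,\mu_2$ the residual state coefficients defined in Section~\ref{sec_4}; the hypotheses $\gamma_1^2<P_1/K$ and $\gamma_2^2<P_2/K$ keep the allocated signal powers positive.

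At receiver~$1$ I would treat the interfering term $\sqrt{g_{12}}\,B_2$ as additional Gaussian noise and decode $A_1$ over a point-to-point channel with state $\mu_1 S$. With $\alpha_{10}$ matched to this link, dirty-paper coding yields the interference-free rate $\tfrac{1}{2}\log\bigl(1+\tfrac{P_1-\gamma_1^2K}{1+g_{12}(P_2-\gamma_2^2K)}\bigr)$ and fixes $\alpha_{10}=\tfrac{\mu_1(P_1-\gamma_1^2K)}{1+P_1-\gamma_1^2K+g_{12}(P_2-\gamma_2^2K)}$. At receiver~$2$ I would decode both $A_1$ and $B_2$ as a two-user Gaussian MAC with state, again via dirty-paper coding; specializing the MAC-with-state rate region to the present $U_1,V_2$ gives the three bounds on $R_1$, $R_2$, and $R_1+R_2$. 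For $\mathcal{C}_{m3}(\alpha_{22},\gamma_1,\gamma_2)$ the parameter $\alpha_{10}$ is already pinned by the receiver~$1$ link and is therefore mismatched for the MAC at receiver~$2$, while $\alpha_{22}$ stays free; intersecting the receiver~$1$ bound with the receiver~$2$ MAC bounds produces the stated region. For $\mathcal{C}_{m4}(\gamma_1,\gamma_2)$ I would instead choose the dirty-paper parameters to be optimal for the MAC at receiver~$2$, so that its residual state is fully compensated and the clean bounds $R_2\le\tfrac{1}{2}\log(1+P_2-\gamma_2^2K)$ and $R_1+R_2\le\tfrac{1}{2}\log(1+P_2-\gamma_2^2K+g_{21}(P_1-\gamma_1^2K))$ result, while receiver~$1$ now suffers the mismatch, giving the single nontrivial bound on $R_1$.

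The routine but central computation is evaluating the receiver~$2$ MAC-with-state mutual informations under a \emph{mismatched} dirty-paper parameter and nonzero residual state $\mu_2 S$; these are exactly the Gaussian specializations underlying Theorem~\ref{theorem_gaussian_1}, restricted to the auxiliary variables active here, namely $U_1$ and $V_2$ with $V_1$ and $U_2$ absent. I expect the main obstacle to be the covariance algebra that simplifies $I(U_1,V_2;Y_2)$, $I(U_1;Y_2\mid V_2)$, and $I(V_2;Y_2\mid U_1)$ into the stated $\tfrac{1}{2}\log(\cdot)$ forms: each numerator retains a $K(\alpha-\mu_2)^2$ penalty precisely because the parameter is not matched to this receiver, and carrying these terms correctly through the log-determinant ratios is where sign and bookkeeping errors are most likely. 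Finally, achievability of the convex hulls $\hat{\mathcal{C}}_{m3}$ and $\hat{\mathcal{C}}_{m4}$ follows from the standard time-sharing argument over their respective parameter sets.
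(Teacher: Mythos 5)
Your proposal matches the paper's own (omitted) argument: the paper proves this theorem by direct analogy to Theorem~\ref{theorem_mixed_1} --- transmitter $1$ common-message only, transmitter $2$ private-message only, DPC matched to the direct link at receiver $1$ (for $\mathcal{C}_{m3}$) or to the MAC at receiver $2$ (for $\mathcal{C}_{m4}$), the other receiver suffering the mismatch --- with the only change being the substitution $P_i \mapsto P_i-\gamma_i^2K$ and residual state coefficients $\mu_1,\mu_2$ from the active-cancellation terms $-\gamma_i S$, exactly as you describe. The intersection of the two per-receiver regions and the time-sharing/convex-hull step are likewise identical, so your route is essentially the paper's.
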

The proof is omitted here since it is similar to that of
Theorem~\ref{theorem_mixed_1} except for applying active interference
cancellation to both users. Moreover, it is straightforward to see that the
regions $\mathcal{C}_{m1}(\alpha_{22})$ and $\mathcal{C}_{m2}$ are equivalent
to $\mathcal{C}_{m3}(\alpha_{22},0,0)$ and $\mathcal{C}_{m4}(0,0)$,
respectively, which means that the achievable scheme without active
interference cancellation is only a special case of the one with active
interference cancellation.

Note that an enlarged achievable rate region can be obtained by deploying the
time-sharing technique for any points in
$\mathcal{C}_{m3}(\alpha_{22},\gamma_1,\gamma_2)$ and
$\mathcal{C}_{m4}(\gamma_1,\gamma_2)$, which is described in the following
corollary.
\begin{Corollary}
The enlarged achievable rate region $\mathcal{C}_{m}$ for the mixed Gaussian
IC with state information is given by the closure of the convex hull of
$\left(0,\frac{1}{2}\log\left(1+P_2\right)\right)$,
$\left(\frac{1}{2}\log\left(1+P_1\right),0\right)$, and all $(R_1,R_2)$ in
$\mathcal{C}_{m3}(\alpha_{22},\gamma_1,\gamma_2)$ and
$\mathcal{C}_{m4}(\gamma_1,\gamma_2)$ for any $\alpha_{22}$,
$\gamma_1^2<P_1/K$, and $\gamma_2^2<P_2/K$.
\end{Corollary}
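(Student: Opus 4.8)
The plan is to assemble $\mathcal{C}_m$ entirely out of rate pairs whose achievability is already in hand, and then to argue that closing under convex combinations and limits preserves achievability. First, by Theorem~\ref{theorem_mixed_2}, every rate pair in $\mathcal{C}_{m3}(\alpha_{22},\gamma_1,\gamma_2)$ and in $\mathcal{C}_{m4}(\gamma_1,\gamma_2)$ is achievable for all admissible parameters $\alpha_{22}$, $\gamma_1^2<P_1/K$, and $\gamma_2^2<P_2/K$, so these interior families require no additional work. The only points not directly supplied by Theorem~\ref{theorem_mixed_2} are the two single-user corner points, which I would treat separately.

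The reason the corner points need their own argument is that the schemes underlying Theorem~\ref{theorem_mixed_2} keep both transmitters active with a fixed message structure; in $\mathcal{C}_{m3}$, for instance, receiver~1 treats transmitter~2's signal as noise, so $(\tfrac{1}{2}\log(1+P_1),0)$ is not attained there. To realize $(\tfrac{1}{2}\log(1+P_1),0)$ I would silence transmitter~2, so that receiver~1 sees the single-user state-dependent channel $Y_1=X_1+\tfrac{1}{\sqrt{N_1}}S+Z_1$ with no cross interference; since $S$ is known non-causally at transmitter~1, dirty-paper coding~\cite{costa_dpc} cancels the state entirely and achieves the state-free point-to-point capacity $\tfrac{1}{2}\log(1+P_1)$. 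Symmetrically, silencing transmitter~1 and applying DPC at transmitter~2 attains $(0,\tfrac{1}{2}\log(1+P_2))$.

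The third step is the time-sharing argument that lifts this collection of achievable points to its convex hull. Given two achievable pairs realized by codes of block length $n$, I would partition a length-$\ell n$ super-block into sub-blocks of lengths $\lambda \ell n$ and $(1-\lambda)\ell n$ and run the first code on one sub-block and the second on the other. Because the state process is memoryless and known non-causally at both transmitters, each encoder simply consumes the corresponding contiguous segment of $S^{\ell n}$, which remains i.i.d.\ with the original marginal; the per-sub-block error events are unaffected and their probabilities still vanish, while the realized rate pair is the convex combination $\lambda(R_1^{(1)},R_2^{(1)})+(1-\lambda)(R_1^{(2)},R_2^{(2)})$. Ranging over all pairs and all $\lambda$ yields the full convex hull of the points collected above.

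Finally I would pass to the closure by a standard limiting argument: if $(R_1,R_2)$ is a limit of achievable pairs, then for each $k$ one selects an achievable pair within $1/k$ of the target and uses its code on a sufficiently long block, so that the target is itself achievable. I do not anticipate a genuine obstacle in any of these steps; the one point that demands care is the time-sharing construction under non-causal state, where one must confirm that slicing the known state sequence across sub-blocks preserves the i.i.d.\ structure each constituent code relies upon. This follows immediately from the memorylessness of $S$, so the argument is essentially bookkeeping rather than a substantive difficulty.
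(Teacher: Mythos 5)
Your proposal is correct and follows the same route the paper intends: the points of $\mathcal{C}_{m3}$ and $\mathcal{C}_{m4}$ come from Theorem~5, the corner points come from silencing one transmitter and dirty-paper coding away the known state at the other, and time-sharing plus a standard limiting argument yields the closed convex hull. The paper offers no written proof beyond invoking "the time-sharing technique," and your filled-in details (in particular the observation that slicing the i.i.d.\ non-causally known state sequence across sub-blocks preserves what each constituent code needs) are exactly the right bookkeeping.
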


In Section~\ref{sec_9_mixed}, we will numerically compare the above achievable
rate regions with an inner bound, which is denoted as $\mathcal{C}_{m\_in}$
and defined by the achievable rate region when the transmitters ignore the
non-causal state information. The improvement due to DPC and active
interference cancellation is clearly shown there. We also compare the above
achievable rate regions with an outer bound (denoted by $\mathcal{C}_{m\_o}$),
which is the outer bound derived for the traditional mixed Gaussian
IC~\cite{david}.

\subsection{A Special Case -- Degraded Gaussian IC}
For the Gaussian IC with state information defined in Section \ref{sec_2}, the
channel is called a degraded Gaussian IC with state information if the
interference link gains satisfy $g_{21}g_{12}=1$, which can be viewed as a
special case of the mixed Gaussian IC. For this degraded interference case, we
will show the numerical comparison between the achievable rate regions and the
outer bound in Section~\ref{sec_9_mixed}. Note that the difference from the
general mixed interference case is the evaluation of the outer bound
$\mathcal{C}_{m\_o}$, which is now equal to the outer bound including the sum
capacity for the traditional degraded Gaussian IC~\cite{sason}.
\section{The Weak Gaussian IC with State Information}\label{sec_7}
For the Gaussian IC with state information defined in Section \ref{sec_2}, the
channel is called weak Gaussian IC with state information if the interference
link gains satisfy $g_{21}<1$ and $g_{12}<1$. In this section, we propose
several achievable schemes for the weak Gaussian IC with state information,
and derive the corresponding achievable rate regions. An enlarged achievable
rate region is obtained by combining them with the time-sharing technique.
\subsection{Scheme without Active Interference Cancellation}\label{sec_7_without_cancellation}
We first introduce a simple scheme with fixed power allocation and without
active interference cancellation. It is shown in~\cite{david}
that for the traditional weak Gaussian IC, the achievable rate region is
within one bit of the capacity region if power splitting is chosen such that
the interfered private SNR at each receiver is equal to $1$. In our scheme, we
set the interfered private SNR equal to $1$, utilize sequential decoding, and
optimize the DPC parameters for one of the MACs. Note that now the power
allocation between the common message and private message is fixed, and all
the source power is used to transmit the intended message at both transmitters
instead of being partly allocated to cancel the state effect.
\begin{Theorem}
Let $\mathcal{C}_{w1}$ be the set of all non-negative rate pairs $(R_1,R_2)$
satisfying
\setlength{\arraycolsep}{1pt}{\footnotesize
\begin{eqnarray}
\nonumber R_1 &\leq& \min\left\{\frac{1}{2}\log\left(1+\frac{P_{A_1}}{1+P_{B_1}+g_{12}P_{B_2}}\right),\frac{1}{2}\log\left(\frac{\left(1+P_{B_2}+g_{21}P_{1}\right)\left(1+\frac{\alpha_{20}^2K}{P_{A_2}}\right)+K\left(\alpha_{20}-\frac{1}{\sqrt{N_2}}\right)^2}{\left(1+P_{B_2}+g_{21}P_{B_1}\right)\left(1+\frac{\alpha_{20}^2K}{P_{A_2}}+\frac{\alpha_{10}^2K}{P_{A_1}}\right)+K\left(\alpha_{20}+\alpha_{10}\sqrt{g_{21}}-\frac{1}{\sqrt{N_2}}\right)^2}\right)\right\}\\
\label{eq_Cw1_1} &&+\frac{1}{2}\log\left(1+\frac{P_{B_1}}{1+g_{12}P_{B_2}}\right),\\
\nonumber R_2 &\leq& \min\left\{\frac{1}{2}\log\left(1+\frac{g_{12}P_{A_2}}{1+P_{B_1}+g_{12}P_{B_2}}\right),\frac{1}{2}\log\left(\frac{\left(1+g_{21}P_{B_1}+P_{2}\right)\left(1+\frac{\alpha_{10}^2K}{P_{A_1}}\right)+K\left(\alpha_{10}\sqrt{g_{21}}-\frac{1}{\sqrt{N_2}}\right)^2}{\left(1+P_{B_2}+g_{21}P_{B_1}\right)\left(1+\frac{\alpha_{20}^2K}{P_{A_2}}+\frac{\alpha_{10}^2K}{P_{A_1}}\right)+K\left(\alpha_{20}+\alpha_{10}\sqrt{g_{21}}-\frac{1}{\sqrt{N_2}}\right)^2}\right)\right\}\\
\label{eq_Cw1_2} &&+\frac{1}{2}\log\left(1+\frac{P_{B_2}}{1+g_{21}P_{B_1}}\right),\\
\nonumber R_1+R_2 &\leq& \min\left\{\frac{1}{2}\log\left(1+\frac{P_{A_1}+g_{12}P_{A_2}}{1+P_{B_1}+g_{12}P_{B_2}}\right),\frac{1}{2}\log\left(\frac{1+P_{2}+g_{21}P_{1}+\frac{K}{N_2}}{\left(1+P_{B_2}+g_{21}P_{B_1}\right)\left(1+\frac{\alpha_{20}^2K}{P_{A_2}}+\frac{\alpha_{10}^2K}{P_{A_1}}\right)+K\left(\alpha_{20}+\alpha_{10}\sqrt{g_{21}}-\frac{1}{\sqrt{N_2}}\right)^2}\right)\right\}\\
\label{eq_Cw1_3} &&+\frac{1}{2}\log\left(1+\frac{P_{B_1}}{1+g_{12}P_{B_2}}\right)+\frac{1}{2}\log\left(1+\frac{P_{B_2}}{1+g_{21}P_{B_1}}\right),
\end{eqnarray}}
where $P_{B_1}=\min\{P_1,1/g_{21}\}$, $P_{B_2}=\min\{P_2,1/g_{12}\}$,
$\alpha_{10}=\frac{P_{A_1}}{\sqrt{N_1}(1+P_1+g_{12}P_2)}$, and
$\alpha_{20}=\frac{\sqrt{g_{12}}P_{A_2}}{\sqrt{N_1}(1+P_1+g_{12}P_2)}$. Then
any rate pair $(R_1,R_2) \in \mathcal{C}_{w1}$ is achievable for the weak
Gaussian IC with state information.

Similarly, let $\mathcal{C}_{w2}$ be the set of all non-negative rate pairs
$(R_1,R_2)$ satisfying
\setlength{\arraycolsep}{1pt}{\footnotesize
\begin{eqnarray}
\nonumber R_1 &\leq& \min\left\{\frac{1}{2}\log\left(1+\frac{g_{21}P_{A_1}}{1+P_{B_2}+g_{21}P_{B_1}}\right),\frac{1}{2}\log\left(\frac{\left(1+g_{12}P_{B_2}+P_{1}\right)\left(1+\frac{\alpha_{20}^2K}{P_{A_2}}\right)+K\left(\alpha_{20}\sqrt{g_{12}}-\frac{1}{\sqrt{N_1}}\right)^2}{\left(1+P_{B_1}+g_{12}P_{B_2}\right)\left(1+\frac{\alpha_{10}^2K}{P_{A_1}}+\frac{\alpha_{20}^2K}{P_{A_2}}\right)+K\left(\alpha_{10}+\alpha_{20}\sqrt{g_{12}}-\frac{1}{\sqrt{N_1}}\right)^2}\right)\right\}\\
\label{eq_Cw2_1} &&+\frac{1}{2}\log\left(1+\frac{P_{B_1}}{1+g_{12}P_{B_2}}\right),\\
\nonumber R_2 &\leq& \min\left\{\frac{1}{2}\log\left(1+\frac{P_{A_2}}{1+P_{B_2}+g_{21}P_{B_1}}\right),\frac{1}{2}\log\left(\frac{\left(1+P_{B_1}+g_{12}P_{2}\right)\left(1+\frac{\alpha_{10}^2K}{P_{A_1}}\right)+K\left(\alpha_{10}-\frac{1}{\sqrt{N_1}}\right)^2}{\left(1+P_{B_1}+g_{12}P_{B_2}\right)\left(1+\frac{\alpha_{10}^2K}{P_{A_1}}+\frac{\alpha_{20}^2K}{P_{A_2}}\right)+K\left(\alpha_{10}+\alpha_{20}\sqrt{g_{12}}-\frac{1}{\sqrt{N_1}}\right)^2}\right)\right\}\\
\label{eq_Cw2_2} &&+\frac{1}{2}\log\left(1+\frac{P_{B_2}}{1+g_{21}P_{B_1}}\right),\\
\nonumber R_1+R_2 &\leq& \min\left\{\frac{1}{2}\log\left(1+\frac{P_{A_2}+g_{21}P_{A_1}}{1+P_{B_2}+g_{21}P_{B_1}}\right),\frac{1}{2}\log\left(\frac{1+P_{1}+g_{12}P_{2}+\frac{K}{N_1}}{\left(1+P_{B_1}+g_{12}P_{B_2}\right)\left(1+\frac{\alpha_{10}^2K}{P_{A_1}}+\frac{\alpha_{20}^2K}{P_{A_2}}\right)+K\left(\alpha_{10}+\alpha_{20}\sqrt{g_{12}}-\frac{1}{\sqrt{N_1}}\right)^2}\right)\right\}\\
\label{eq_Cw2_3} &&+\frac{1}{2}\log\left(1+\frac{P_{B_2}}{1+g_{21}P_{B_1}}\right)+\frac{1}{2}\log\left(1+\frac{P_{B_1}}{1+g_{12}P_{B_2}}\right),
\end{eqnarray}}
where $P_{B_1}=\min\{P_1,1/g_{21}\}$, $P_{B_2}=\min\{P_2,1/g_{12}\}$,
$\alpha_{10}=\frac{\sqrt{g_{21}}P_{A_1}}{\sqrt{N_2}(1+P_2+g_{21}P_1)}$, and
$\alpha_{20}=\frac{P_{A_2}}{\sqrt{N_2}(1+P_2+g_{21}P_1)}$. Then any rate pair
$(R_1,R_2) \in \mathcal{C}_{w2}$ is achievable for the weak Gaussian IC with
state information.\label{theorem_weak_1}
\end{Theorem}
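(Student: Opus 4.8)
The plan is to realize $\mathcal{C}_{w1}$ as a Gaussian specialization of the simultaneous-encoding scheme of Theorem~\ref{theorem_1}, combining a fixed power split, Costa-style dirty-paper coding on both message layers, and sequential (rather than joint) decoding. First I would fix the transmit signals as $X_1=A_1+B_1$ and $X_2=A_2+B_2$, i.e.\ set the active-cancellation parameters $\gamma_1=\gamma_2=0$ so that $\mu_1=1/\sqrt{N_1}$ and $\mu_2=1/\sqrt{N_2}$. Here $A_j\sim\mathcal{N}(0,P_{A_j})$ carries the common message, $B_j\sim\mathcal{N}(0,P_{B_j})$ carries the private message, and $P_{A_j}+P_{B_j}=P_j$, with the private powers frozen at $P_{B_1}=\min\{P_1,1/g_{21}\}$ and $P_{B_2}=\min\{P_2,1/g_{12}\}$ so that each interfering private signal reaches the \emph{unintended} receiver with SNR at most one, per the heuristic of~\cite{david}. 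The auxiliary variables are the dirty-paper forms $U_j=A_j+\alpha_{j0}S$ for the common layer and $V_j=B_j+\alpha_{jj}S$ for the private layer, exactly as in Section~\ref{sec_4_active_cancel}.

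Next I would specify the sequential decoder. Each receiver first decodes the common pair $(A_1,A_2)$ as a Gaussian multiple-access channel, treating both private signals and the AWGN as noise of variance $1+P_{B_1}+g_{12}P_{B_2}$ at receiver~1 (resp.\ $1+P_{B_2}+g_{21}P_{B_1}$ at receiver~2), and then, after subtracting the decoded common signals, it decodes its own private codeword. The key structural point is that each private codeword is decoded at exactly one receiver, so its dirty-paper parameter $\alpha_{jj}$ can be chosen MMSE-optimal for that single link; Costa's result then makes the private layer see a state-free point-to-point channel, producing the clean terms $\tfrac{1}{2}\log\!\left(1+P_{B_1}/(1+g_{12}P_{B_2})\right)$ and $\tfrac{1}{2}\log\!\left(1+P_{B_2}/(1+g_{21}P_{B_1})\right)$ that are added on in~\eqref{eq_Cw1_1}--\eqref{eq_Cw1_3}. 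By contrast, each common codeword is decoded at \emph{both} receivers, and the parameters $\alpha_{10},\alpha_{20}$ that are MAC-optimal at receiver~1 are generally suboptimal at receiver~2; this is precisely why only one of the two common MACs can be made state-free, which fixes the choice of $\alpha_{10},\alpha_{20}$ in the statement and forces the other receiver to pay a residual state penalty.

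I would then evaluate the two common-message MAC regions. With $\alpha_{10}=\frac{P_{A_1}}{\sqrt{N_1}(1+P_1+g_{12}P_2)}$ and $\alpha_{20}=\frac{\sqrt{g_{12}}P_{A_2}}{\sqrt{N_1}(1+P_1+g_{12}P_2)}$ chosen MAC-optimal at receiver~1, the state is removed there and the common MAC reduces to the ordinary Gaussian bounds $\tfrac{1}{2}\log\!\left(1+P_{A_1}/(1+P_{B_1}+g_{12}P_{B_2})\right)$, $\tfrac{1}{2}\log\!\left(1+g_{12}P_{A_2}/(1+P_{B_1}+g_{12}P_{B_2})\right)$, and the matching sum bound; these are the first arguments of the minima. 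The second arguments come from the common MAC at receiver~2, where the same (now non-optimal) $\alpha_{10},\alpha_{20}$ are substituted into the Gaussian MAC-with-state mutual informations $I(U_1;Y_2|U_2,Q)$, $I(U_2;Y_2|U_1,Q)$, $I(U_1,U_2;Y_2|Q)$. These are structurally identical to the receiver-2 MAC expressions already computed for $\mathcal{C}_{s1}$ in the proof of Theorem~\ref{theorem_strong_1} and for $\mathcal{C}_{m1}$ in Theorem~\ref{theorem_mixed_1}, except that the white-noise floor $1$ is replaced by the colored floor $1+P_{B_2}+g_{21}P_{B_1}$ coming from the privates treated as noise, and $P_1,P_2$ are replaced by the common powers $P_{A_1},P_{A_2}$; this is what yields the residual terms $K(\alpha_{20}-1/\sqrt{N_2})^2$, $K(\alpha_{10}\sqrt{g_{21}}-1/\sqrt{N_2})^2$, and $K/N_2$. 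Taking the common rate as the minimum of its receiver-1 and receiver-2 bounds and adding the private-layer rate reproduces~\eqref{eq_Cw1_1}--\eqref{eq_Cw1_3}; the region $\mathcal{C}_{w2}$ follows by the symmetric argument that instead makes the common MAC at receiver~2 state-free.

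The step I expect to be the main obstacle is the explicit evaluation of the suboptimal-DPC mutual informations at receiver~2. Because the common auxiliaries $U_1,U_2$ and the privates are all correlated with the single state $S$, each conditional mutual information is a log-ratio of determinants of covariance matrices of jointly Gaussian vectors $(U_1,U_2,S,Y_2)$, and the cross-correlations between the dithers $\alpha_{10}S,\alpha_{20}S$ and the residual state $\tfrac{1}{\sqrt{N_2}}S$ do not cancel when $\alpha_{10},\alpha_{20}$ are not matched to receiver~2. Reducing these determinant ratios to the stated closed forms with the $K(\cdot)^2$ terms, while carefully tracking which private signals are decoded versus treated as noise at each stage of the sequential decoder, is the bulk of the technical work; everything else is a direct specialization of Theorem~\ref{theorem_1} together with Costa's dirty-paper coding, and indeed the computation closely parallels that of Theorems~\ref{theorem_strong_1} and~\ref{theorem_mixed_1}.
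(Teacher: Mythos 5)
Your proposal is correct and follows essentially the same route as the paper's proof: rate splitting with the private powers frozen at $P_{B_j}=\min\{P_j,1/g_{ij}\}$ per the Etkin--Tse--Wang heuristic, Gel'fand--Pinsker/dirty-paper coding with the common-layer parameters optimized for the common-message MAC at receiver 1 (so that MAC is state-free while receiver 2's common MAC pays the mismatched-DPC penalty), sequential decoding of commons then privates, and intersection of the two MAC regions plus the clean Costa private rates. Your added observation that the private layers can always be made state-free because each is decoded at only one receiver, whereas the common layer cannot, is exactly the implicit logic of the paper's construction.
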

\begin{proof}
We only give the detailed proof for $\mathcal{C}_{w1}$ here. Similarly,
$\mathcal{C}_{w2}$ can be obtained by optimizing the DPC parameters for the
common messages at receiver $2$ and letting the common-message MAC at receiver
$1$ suffer from the non-optimal DPC parameters.

Due to the presence of the weak interference, we split the message into common
and private ones at both transmitters. The sequential decoder is utilized at
the receivers, i.e., both receivers first decode both common messages by
treating both private messages as noise, and then decode the intended private
message by treating the interfered private message as noise. For the
common-message MAC at receiver $1$, the capacity region is given as follows:
\begin{eqnarray*}
R_{10} &\leq& \frac{1}{2}\log\left(1+\frac{P_{A_1}}{1+P_{B_1}+g_{12}P_{B_2}}\right),\\
R_{20} &\leq& \frac{1}{2}\log\left(1+\frac{g_{12}P_{A_2}}{1+P_{B_1}+g_{12}P_{B_2}}\right),\\
R_{10}+R_{20} &\leq&
\frac{1}{2}\log\left(1+\frac{P_{A_1}+g_{12}P_{A_2}}{1+P_{B_1}+g_{12}P_{B_2}}\right),
\end{eqnarray*}
where $P_{B_1}=\min\{P_1,1/g_{21}\}$, $P_{B_2}=\min\{P_2,1/g_{12}\}$, and DPC
is utilized for both common messages with the optimal DPC parameters
$\alpha_{10}=\frac{P_{A_1}}{\sqrt{N_1}(1+P_1+g_{12}P_2)}$ and
$\alpha_{20}=\frac{\sqrt{g_{12}}P_{A_2}}{\sqrt{N_1}(1+P_1+g_{12}P_2)}$.
However, the common-message MAC at receiver $2$ suffers from the non-optimal
DPC parameters and has the following achievable rate region:
\begin{eqnarray*}
R_{10} &\leq& \frac{1}{2}\log\left(\frac{\left(1+P_{B_2}+g_{21}P_{1}\right)\left(1+\frac{\alpha_{20}^2K}{P_{A_2}}\right)+K\left(\alpha_{20}-\frac{1}{\sqrt{N_2}}\right)^2}{\left(1+P_{B_2}+g_{21}P_{B_1}\right)\left(1+\frac{\alpha_{20}^2K}{P_{A_2}}+\frac{\alpha_{10}^2K}{P_{A_1}}\right)+K\left(\alpha_{20}+\alpha_{10}\sqrt{g_{21}}-\frac{1}{\sqrt{N_2}}\right)^2}\right),\\
R_{20} &\leq& \frac{1}{2}\log\left(\frac{\left(1+g_{21}P_{B_1}+P_{2}\right)\left(1+\frac{\alpha_{10}^2K}{P_{A_1}}\right)+K\left(\alpha_{10}\sqrt{g_{21}}-\frac{1}{\sqrt{N_2}}\right)^2}{\left(1+P_{B_2}+g_{21}P_{B_1}\right)\left(1+\frac{\alpha_{20}^2K}{P_{A_2}}+\frac{\alpha_{10}^2K}{P_{A_1}}\right)+K\left(\alpha_{20}+\alpha_{10}\sqrt{g_{21}}-\frac{1}{\sqrt{N_2}}\right)^2}\right),\\
R_{10}+R_{20} &\leq&
\frac{1}{2}\log\left(\frac{1+P_{2}+g_{21}P_{1}+\frac{K}{N_2}}{\left(1+P_{B_2}+g_{21}P_{B_1}\right)\left(1+\frac{\alpha_{20}^2K}{P_{A_2}}+\frac{\alpha_{10}^2K}{P_{A_1}}\right)+K\left(\alpha_{20}+\alpha_{10}\sqrt{g_{21}}-\frac{1}{\sqrt{N_2}}\right)^2}\right).
\end{eqnarray*}

Consequently, the IC achievable region for the common messages can be obtained
by intersecting the above regions for the two MACs. After decoding the common
messages, each receiver is capable of decoding the intended private message
with the following rate:
\begin{eqnarray*}
R_{11} &\leq& \frac{1}{2}\log\left(1+\frac{P_{B_1}}{1+g_{12}P_{B_2}}\right),\\
R_{22} &\leq& \frac{1}{2}\log\left(1+\frac{P_{B_2}}{1+g_{21}P_{B_1}}\right).
\end{eqnarray*}

Therefore, after applying the Fourier-Motzkin algorithm, we have the
achievable region $\mathcal{C}_{w1}$ for the weak Gaussian IC with state
information.
\end{proof}
\subsection{Scheme with Active Interference Cancellation}\label{sec_7_with_cancellation}
For the weak Gaussian IC with state information, now we generalize the
previous scheme with active interference cancellation, which allocates part of
the source power to cancel the state effect at the receivers. Specifically,
DPC is used to achieve the capacity for one of the common-message MACs as
shown in Section \ref{sec_5_without_cancellation}, and active interference
cancellation is deployed to cancel the state effect at the receivers. The
corresponding achievable rate regions are provided in the following theorem.
\begin{Theorem}
For any $\gamma_1^2 < P_{A1}/K$ and $\gamma_2^2 < P_{A2}/K$, let
$\mathcal{C}_{w3}(\gamma_1,\gamma_2)$ be the set of all non-negative rate
pairs $(R_1,R_2)$ satisfying
\setlength{\arraycolsep}{1pt}{\tiny
\begin{eqnarray*}
R_1 &\leq& \min\left\{\frac{1}{2}\log\left(1+\frac{P_{A_1}-\gamma_1^2 K}{1+P_{B_1}+g_{12}P_{B_2}}\right),\frac{1}{2}\log\left(\frac{\left(1+P_{B_2}+g_{21}(P_{1}-\gamma_1^2 K)\right)\left(1+\frac{\alpha_{20}^2K}{P_{A_2}-\gamma_2^2 K}\right)+K\left(\alpha_{20}-\mu_2\right)^2}{\left(1+P_{B_2}+g_{21}P_{B_1}\right)\left(1+\frac{\alpha_{20}^2K}{P_{A_2}-\gamma_2^2 K}+\frac{\alpha_{10}^2K}{P_{A_1}-\gamma_1^2 K}\right)+K\left(\alpha_{20}+\alpha_{10}\sqrt{g_{21}}-\mu_2\right)^2}\right)\right\}\\
&&+\frac{1}{2}\log\left(1+\frac{P_{B_1}}{1+g_{12}P_{B_2}}\right),\\
R_2 &\leq& \min\left\{\frac{1}{2}\log\left(1+\frac{g_{12}(P_{A_2}-\gamma_2^2 K)}{1+P_{B_1}+g_{12}P_{B_2}}\right),\frac{1}{2}\log\left(\frac{\left(1+g_{21}P_{B_1}+P_{2}-\gamma_2^2 K\right)\left(1+\frac{\alpha_{10}^2K}{P_{A_1}-\gamma_1^2 K}\right)+K\left(\alpha_{10}\sqrt{g_{21}}-\mu_2\right)^2}{\left(1+P_{B_2}+g_{21}P_{B_1}\right)\left(1+\frac{\alpha_{20}^2K}{P_{A_2}-\gamma_2^2 K}+\frac{\alpha_{10}^2K}{P_{A_1}-\gamma_1^2 K}\right)+K\left(\alpha_{20}+\alpha_{10}\sqrt{g_{21}}-\mu_2\right)^2}\right)\right\}\\
&&+\frac{1}{2}\log\left(1+\frac{P_{B_2}}{1+g_{21}P_{B_1}}\right),\\
R_1+R_2 &\leq& \min\left\{\frac{1}{2}\log\left(1+\frac{P_{A_1}-\gamma_1^2 K+g_{12}(P_{A_2}-\gamma_2^2 K)}{1+P_{B_1}+g_{12}P_{B_2}}\right),\frac{1}{2}\log\left(\frac{1+P_{2}-\gamma_2^2 K+g_{21}(P_{1}-\gamma_1^2 K)+\mu_2^2 K N_2}{\left(1+P_{B_2}+g_{21}P_{B_1}\right)\left(1+\frac{\alpha_{20}^2K}{P_{A_2}-\gamma_2^2 K}+\frac{\alpha_{10}^2K}{P_{A_1}-\gamma_1^2 K}\right)+K\left(\alpha_{20}+\alpha_{10}\sqrt{g_{21}}-\mu_2\right)^2}\right)\right\}\\
&&+\frac{1}{2}\log\left(1+\frac{P_{B_1}}{1+g_{12}P_{B_2}}\right)+\frac{1}{2}\log\left(1+\frac{P_{B_2}}{1+g_{21}P_{B_1}}\right),
\end{eqnarray*}}
where $P_{B_1}=\min\{P_1,1/g_{21}\}$, $P_{B_2}=\min\{P_2,1/g_{12}\}$,
$\alpha_{10}=\frac{\mu_1(P_{A_1}-\gamma_1^2 K)}{\left(1+P_1-\gamma_1^2
K+g_{12}\left(P_2-\gamma_2^2 K\right)\right)}$, and
$\alpha_{20}=\frac{\mu_1\sqrt{g_{12}}(P_{A_2}-\gamma_2^2 K)}{1+P_1-\gamma_1^2
K+g_{12}(P_2-\gamma_2^2 K)}$, which are optimal for the common-message MAC at
receiver $1$. Then any rate pair $(R_1,R_2) \in
\mathcal{C}_{w3}(\gamma_1,\gamma_2)$ is achievable for the weak Gaussian IC
with state information. Moreover, any rate pair in the convex hull (denoted as
$\hat{\mathcal{C}}_{w3}$) of $\mathcal{C}_{w3}(\gamma_1,\gamma_2)$ is also
achievable.

Similarly, for any $\gamma_1^2 < P_{A1}/K$ and $\gamma_2^2 < P_{A2}/K$, let
$\mathcal{C}_{w4}(\gamma_1,\gamma_2)$ be the set of all non-negative rate
pairs $(R_1,R_2)$ satisfying
\setlength{\arraycolsep}{1pt}{\tiny
\begin{eqnarray*}
R_1 &\leq& \min\left\{\frac{1}{2}\log\left(1+\frac{g_{21}(P_{A_1}-\gamma_1^2 K)}{1+P_{B_2}+g_{21}P_{B_1}}\right),\frac{1}{2}\log\left(\frac{\left(1+g_{12}P_{B_2}+P_{1}-\gamma_1^2 K\right)\left(1+\frac{\alpha_{20}^2K}{P_{A_2}-\gamma_2^2 K}\right)+K\left(\alpha_{20}\sqrt{g_{12}}-\mu_1\right)^2}{\left(1+P_{B_1}+g_{12}P_{B_2}\right)\left(1+\frac{\alpha_{10}^2K}{P_{A_1}-\gamma_1^2 K}+\frac{\alpha_{20}^2K}{P_{A_2}-\gamma_2^2 K}\right)+K\left(\alpha_{10}+\alpha_{20}\sqrt{g_{12}}-\mu_1\right)^2}\right)\right\}\\
&&+\frac{1}{2}\log\left(1+\frac{P_{B_1}}{1+g_{12}P_{B_2}}\right),\\
R_2 &\leq& \min\left\{\frac{1}{2}\log\left(1+\frac{P_{A_2}-\gamma_2^2 K}{1+P_{B_2}+g_{21}P_{B_1}}\right),\frac{1}{2}\log\left(\frac{\left(1+P_{B_1}+g_{12}(P_{2}-\gamma_2^2 K)\right)\left(1+\frac{\alpha_{10}^2K}{P_{A_1}-\gamma_1^2 K}\right)+K\left(\alpha_{10}-\mu_1\right)^2}{\left(1+P_{B_1}+g_{12}P_{B_2}\right)\left(1+\frac{\alpha_{10}^2K}{P_{A_1}-\gamma_1^2 K}+\frac{\alpha_{20}^2K}{P_{A_2}-\gamma_2^2 K}\right)+K\left(\alpha_{10}+\alpha_{20}\sqrt{g_{12}}-\mu_1\right)^2}\right)\right\}\\
&&+\frac{1}{2}\log\left(1+\frac{P_{B_2}}{1+g_{21}P_{B_1}}\right),\\
R_1+R_2 &\leq& \min\left\{\frac{1}{2}\log\left(1+\frac{P_{A_2}-\gamma_2^2 K+g_{21}(P_{A_1}-\gamma_1^2 K)}{1+P_{B_2}+g_{21}P_{B_1}}\right),\frac{1}{2}\log\left(\frac{1+P_{1}-\gamma_1^2 K+g_{12}(P_{2}-\gamma_2^2 K)+\mu_1^2 K}{\left(1+P_{B_1}+g_{12}P_{B_2}\right)\left(1+\frac{\alpha_{10}^2K}{P_{A_1}-\gamma_1^2 K}+\frac{\alpha_{20}^2K}{P_{A_2}-\gamma_2^2 K}\right)+K\left(\alpha_{10}+\alpha_{20}\sqrt{g_{12}}-\mu_1\right)^2}\right)\right\}\\
&&+\frac{1}{2}\log\left(1+\frac{P_{B_2}}{1+g_{21}P_{B_1}}\right)+\frac{1}{2}\log\left(1+\frac{P_{B_1}}{1+g_{12}P_{B_2}}\right),
\end{eqnarray*}}
where $P_{B_1}=\min\{P_1,1/g_{21}\}$, $P_{B_2}=\min\{P_2,1/g_{12}\}$,
$\alpha_{10}=\frac{\mu_2\sqrt{g_{21}}(P_{A_1}-\gamma_1^2 K)}{1+P_2-\gamma_2^2
K+g_{21}(P_1-\gamma_1^2 K)}$, and $\alpha_{20}=\frac{\mu_2(P_{A_2}-\gamma_2^2
K)}{1+P_2-\gamma_2^2 K+g_{21}(P_1-\gamma_1^2 K)}$, which are optimal for the
common-message MAC at receiver $2$. Then any rate pair $(R_1,R_2) \in
\mathcal{C}_{w4}(\gamma_1,\gamma_2)$ is achievable for the weak Gaussian IC
with state information. Moreover, any rate pair in the convex hull (denoted as
$\hat{\mathcal{C}}_{w4}$) of $\mathcal{C}_{w4}(\gamma_1,\gamma_2)$ is also
achievable.\label{theorem_weak_2}
\end{Theorem}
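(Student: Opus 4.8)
The plan is to follow the same blueprint as the proof of Theorem~\ref{theorem_weak_1}, regarding that non-cancellation scheme as the $\gamma_1=\gamma_2=0$ specialization and then inserting active interference cancellation through the mechanism of Section~\ref{sec_4_active_cancel}. First I would fix the encoding structure: each transmitter splits its message into a common part carried by $A_j$ and a private part carried by $B_j$, transmits $X_j=A_j+B_j-\gamma_j S$, and applies Gel'fand--Pinsker (dirty-paper) coding to the common auxiliaries $U_j=A_j+\alpha_{j0}S$. The private powers are frozen at $P_{B_1}=\min\{P_1,1/g_{21}\}$ and $P_{B_2}=\min\{P_2,1/g_{12}\}$ so that the interfered private SNR at each receiver equals $1$ as in~\cite{david}, while the cancellation power $\gamma_j^2K$ is drawn from the common budget; this is why the effective common signal power appears as $P_{A_j}-\gamma_j^2K$ and the admissibility condition reads $\gamma_j^2<P_{Aj}/K$.

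Next I would compute the effective received signals. Substituting $X_j=A_j+B_j-\gamma_j S$ into the standard-form channel \eqref{eq_gaussian_model1}--\eqref{eq_gaussian_model2} collapses the state into a single residual term with coefficient $\mu_1=\tfrac{1}{\sqrt{N_1}}-\gamma_1-\gamma_2\sqrt{g_{12}}$ at receiver~$1$ and $\mu_2=\tfrac{1}{\sqrt{N_2}}-\gamma_2-\gamma_1\sqrt{g_{21}}$ at receiver~$2$, matching Section~\ref{sec_4_active_cancel}. With sequential decoding, both receivers first decode the two common messages while treating the private signals $B_1,B_2$ as additional Gaussian noise, and each receiver then decodes its own private message treating the interfering private signal as noise; following the same argument as in Theorem~\ref{theorem_weak_1}, the latter step contributes the fixed increments $\tfrac{1}{2}\log\bigl(1+\tfrac{P_{B_1}}{1+g_{12}P_{B_2}}\bigr)$ and $\tfrac{1}{2}\log\bigl(1+\tfrac{P_{B_2}}{1+g_{21}P_{B_1}}\bigr)$ appended to $R_1$ and $R_2$.

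The heart of the argument is the common-message layer, which is a two-user Gaussian MAC-with-state as seen by each receiver. I would evaluate the Gaussian mutual-information terms of Theorem~\ref{theorem_1}, equivalently the MAC-with-DPC rate expressions already established in Theorem~\ref{theorem_strong_1}, for both receivers. Choosing $\alpha_{10},\alpha_{20}$ as the DPC coefficients that are optimal for the common-message MAC at receiver~$1$ makes that MAC behave as if the state were absent, yielding clean bounds such as $\tfrac{1}{2}\log\bigl(1+\tfrac{P_{A_1}-\gamma_1^2K}{1+P_{B_1}+g_{12}P_{B_2}}\bigr)$ together with their two companions; the common-message MAC at receiver~$2$ then inherits these non-optimal parameters and produces the ratio-form bounds containing the residual-state terms $K(\,\cdot\,-\mu_2)^2$. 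Intersecting the two common-message MAC regions, appending the private increments, and eliminating $R_{10},R_{20}$ (via $R_1=R_{10}+R_{11}$, $R_2=R_{20}+R_{22}$) through the Fourier--Motzkin procedure yields the three displayed inequalities for $\mathcal{C}_{w3}(\gamma_1,\gamma_2)$; the region $\mathcal{C}_{w4}(\gamma_1,\gamma_2)$ follows by the symmetric choice of optimizing the DPC parameters for the common-message MAC at receiver~$2$. Achievability of the convex hulls $\hat{\mathcal{C}}_{w3}$ and $\hat{\mathcal{C}}_{w4}$ is then immediate by time-sharing.

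The step I expect to be the main obstacle is the Gaussian evaluation of the non-optimal-DPC MAC at the second receiver: one must form the joint covariance of the common auxiliaries and the output when these auxiliaries already embed the mismatched coefficients $\alpha_{10},\alpha_{20}$ and the state reappears with the residual coefficient $\mu_2$, and then reduce the resulting log-determinant ratios to the stated single-ratio expressions. The genuine difficulty is bookkeeping rather than conceptual: correctly propagating $\gamma_1,\gamma_2$ through both the effective powers $P_{Aj}-\gamma_j^2K$ and the coefficient $\mu_2$, and checking that the $\gamma_1=\gamma_2=0$ specialization collapses to $\mathcal{C}_{w1}$ and $\mathcal{C}_{w2}$ of Theorem~\ref{theorem_weak_1} provides a reassuring consistency test.
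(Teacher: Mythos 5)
Your proposal is correct and follows essentially the same route as the paper: the paper omits the proof of this theorem, stating only that it mirrors the proof of Theorem~\ref{theorem_weak_1} with active interference cancellation applied at both transmitters, and your argument is precisely that reconstruction (rate splitting with frozen private powers, sequential decoding, optimal DPC for one common-message MAC and mismatched DPC at the other, the substitution $X_j=A_j+B_j-\gamma_j S$ yielding effective common powers $P_{A_j}-\gamma_j^2K$ and residual state coefficients $\mu_1,\mu_2$, followed by intersection, Fourier--Motzkin, and time-sharing). Your consistency check that setting $\gamma_1=\gamma_2=0$ recovers $\mathcal{C}_{w1}$ and $\mathcal{C}_{w2}$ is exactly the observation the paper itself makes after the theorem.
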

The proof is omitted here since it is similar to that of Theorem
\ref{theorem_weak_1} except for applying active interference cancellation to
both users. Moreover, we see that the regions $\mathcal{C}_{w1}$ and
$\mathcal{C}_{w2}$ are equivalent to $\mathcal{C}_{w3}(0,0)$ and
$\mathcal{C}_{w4}(0,0)$, respectively, which again implies that the achievable
scheme without active interference cancellation is only a special case of the
one with active interference cancellation.

As in previous sections, an enlarged achievable rate region can be obtained by
employing the time-sharing technique for any points in
$\mathcal{C}_{w3}(\gamma_1,\gamma_2)$ and
$\mathcal{C}_{w4}(\gamma_1,\gamma_2)$, which is described in the following
corollary.
\begin{Corollary}
The enlarged achievable rate region $\mathcal{C}_{w}$ for the weak Gaussian IC
with state information is given by the closure of the convex hull of
$\left(0,\frac{1}{2}\log\left(1+P_2\right)\right)$,
$\left(\frac{1}{2}\log\left(1+P_1\right),0\right)$, and all $(R_1,R_2)$ in
$\mathcal{C}_{w3}(\gamma_1,\gamma_2)$ and
$\mathcal{C}_{w4}(\gamma_1,\gamma_2)$ for any $\gamma_1^2 < P_{A1}/K$ and
$\gamma_2^2 < P_{A2}/K$.
\end{Corollary}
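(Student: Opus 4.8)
The plan is to establish achievability of $\mathcal{C}_w$ from three ingredients that are already in hand or elementary: the two single-user corner points, the parametrized families $\mathcal{C}_{w3}(\gamma_1,\gamma_2)$ and $\mathcal{C}_{w4}(\gamma_1,\gamma_2)$ supplied by Theorem~\ref{theorem_weak_2}, and a time-sharing argument that realizes their convex hull and closure. First I would verify the corner points. For $\left(\frac{1}{2}\log(1+P_1),0\right)$, let transmitter~$2$ be silent ($X_2=0$); receiver~$1$ then sees the point-to-point Gaussian channel $Y_1 = X_1 + \frac{1}{\sqrt{N_1}}S + Z_1$ with unit-variance noise $Z_1$ and additive state $\frac{1}{\sqrt{N_1}}S$ known non-causally at the transmitter. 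By Costa's dirty-paper coding~\cite{costa_dpc}, this channel has capacity $\frac{1}{2}\log(1+P_1)$, exactly as if the state were absent, so the corner point is achievable; since $R_2=0$, receiver~$2$ decodes nothing. The symmetric construction with transmitter~$1$ silent yields $\left(0,\frac{1}{2}\log(1+P_2)\right)$.

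Theorem~\ref{theorem_weak_2} already guarantees that every rate pair in $\mathcal{C}_{w3}(\gamma_1,\gamma_2)$ and in $\mathcal{C}_{w4}(\gamma_1,\gamma_2)$ is achievable for each admissible pair with $\gamma_1^2<P_{A1}/K$ and $\gamma_2^2<P_{A2}/K$, so the union of all these sets together with the two corner points is achievable. It remains to realize the convex hull, which follows from a standard code-concatenation argument: given two achievable pairs realized by codes $\mathcal{C}^{(1)},\mathcal{C}^{(2)}$ and any $\lambda\in[0,1]$, split a block of length $n$ into sub-blocks of lengths $\lceil\lambda n\rceil$ and $n-\lceil\lambda n\rceil$, apply $\mathcal{C}^{(1)}$ on the first and $\mathcal{C}^{(2)}$ on the second (exploiting the non-causal state on each sub-block separately), and decode the sub-blocks independently. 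The achieved rate pair converges to $\lambda(R_1^{(1)},R_2^{(1)})+(1-\lambda)(R_1^{(2)},R_2^{(2)})$ as $n\to\infty$, with error probability bounded by a union bound over the two sub-blocks, and the per-transmitter average power equals the corresponding convex combination of the constituent powers, hence stays below $P_1$ and $P_2$. Iterating over finitely many codes yields every point of the convex hull, and taking the closure is justified because the achievable-rate set is closed: any limit point is approximated to within $\varepsilon$ by an achievable pair, and a routine rate-backoff/diagonal argument produces a single code sequence with vanishing error.

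The anticipated difficulty is purely one of bookkeeping rather than of new analysis. One must check that time-sharing across codes using \emph{different} power-splitting and active-cancellation parameters $(\gamma_1,\gamma_2)$ and DPC parameters $\alpha_{10},\alpha_{20}$ still respects the overall power constraints, and that the state, specified for the whole block, can be consistently partitioned and used sub-block by sub-block. Both are immediate from the observation that the power constraints and the Gel'fand--Pinsker/DPC encoding act block-wise and are preserved under concatenation and convex averaging. Since each constituent region was already shown achievable in Theorem~\ref{theorem_weak_2}, no fresh error-probability computation is needed beyond the elementary union bound, and the corollary follows.
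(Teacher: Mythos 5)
Your proposal is correct and follows essentially the same route as the paper, which justifies this corollary only by invoking the time-sharing technique applied to the corner points (achievable by dirty-paper coding with one transmitter silent) and the constituent regions $\mathcal{C}_{w3}(\gamma_1,\gamma_2)$ and $\mathcal{C}_{w4}(\gamma_1,\gamma_2)$ already established in Theorem~\ref{theorem_weak_2}. Your additional bookkeeping on block-wise power constraints, sub-block use of the non-causal state, and closedness of the achievable region correctly fills in the details the paper leaves implicit.
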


In Section~\ref{sec_9_weak}, we will numerically compare the above achievable
rate regions with an inner bound, which is denoted as $\mathcal{C}_{w\_in}$
and defined by the achievable rate region when the transmitters ignore the
non-causal state information. We also compare the above achievable rate
regions with an outer bound (denoted by $\mathcal{C}_{w\_o}$), which is the
outer bound derived for the traditional weak Gaussian IC~\cite{david}. Note
that unlike the strong interference case and the mixed interference case,
active interference cancellation cannot enlarge the achievable rate region
significantly for the weak interference case. Intuitively, the reason is that
the source power is too ``precious" to cancel the state effect when the
interference is weak. Therefore, we next modify the scheme to optimize the
power allocation between the common message and the private message at each
transmitter.
\subsection{Scheme with Flexible Power Allocation}\label{sec_7_with_power_allocation}
For the weak Gaussian IC with state information, now we propose a scheme with
flexible power allocation. The corresponding achievable rate regions are
provided in the following theorem.
\begin{Theorem}
For any $\beta_1, \beta_2 \in (0,1)$, let $\mathcal{C}_{w5}(\beta_1,\beta_2)$
be the set of all non-negative rate pairs $(R_1,R_2)$ satisfying
\eqref{eq_Cw1_1}-\eqref{eq_Cw1_3} where $P_{B_1}=\beta_1 P_1$,
$P_{B_2}=\beta_2 P_2$,
$\alpha_{10}=\frac{(1-\beta_1)P_1}{\sqrt{N_1}\left(1+P_1+g_{12}P_2\right)}$,
and
$\alpha_{20}=\frac{\sqrt{g_{12}}(1-\beta_2)P_{2}}{\sqrt{N_1}\left(1+P_1+g_{12}P_2\right)}$,
which are optimal for the common-message MAC at receiver $1$. Then any rate
pair $(R_1,R_2) \in \mathcal{C}_{w5}(\beta_1,\beta_2)$ is achievable for the
weak Gaussian IC with state information. Moreover, any rate pair in the convex
hull (denoted as $\hat{\mathcal{C}}_{w5}$) of
$\mathcal{C}_{w5}(\beta_1,\beta_2)$ is also achievable.

Similarly, for any $\beta_1, \beta_2 \in (0,1)$, let
$\mathcal{C}_{w6}(\beta_1,\beta_2)$ be the set of all non-negative rate pairs
$(R_1,R_2)$ satisfying \eqref{eq_Cw2_1}-\eqref{eq_Cw2_3}, where
$P_{B_1}=\beta_1 P_1$, $P_{B_2}=\beta_2 P_2$,
$\alpha_{10}=\frac{\sqrt{g_{21}}(1-\beta_1)P_{1}}{\sqrt{N_2}\left(1+P_2+g_{21}P_1\right)}$,
and
$\alpha_{20}=\frac{(1-\beta_2)P_{2}}{\sqrt{N_2}\left(1+P_2+g_{21}\right)}$,
which are optimal for the common-message MAC at receiver $2$. Then any rate
pair $(R_1,R_2) \in \mathcal{C}_{w6}(\beta_1,\beta_2)$ is achievable for the
weak Gaussian IC with state information. Moreover, any rate pair in the convex
hull (denoted as $\hat{\mathcal{C}}_{w6}$) of
$\mathcal{C}_{w6}(\beta_1,\beta_2)$ is also achievable.\label{theorem_weak_3}
\end{Theorem}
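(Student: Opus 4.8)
The plan is to reproduce the argument of Theorem~\ref{theorem_weak_1} almost verbatim, the only change being that the private-message powers are now kept as free variables $P_{B_1}=\beta_1 P_1$ and $P_{B_2}=\beta_2 P_2$ instead of being pinned to $\min\{P_1,1/g_{21}\}$ and $\min\{P_2,1/g_{12}\}$, the values that force the interfered private SNR to equal $1$. The key point enabling this is that the achievability proof of Theorem~\ref{theorem_weak_1} never invokes the SNR-equal-to-one rule: it only requires \emph{some} admissible split $P_{A_j}+P_{B_j}=P_j$. Writing $P_{A_j}=(1-\beta_j)P_j$, the quoted coefficients $\alpha_{10}=\frac{(1-\beta_1)P_1}{\sqrt{N_1}(1+P_1+g_{12}P_2)}$ and $\alpha_{20}=\frac{\sqrt{g_{12}}(1-\beta_2)P_2}{\sqrt{N_1}(1+P_1+g_{12}P_2)}$ are exactly the DPC coefficients $\alpha_{10}=\frac{P_{A_1}}{\sqrt{N_1}(1+P_1+g_{12}P_2)}$, $\alpha_{20}=\frac{\sqrt{g_{12}}P_{A_2}}{\sqrt{N_1}(1+P_1+g_{12}P_2)}$ of Theorem~\ref{theorem_weak_1}, so the entire dirty-paper structure is inherited unchanged.

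Concretely, for $\mathcal{C}_{w5}(\beta_1,\beta_2)$ I would split each message into a common and a private part, assign powers $P_{A_j}$ and $P_{B_j}$ respectively, and apply dirty-paper coding to the two common messages with the stated $\alpha_{10},\alpha_{20}$. Sequential decoding is then used: each receiver first decodes both common messages while treating the two private signals $B_1,\sqrt{g_{12}}B_2$ (resp. $B_2,\sqrt{g_{21}}B_1$) as additional Gaussian noise, and afterwards decodes its own private message while treating the interfering private signal as noise. Exactly as in Theorem~\ref{theorem_weak_1}, the choice of $\alpha_{10},\alpha_{20}$ makes the common-message MAC at receiver~$1$ state-free optimal, while the common-message MAC at receiver~$2$ operates with these non-optimal parameters; intersecting the two resulting MAC regions gives the $\min\{\cdot,\cdot\}$ terms of \eqref{eq_Cw1_1}--\eqref{eq_Cw1_3}, and the single-user private rates $\frac12\log(1+\frac{P_{B_1}}{1+g_{12}P_{B_2}})$ and $\frac12\log(1+\frac{P_{B_2}}{1+g_{21}P_{B_1}})$ are added on. Fourier--Motzkin elimination of $R_{10},R_{11},R_{20},R_{22}$ then yields \eqref{eq_Cw1_1}--\eqref{eq_Cw1_3} for $(R_1,R_2)$. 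The region $\mathcal{C}_{w6}(\beta_1,\beta_2)$ is obtained identically after swapping the roles of the two receivers, producing \eqref{eq_Cw2_1}--\eqref{eq_Cw2_3}, and the convex-hull claims follow from the usual time-sharing over the admissible $(\beta_1,\beta_2)$.

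Since every step mirrors the already-established Theorem~\ref{theorem_weak_1}, there is no essentially new difficulty; the one item deserving a check is that the stated $\alpha_{10},\alpha_{20}$ still cancel the state optimally in receiver~$1$'s common-message MAC when $P_{B_1},P_{B_2}$ are arbitrary rather than SNR-normalized. This reduces to the same short quadratic minimization used to justify the DPC parameters in Theorem~\ref{theorem_weak_1}, with the total received power $1+P_1+g_{12}P_2=1+P_{A_1}+g_{12}P_{A_2}+P_{B_1}+g_{12}P_{B_2}$ appearing in the denominator. I expect this bookkeeping to be the most delicate part, though it is entirely routine, and the within-one-bit guarantee of \cite{david} is simply not claimed here, so nothing is lost by relaxing the power split.
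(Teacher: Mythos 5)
Your proposal is correct and follows exactly the route the paper intends: the paper itself omits this proof, stating only that it is the same as that of Theorem~\ref{theorem_weak_1} with the fixed power split $P_{B_j}=\min\{P_j,1/g_{ij}\}$ replaced by the free split $P_{B_j}=\beta_j P_j$, which is precisely what you reproduce (including the inherited DPC parameters with $P_{A_j}=(1-\beta_j)P_j$ and the time-sharing argument for the convex hulls).
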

The proof is omitted here since it is similar to that of Theorem
\ref{theorem_weak_1} except for applying the optimal power allocation between
the common and private messages at both transmitters, which is obtained by
two-dimensional searching and bears the same complexity as the active
interference cancellation scheme in Section~\ref{sec_7_with_cancellation}.
Similarly, an enlarged achievable rate region can be obtained by employing the
time-sharing technique for any points in $\mathcal{C}_{w5}(\beta_1,\beta_2)$
and $\mathcal{C}_{w6}(\beta_1,\beta_2)$, which is described in the following
corollary.
\begin{Corollary}
The enlarged achievable rate region $\hat{\mathcal{C}}_{w}$ for the weak
Gaussian IC with state information is given by the closure of the convex hull
of $\left(0,\frac{1}{2}\log\left(1+P_2\right)\right)$,
$\left(\frac{1}{2}\log\left(1+P_1\right),0\right)$, and all $(R_1,R_2)$ in
$\mathcal{C}_{w5}(\beta_1,\beta_2)$ and $\mathcal{C}_{w6}(\beta_1,\beta_2)$
for any $\beta_1, \beta_2 \in (0,1)$.
\end{Corollary}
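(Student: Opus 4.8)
The plan is to reuse, essentially verbatim, the sequential-decoding architecture from the proof of Theorem~\ref{theorem_weak_1}, altering only the power-splitting rule. At transmitter~$j$ I would split the available power as $P_{A_j}=(1-\beta_j)P_j$ for the common (public) message and $P_{B_j}=\beta_j P_j$ for the private message, and apply Gel'fand--Pinsker/dirty-paper coding to all four sub-messages through the auxiliary variables $U_j=A_j+\alpha_{j0}S$ and $V_j=B_j+\alpha_{jj}S$. Both receivers first decode the two common messages while treating the two private messages as Gaussian noise, and then decode their own private message while treating the interfering private message as noise. Since only the power allocation differs from Theorem~\ref{theorem_weak_1}, every step below is algebraically identical to that proof once $\min\{P_1,1/g_{21}\}$ and $\min\{P_2,1/g_{12}\}$ are replaced by $\beta_1 P_1$ and $\beta_2 P_2$, respectively.

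For the region $\mathcal{C}_{w5}(\beta_1,\beta_2)$ I would first treat the common-message layer as a two-user Gaussian MAC at each receiver. At receiver~$1$ I choose $\alpha_{10}$ and $\alpha_{20}$ to be the Costa-optimal dirty-paper coefficients for this MAC with effective noise $1+P_{B_1}+g_{12}P_{B_2}$; this makes the state disappear from the three receiver-$1$ bounds and produces the first argument of each $\min$ in \eqref{eq_Cw1_1}--\eqref{eq_Cw1_3}. Receiver~$2$ then decodes the same common messages but with these (now non-optimal) parameters, producing the second argument of each $\min$, in which the residual state power $K$ and the coefficients $\alpha_{10},\alpha_{20}$ appear explicitly. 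Intersecting the two MAC regions yields the achievable region for the common-rate pair $(R_{10},R_{20})$.

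Next I would add the private layer. After both common messages are removed, each private message sees an effective point-to-point channel still corrupted by a residual fraction of the state $S$; applying dirty-paper coding to $V_j$ with its own Costa-optimal coefficient $\alpha_{jj}$ cancels this residual, so that the private rates reduce to the clean expressions $R_{11}\le\frac12\log\!\left(1+\frac{P_{B_1}}{1+g_{12}P_{B_2}}\right)$ and $R_{22}\le\frac12\log\!\left(1+\frac{P_{B_2}}{1+g_{21}P_{B_1}}\right)$. Writing $R_1=R_{10}+R_{11}$ and $R_2=R_{20}+R_{22}$ and eliminating $R_{10},R_{11},R_{20},R_{22}$ by the Fourier--Motzkin procedure reproduces exactly \eqref{eq_Cw1_1}--\eqref{eq_Cw1_3}. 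The region $\mathcal{C}_{w6}(\beta_1,\beta_2)$ is obtained symmetrically by optimizing $\alpha_{10},\alpha_{20}$ for the common-message MAC at receiver~$2$ and letting receiver~$1$ suffer the non-optimal parameters, giving \eqref{eq_Cw2_1}--\eqref{eq_Cw2_3}. Achievability of the convex hulls $\hat{\mathcal{C}}_{w5}$ and $\hat{\mathcal{C}}_{w6}$ then follows from the standard time-sharing argument via the variable $Q$ of Theorem~\ref{theorem_1}.

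Because the coding scheme itself is unchanged, I expect no conceptual obstacle; the only point requiring care is verifying that the stated $\alpha_{10}=\frac{(1-\beta_1)P_1}{\sqrt{N_1}(1+P_1+g_{12}P_2)}$ and $\alpha_{20}=\frac{\sqrt{g_{12}}(1-\beta_2)P_2}{\sqrt{N_1}(1+P_1+g_{12}P_2)}$ are genuinely the MMSE/Costa-optimal coefficients for the common-message MAC at receiver~$1$ under the flexible split $P_{A_j}=(1-\beta_j)P_j$, so that the state truly vanishes from the receiver-$1$ bounds. This reduces to checking the single matching identity $\alpha_{10}+\sqrt{g_{12}}\alpha_{20}=\frac{P_{A_1}+g_{12}P_{A_2}}{\sqrt{N_1}(1+P_1+g_{12}P_2)}$, i.e.\ that the combined coefficient equals the Costa factor of the super-user $A_1+\sqrt{g_{12}}A_2$ against noise $1+P_{B_1}+g_{12}P_{B_2}$; this is the same condition as in Theorem~\ref{theorem_weak_1} with $P_{A_j}$ in place of the fixed common powers, and is a routine computation. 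The remaining Fourier--Motzkin elimination is mechanical and identical to that already carried out for $\mathcal{C}_{w1}$.
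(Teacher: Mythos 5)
Your proposal is correct and follows essentially the same route as the paper, which itself justifies this corollary by repeating the proof of Theorem~\ref{theorem_weak_1} with the fixed split $P_{B_j}=\min\{P_j,1/g_{ij}\}$ replaced by $P_{B_j}=\beta_j P_j$ (this is Theorem~\ref{theorem_weak_3}) and then invoking time-sharing over all $(\beta_1,\beta_2)$ and both decoding orders. The only element of the statement you do not explicitly address is the inclusion of the two single-user corner points $\left(0,\tfrac{1}{2}\log(1+P_2)\right)$ and $\left(\tfrac{1}{2}\log(1+P_1),0\right)$ in the convex hull, but these follow immediately from Costa's dirty-paper coding applied to the point-to-point channel obtained by silencing the other transmitter.
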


The numerical comparison between the above achievable rate regions with the
outer bound $\mathcal{C}_{w\_o}$~\cite{david} is shown in
Section~\ref{sec_9_weak}.

\subsection{Scheme with Flexible Sequential Decoder}\label{sec_7_with_different_sequential decoder}
For the sequential decoder of $\mathcal{C}_{w1}$ in
Section~\ref{sec_7_without_cancellation}, each receiver first decodes the
common messages by treating the private messages as noise, then decodes the
intended private message by treating the interfered private message as noise.
Note that we can easily extend the above scheme by changing the decoding
order. For example, receiver $1$ could also decode the intended common message
and private message first, or decode the ``interfered" common message and
intended private message first. Therefore, each receiver has $3$ choices of
different sequential decoders, which means that there are $9$ different
choices with two receivers. Similarly, we could have another $9$ choices based
on the sequential decoder of $\mathcal{C}_{w2}$, which optimizes the DPC
parameter at the MAC for receiver $2$. Finally, we can apply Fourier-Motzkin
algorithm for each implicit achievable rate region corresponding to each
decoder ($18$ different decoders in total), then obtain the explicit
achievable rate regions, and finally deploy the time-sharing technique to
enlarge the achievable rate region. The details are omitted here due to its
similarity to the previous results.

\section{Numerical Results}\label{sec_9}

In this section, we compare the derived various achievable rate regions with
the outer bound, which is the same as the outer bound derived for the
traditional Gaussian IC~\cite{sato,sason,david}, since the traditional IC can
be treated as the idealization of our model where the state is also known at
the receivers. We show the numerical results for three cases: the strong
interference case, the mixed interference case, and the weak interference
case. From the numerical comparison, we can easily see that active
interference cancellation significantly enlarges the achievable rate region for
the strong and mixed interference case. However, for the weak interference
case, flexible power allocation brings more benefit due to the ``preciousness"
of the transmission power.

\subsection{Strong Gaussian IC with State Information}\label{sec_9_strong}
In Fig. \ref{fig_strong_ic_1}, we compare the achievable rate regions in
Section~\ref{sec_5_with_cancellation} with the outer bound
$\mathcal{C}_{s\_o}$, which is the capacity region of the traditional strong
Gaussian IC with the state information also known at the
receivers~\cite{sato}. Note that the inner bound $\mathcal{C}_{s\_in}$ is
defined as the rate region when the transmitters ignore the non-causal state
information. Compared with $\mathcal{C}_{s1}$ and $\mathcal{C}_{s2}$ (only
utilizing DPC), we see that the knowledge of the state information at the
transmitters improves the performance significantly by deploying DPC.
Moreover, it can be easily seen that $\hat{\mathcal{C}}_{s3}$ and
$\hat{\mathcal{C}}_{s4}$ (utilizing DPC and active interference cancellation)
are much bigger than $\mathcal{C}_{s1}$ and $\mathcal{C}_{s2}$, respectively,
which implies that active interference cancellation enlarges the achievable
rate region significantly. Finally, we observe that the achievable rate region
$\mathcal{C}_{s}$ is fairly close to the outer bound, even when the state
power is the same as the source power.
\begin{figure}[!t]
\centering
\includegraphics[width=3.5in]{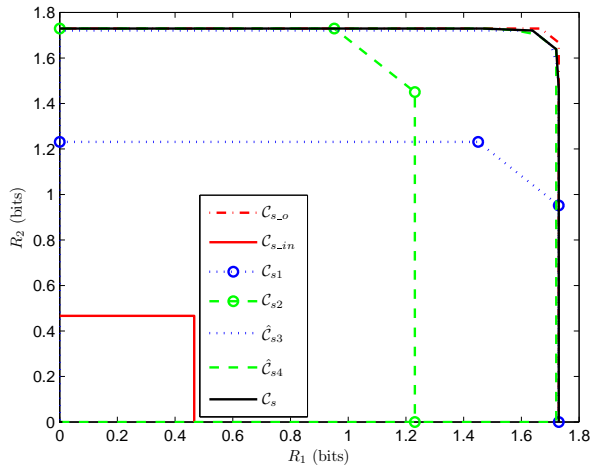}
\caption{Comparison of different achievable rate regions and the outer bound
for the strong Gaussian IC with state information. The channel parameters are
set as: $g_{12} = g_{21} = 10$, $N_1=N_2=1$, $P_1=P_2=K=10\textrm{ dB}$.}
\label{fig_strong_ic_1}
\end{figure}

\subsection{Mixed Gaussian IC with State Information}\label{sec_9_mixed}
In Fig. \ref{fig_mixed_ic_1}, we compare the achievable rate regions in
Section~\ref{sec_6_with_cancellation} with the outer bound
$\mathcal{C}_{m\_o}$, which is the same as the outer bound derived for the
traditional mixed Gaussian IC~\cite{david}. Also we define the inner bound
$\mathcal{C}_{m\_in}$ as the achievable rate region when the transmitters
ignore the non-causal state information. Compared with
$\hat{\mathcal{C}}_{m1}$ and $\mathcal{C}_{m2}$ (only utilizing DPC), we see
that the knowledge of the state information at the transmitters enlarges the
achievable rate region significantly due to DPC. Furthermore, it can be easily
seen that $\hat{\mathcal{C}}_{m3}$ and $\hat{\mathcal{C}}_{m4}$ (utilizing DPC
and active interference cancellation) are much larger than
$\hat{\mathcal{C}}_{m1}$ and $\mathcal{C}_{m2}$, respectively, which implies
that active interference cancellation improves the performance significantly.
\begin{figure}[!t]
\centering
\includegraphics[width=3.5in]{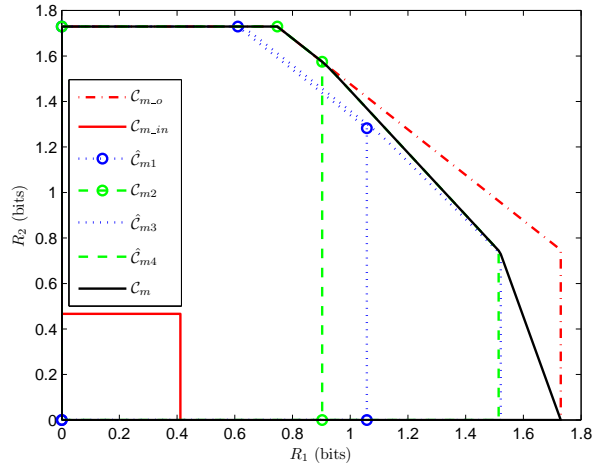}
\caption{Comparison of different achievable rate regions and the outer bound
for the mixed Gaussian IC with state information. The channel parameters are
set as: $g_{12}=0.2$, $g_{21} = 2$, $N_1=N_2=1$, $P_1=P_2=K=10\textrm{ dB}$.}
\label{fig_mixed_ic_1}
\end{figure}

For the degraded Gaussian IC with state information, we compare the achievable
rate regions with the outer bound $\mathcal{C}_{m\_o}$ and the inner bound
$\mathcal{C}_{m\_in}$ in Fig. \ref{fig_degraded_ic_1}. Note that the difference from the general mixed interference case is that the outer bound $\mathcal{C}_{m\_o}$ now includes the sum capacity~\cite{sason}. Similar to the general mixed interference case, active interference cancellation improves the performance significantly when the interference is degraded.
\begin{figure}[!t]
\centering
\includegraphics[width=3.5in]{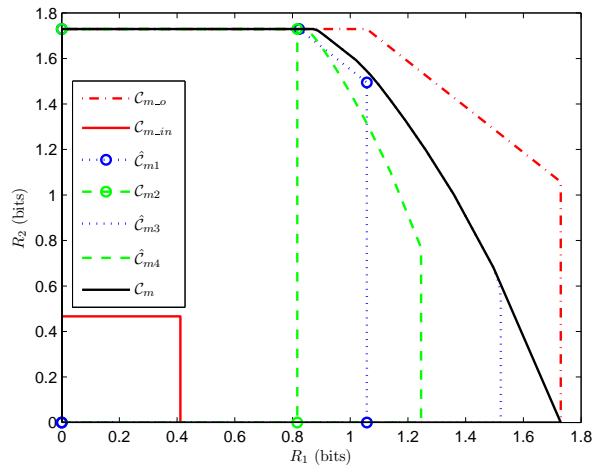}
\caption{Comparison of different achievable rate regions and the outer bound
for the degraded Gaussian IC with state information. The channel parameters
are set as: $g_{12}=0.2$, $g_{21} = 5$, $N_1=N_2=1$, $P_1=P_2=K=10\textrm{
dB}$.} \label{fig_degraded_ic_1}
\end{figure}
\subsection{Weak Gaussian IC with State Information}\label{sec_9_weak}
In Fig. \ref{fig_weak_ic_1}, we compare the achievable rate regions in
Section~\ref{sec_7_with_cancellation} with the outer bound
$\mathcal{C}_{w\_o}$, which is the same as the outer bound derived for the
traditional weak Gaussian IC~\cite{david}. Also define the inner bound
$\mathcal{C}_{w\_in}$ as the achievable rate region when the transmitters
ignore the non-causal state information. Compared with $\mathcal{C}_{w1}$ and
$\mathcal{C}_{w2}$ (only utilizing DPC), we see that the knowledge of the
state information at the transmitters improves the performance significantly
due to DPC. However, $\hat{\mathcal{C}}_{w3}$ and $\hat{\mathcal{C}}_{w4}$
(utilizing DPC and active interference cancellation) are only slightly larger
than $\mathcal{C}_{w1}$ and $\mathcal{C}_{w2}$, i.e., unlike the strong
interference case and the mixed interference case, active interference
cancellation cannot enlarge the achievable rate region significantly for the
weak interference case. Intuitively, the reason is that the source power is
too ``precious" to be used for canceling the state effect if the interference
is weak.
\begin{figure}[!t]
\centering
\includegraphics[width=3.5in]{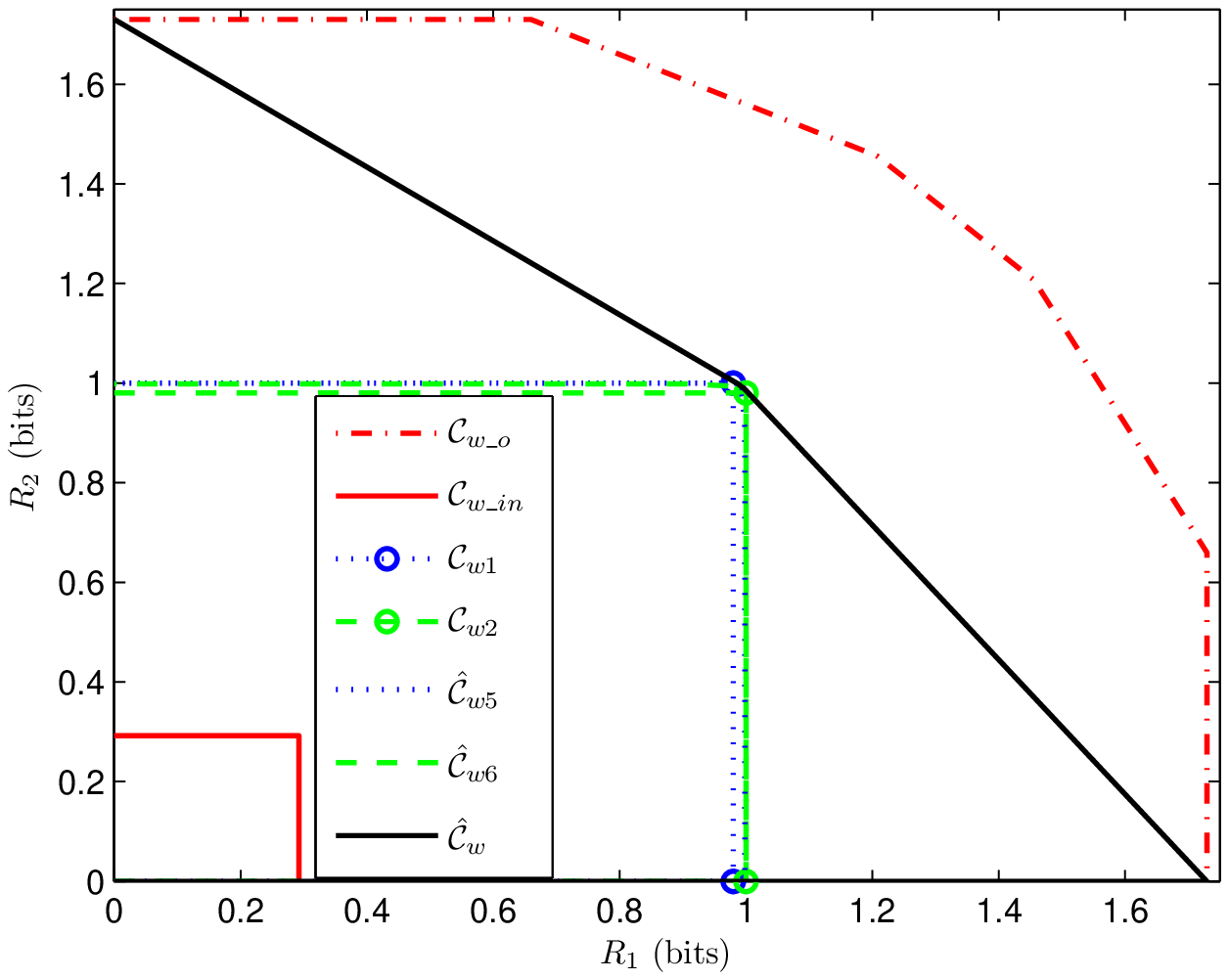}
\caption{Comparison of different achievable rate regions and the outer bound
for the weak interference Gaussian IC with state information. The channel
parameters are set as: $g_{12}=g_{21}=0.2$, $N_1=N_2=1$, $P_1=P_2=K=10\textrm{
dB}$.} \label{fig_weak_ic_1}
\end{figure}

In Fig. \ref{fig_weak_ic_2}, we compare the achievable rate regions of the
flexible power allocation schemes in Section~\ref{sec_7_with_power_allocation}
with the outer bound $\mathcal{C}_{w\_o}$ and the inner bound
$\mathcal{C}_{w\_in}$. It can be easily seen that $\hat{\mathcal{C}}_{w5}$ and
$\hat{\mathcal{C}}_{w6}$ (both utilizing DPC and flexible power allocation)
are much larger than $\mathcal{C}_{w1}$ and $\mathcal{C}_{w2}$, respectively,
i.e., flexible power allocation between the common and private messages
enlarges the achievable rate region significantly for the weak interference
case.
\begin{figure}[!t]
\centering
\includegraphics[width=3.5in]{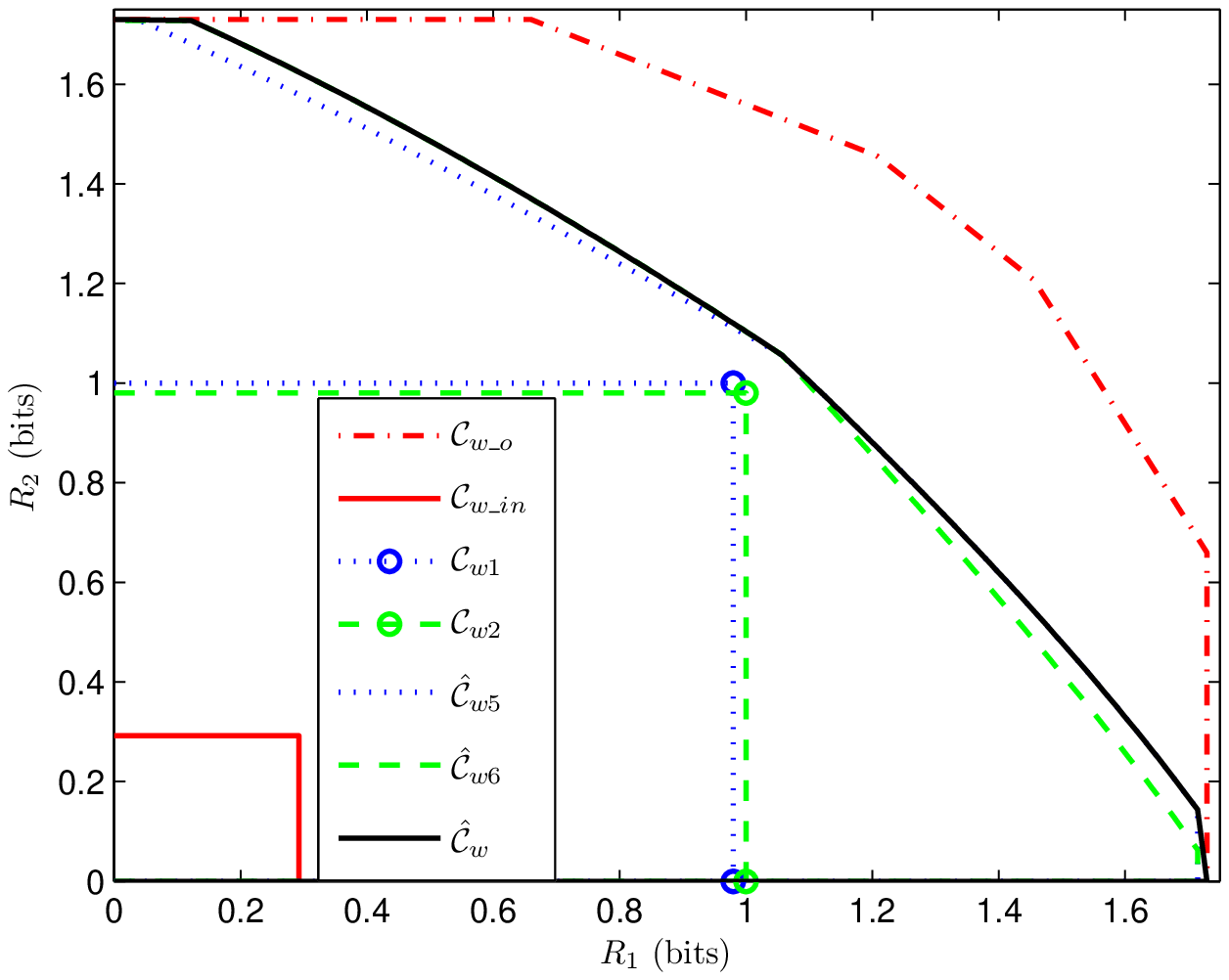}
\caption{Comparison of different achievable rate regions and the outer bound
for the weak interference Gaussian IC with state information. The channel
parameters are set as: $g_{12}=g_{21}=0.2$, $N_1=N_2=1$, $P_1=P_2=K=10\textrm{
dB}$.} \label{fig_weak_ic_2}
\end{figure}

\section{Conclusion}\label{sec_conclusion}
We considered the interference channel with state information non-causally
known at both transmitters. Two achievable rate regions were established for
the general cases based on two coding schemes with simultaneous encoding and
superposition encoding, respectively. We also studied the corresponding
Gaussian case and proposed the \emph{active interference cancellation}
mechanism, which generalizes the dirty paper coding technique, to partially
eliminate the state effect at the receivers. Several achievable schemes were
proposed and the corresponding achievable rate regions were derived for
the strong interference case, the mixed interference case, and the weak
interference case. The numerical results showed that active
interference cancellation significantly improves the performance for the strong
and mixed interference case, and flexible power splitting significantly
enlarges the achievable rate region for the weak interference case. 
\appendices
\section{Proof For Theorem \ref{theorem_1}}\label{appendix_1}
The achievable coding scheme for Theorem \ref{theorem_1} can be described as follows:

Codebook generation: Fix the probability distribution
$p(q)p(u_1|q,s)p(v_1|q,s)p(u_2|q,s)p(v_2|q,s)$. Also define the following
function for the $j$th user that maps $\mathcal{U}_j\mathcal{\times
V}_j\mathcal{\times S}$ to $\mathcal{X}_j$:
\[
x_{ji}=F_j(u_{ji},v_{ji},s_i),
\]
where $i$ is the element index of each sequence.

First generate the time-sharing sequence $q^n \sim \prod_{i=1}^{n}
p_{Q}(q_i)$. For the $j$th user, $u_{j}^{n}(m_{j0},l_{j0})$ is randomly and
conditionally independently generated according to
$\prod_{i=1}^{n}p_{U_j|Q}(u_{ji}|q_i)$, for $m_{j0} \in
\{1,2,\cdots,2^{nR_{j0}}\}$ and $l_{j0} \in \{1,2,\cdots,2^{nR_{j0}'}\}$.
Similarly, $v_{j}^{n}(m_{jj},l_{jj})$ is randomly and conditionally
independently generated according to $\prod_{i=1}^{n}p_{V_j|Q}(v_{ji}|q_i)$,
for $m_{jj} \in \{1,2,\cdots,2^{nR_{jj}}\}$ and $l_{jj} \in
\{1,2,\cdots,2^{nR_{jj}'}\}$.

Encoding: To send the message $m_j=(m_{j0},m_{jj})$, the $j$th encoder first
tries to find the pair $(l_{j0},l_{jj})$ such that the following joint
typicality holds: $(q^n,u_j^n(m_{j0},l_{j0}),s^n) \in T_\epsilon^{(n)}$ and
$(q^n,v_j^n(m_{jj},l_{jj}),s^n) \in T_\epsilon^{(n)}$. If successful,
$(q^n,u_j^n(m_{j0},l_{j0}),v_j^n(m_{jj},l_{jj}),s^n)$ is also jointly typical
with high probability, and the $j$th encoder sends $x_j$ where the $i$th
element is $x_{ji} = F_j(u_{ji}(m_{j0},l_{j0}),v_{ji}(m_{jj},l_{jj}),s_{i})$.
If not, the $j$th encoder transmits $x_j$ where the $i$th element is $x_{ji} =
F_j(u_{ji}(m_{j0},1),v_{ji}(m_{jj},1),s_{i})$.

Decoding: Decoder $1$ finds the unique message pair
$(\hat{m}_{10},\hat{m}_{11})$ such that
$(q^n,u_1^n(\hat{m}_{10},\hat{l}_{10}),u_2^n(\hat{m}_{20},\hat{l}_{20}),v_{1}^{n}(\hat{m}_{11},\\\hat{l}_{11}),y_1^n)
\in T_\epsilon^{(n)}$ for some $\hat{l}_{10} \in \{1,2,\cdots,2^{nR_{10}'}\}$,
$\hat{m}_{20} \in \{1,2,\cdots,2^{nR_{20}}\}$,
$\hat{l}_{20}\in\{1,2,\cdots,2^{nR_{20}'}\}$, and $\hat{l}_{11}\in
\{1,2,\cdots,2^{nR_{11}'}\}$. If no such unique pair exists, the decoder
declares an error. Decoder $2$ determines the unique message pair
$(\hat{m}_{20},\hat{m}_{22})$ in a similar way.

Analysis of probability of error: Here the probability of error is the same
for each message pair since the transmitted message pair is chosen with a
uniform distribution over the message set. Without loss of generality, we
assume $(1,1)$ for user $1$ and $(1,1)$ for user $2$ are sent over the
channel. First, we consider the encoding error probability at transmitter $1$.
Define the following error events:
\setlength{\arraycolsep}{2pt}{\small\begin{eqnarray*} \xi_1 &=&
\left\{\left(q^n,u_1^n\left(1,l_{10}\right),s^n\right) \notin T_\epsilon^{(n)}
\textrm{ for all } l_{10} \in
\{1,2,\cdots,2^{nR_{10}'}\}\right\},\\
\xi_2 &=& \left\{\left(q^n,v_1^n\left(1,l_{11}\right),s^n\right) \notin
T_\epsilon^{(n)} \textrm{ for all } l_{11} \in
\{1,2,\cdots,2^{nR_{11}'}\}\right\}.
\end{eqnarray*}}

The probability of the error event $\xi_1$ can be bounded as follows:
\begin{eqnarray*}
P(\xi_1) &=& \prod_{l_{10}=1}^{2^{nR_{10}'}} \left(1-P\left(\left\{\left(q^n,u_1^n\left(1,l_{10}\right),s^n\right) \in T_\epsilon^{(n)}\right\}\right)\right)\\
&\leq& \left(1-2^{-n\left(I(U_1;S|Q)+\delta_1(\epsilon)\right)}\right)^{2^{nR_{10}'}}\\
&\leq& e^{-2^{n\left(R_{10}'-I(U_1;S|Q)+\delta_1(\epsilon)\right)}},
\end{eqnarray*}
where $\delta_1(\epsilon)\to 0$ as $\epsilon \to 0$. Therefore, the
probability of $\xi_1$ goes to $0$ as $n\to\infty$ if
\begin{equation}\label{eq_prob_xi1}
R_{10}' \geq I(U_1;S|Q).
\end{equation}
Similarly, the probability of $\xi_2$ can also be upper-bounded by an
arbitrarily small number as $n\to\infty$ if
\begin{equation}\label{eq_prob_xi2}
R_{11}' \geq I(V_1;S|Q).
\end{equation}

The encoding error probability at transmitter $1$ can be calculated as:
\[
P_{\textrm{enc}1} = P\left(\xi_1 \cup \xi_2\right) \leq P(\xi_1) + P(\xi_2),
\]
which goes to $0$ as $n\to \infty$ if \eqref{eq_prob_xi1} and \eqref{eq_prob_xi2} are satisfied.

Now we consider the error analysis at decoder $1$. Denote the right
Gel'fand-Pinsker coding indices chosen by the encoders as $(L_{10},L_{11})$
and $(L_{20},L_{22})$. Define the following error events:
\setlength{\arraycolsep}{2pt}{\small
\begin{eqnarray*}
\xi_{31} &=&  \left\{\left(q^n,u_1^n\left(1,L_{10}\right),u_2^n\left(1,L_{20}\right),v_1^n\left(m_{11},l_{11}\right),y_1^n\right) \in T_\epsilon^{(n)} \textrm{ for } m_{11} \neq 1, \textrm{ and some } l_{11}\right\},\\
\xi_{32} &=&  \left\{\left(q^n,u_1^n\left(1,L_{10}\right),u_2^n\left(1,l_{20}\right),v_1^n\left(m_{11},l_{11}\right),y_1^n\right) \in T_\epsilon^{(n)} \textrm{ for } m_{11} \neq 1, \textrm{ and some } l_{11},l_{20}\neq L_{20}\right\},\\
\xi_{33} &=&  \left\{\left(q^n,u_1^n\left(1,l_{10}\right),u_2^n\left(1,L_{20}\right),v_1^n\left(m_{11},l_{11}\right),y_1^n\right) \in T_\epsilon^{(n)} \textrm{ for } m_{11} \neq 1, \textrm{ and some } l_{11},l_{10}\neq L_{10}\right\},\\
\xi_{34} &=&  \left\{\left(q^n,u_1^n\left(1,l_{10}\right),u_2^n\left(1,l_{20}\right),v_1^n\left(m_{11},l_{11}\right),y_1^n\right) \in T_\epsilon^{(n)} \textrm{ for } m_{11} \neq 1, \textrm{ and some } l_{11},l_{10}\neq L_{10},l_{20}\neq L_{20}\right\},\\
\xi_{41} &=& \left\{\left(q^n,u_1^n\left(m_{10},l_{10}\right),u_2^n\left(1,L_{20}\right),v_1^n\left(1,L_{11}\right),y_1^n\right) \in T_\epsilon^{(n)} \textrm{ for } m_{10} \neq 1, \textrm{ and some } l_{10}\right\},\\
\xi_{42} &=& \left\{\left(q^n,u_1^n\left(m_{10},l_{10}\right),u_2^n\left(1,l_{20}\right),v_1^n\left(1,L_{11}\right),y_1^n\right) \in T_\epsilon^{(n)} \textrm{ for } m_{10} \neq 1, \textrm{ and some } l_{10},l_{20}\neq L_{20}\right\},\\
\xi_{43} &=& \left\{\left(q^n,u_1^n\left(m_{10},l_{10}\right),u_2^n\left(1,L_{20}\right),v_1^n\left(1,l_{11}\right),y_1^n\right) \in T_\epsilon^{(n)} \textrm{ for } m_{10} \neq 1, \textrm{ and some } l_{10},l_{11}\neq L_{11}\right\},\\
\xi_{44} &=& \left\{\left(q^n,u_1^n\left(m_{10},l_{10}\right),u_2^n\left(1,l_{20}\right),v_1^n\left(1,l_{11}\right),y_1^n\right) \in T_\epsilon^{(n)} \textrm{ for } m_{10} \neq 1, \textrm{ and some } l_{10},l_{20}\neq L_{20},l_{11}\neq L_{11}\right\},\\
\xi_{51} &=& \left\{\left(q^n,u_1^n\left(m_{10},l_{10}\right),u_2^n\left(1,L_{20}\right),v_1^n\left(m_{11},l_{11}\right),y_1^n\right) \in T_\epsilon^{(n)} \textrm{ for } m_{10} \neq 1,\ m_{11} \neq 1, \textrm{ and some } l_{10},l_{11}\right\},\\
\xi_{52} &=& \left\{\left(q^n,u_1^n\left(m_{10},l_{10}\right),u_2^n\left(1,l_{20}\right),v_1^n\left(m_{11},l_{11}\right),y_1^n\right) \in T_\epsilon^{(n)} \textrm{ for } m_{10} \neq 1,\ m_{11} \neq 1, \textrm{ and some } l_{10},l_{11},l_{20}\neq L_{20}\right\},\\
\xi_{61} &=& \left\{\left(q^n,u_1^n\left(1,L_{10}\right),u_2^n\left(m_{20},l_{20}\right),v_1^n\left(m_{11},l_{11}\right),y_1^n\right) \in T_\epsilon^{(n)} \textrm{ for } m_{20} \neq 1,\ m_{11} \neq 1, \textrm{ and some } l_{20},l_{11}\right\},\\
\xi_{62} &=& \left\{\left(q^n,u_1^n\left(1,l_{10}\right),u_2^n\left(m_{20},l_{20}\right),v_1^n\left(m_{11},l_{11}\right),y_1^n\right) \in T_\epsilon^{(n)} \textrm{ for } m_{20} \neq 1,\ m_{11} \neq 1, \textrm{ and some } l_{20},l_{11},l_{10}\neq L_{10}\right\},\\
\xi_{71} &=& \left\{\left(q^n,u_1^n\left(m_{10},l_{10}\right),u_2^n\left(m_{20},l_{20}\right),v_1^n\left(1,L_{11}\right),y_1^n\right) \in T_\epsilon^{(n)} \textrm{ for } m_{10} \neq 1,\ m_{20} \neq 1, \textrm{ and some } l_{10},l_{20}\right\},\\
\xi_{72} &=& \left\{\left(q^n,u_1^n\left(m_{10},l_{10}\right),u_2^n\left(m_{20},l_{20}\right),v_1^n\left(1,l_{11}\right),y_1^n\right) \in T_\epsilon^{(n)} \textrm{ for } m_{10} \neq 1,\ m_{20} \neq 1, \textrm{ and some } l_{10},l_{20},l_{11}\neq L_{11}\right\},\\
\xi_8 &=& \big\{\left(q^n,u_1^n\left(m_{10},l_{10}\right),u_2^n\left(m_{20},l_{20}\right),v_1^n\left(m_{11},l_{11}\right),y_1^n\right) \in T_\epsilon^{(n)} \textrm{ for } m_{10} \neq 1,\ m_{20} \neq 1,\ m_{11}\neq 1, \\ && \textrm{ and some } l_{10},l_{20},l_{11}\big\}.
\end{eqnarray*}
} The probability of $\xi_{31}$ can be bounded as:
\begin{eqnarray*}
P(\xi_{31}) &=& \sum_{m_{11}=2}^{2^{nR_{11}}}\ \sum_{l_{11}=1}^{2^{R_{11}'}} P\left(\{\left(q^n,u_1^n\left(1,L_{10}\right),u_2^n\left(1,L_{20}\right),v_1^n\left(m_{11},l_{11}\right),y_1^n\right) \in T_\epsilon^{(n)}\}\right)\\
&\leq& 2^{n\left(R_{11}+R_{11}'\right)} \sum_{(q^n,u_1^n,u_2^n,v_1^n,y_1^n)\in T_\epsilon^{(n)}} p(q^n)p(u_1^n|q^n)p(u_2^n|q^n)p(v_1^n|q^n)p(y_1^n|u_1^n,u_2^n,q^n)\\
&\leq& 2^{n\left(R_{11}+R_{11}'\right)} 2^{-n\left(H(Q)+H(U_1|Q)+H(U_2|Q)+H(V_1|Q)+H(Y_1|U_1,U_2,Q)-H(Q,U_1,U_2,V_1,Y_1)-\delta_2(\epsilon)\right)}\\
&\leq& 2^{n\left(R_{11}+R_{11}'\right)} 2^{-n\left(I(U_1;U_2|Q)+I(U_1,U_2;V_1|Q)+I(V_1;Y_1|U_1,U_2,Q)-\delta_2(\epsilon)\right)},
\end{eqnarray*}
where $\delta_2(\epsilon)\to 0$ as $\epsilon \to 0$. Obviously, the
probability that $\xi_{31}$ happens goes to $0$ if
\begin{equation}\label{eq_prob_xi31}
R_{11}+R_{11}' \leq I(U_1;U_2|Q)+I(U_1,U_2;V_1|Q)+I(V_1;Y_1|U_1,U_2,Q).
\end{equation}
Similarly, the error probability corresponding to the other
error events goes to $0$, if \setlength{\arraycolsep}{2pt}{\small
\begin{eqnarray}
\label{eq_prob_xi32} R_{11}+R_{11}'+R_{20}' &\leq& I(U_1;U_2|Q)+I(U_1,U_2;V_1|Q)+I(V_1,U_2;Y_1|U_1,Q),\\
\label{eq_prob_xi33} R_{11}+R_{10}'+R_{11}' &\leq& I(U_1;U_2|Q)+I(U_1,U_2;V_1|Q)+I(U_1,V_1;Y_1|U_2,Q),\\
\label{eq_prob_xi34} R_{11}+R_{10}'+R_{11}'+R_{20}' &\leq& I(U_1;U_2|Q)+I(U_1,U_2;V_1|Q)+I(U_1,V_1,U_2;Y_1|Q),\\
\label{eq_prob_xi41} R_{10}+R_{10}' &\leq& I(U_1;U_2|Q)+I(U_1,U_2;V_1|Q)+I(U_1;Y_1|V_1,U_2,Q),\\
\label{eq_prob_xi42} R_{10}+R_{10}'+R_{20}' &\leq& I(U_1;U_2|Q)+I(U_1,U_2;V_1|Q)+I(U_1,U_2;Y_1|V_1,Q),\\
\label{eq_prob_xi43} R_{10}+R_{10}'+R_{11}' &\leq& I(U_1;U_2|Q)+I(U_1,U_2;V_1|Q)+I(U_1,V_1;Y_1|U_2,Q),\\
\label{eq_prob_xi44} R_{10}+R_{10}'+R_{11}'+R_{20}' &\leq& I(U_1;U_2|Q)+I(U_1,U_2;V_1|Q)+I(U_1,V_1,U_2;Y_1|Q),\\
\label{eq_prob_xi51} R_{10}+R_{11}+R_{10}'+R_{11}' &\leq& I(U_1;U_2|Q)+I(U_1,U_2;V_1|Q)+I(U_1,V_1;Y_1|U_2,Q), \\
\label{eq_prob_xi52} R_{10}+R_{11}+R_{10}'+R_{11}'+R_{20}' &\leq& I(U_1;U_2|Q)+I(U_1,U_2;V_1|Q)+I(U_1,V_1,U_2;Y_1|Q), \\
\label{eq_prob_xi61} R_{11}+R_{20}+R_{11}'+R_{20}' &\leq& I(U_1;U_2|Q)+I(U_1,U_2;V_1|Q)+I(V_1,U_2;Y_1|U_1,Q), \\
\label{eq_prob_xi62} R_{11}+R_{20}+R_{10}'+R_{11}'+R_{20}' &\leq& I(U_1;U_2|Q)+I(U_1,U_2;V_1|Q)+I(U_1,V_1,U_2;Y_1|Q), \\
\label{eq_prob_xi71} R_{10}+R_{20}+R_{10}'+R_{20}' &\leq& I(U_1;U_2|Q)+I(U_1,U_2;V_1|Q)+I(U_1,U_2;Y_1|V_1,Q), \\
\label{eq_prob_xi72} R_{10}+R_{20}+R_{10}'+R_{11}'+R_{20}' &\leq& I(U_1;U_2|Q)+I(U_1,U_2;V_1|Q)+I(U_1,V_1,U_2;Y_1|Q), \\
\label{eq_prob_xi8} R_{10}+R_{11}+R_{20}+R_{10}'+R_{11}'+R_{20}' &\leq& I(U_1;U_2|Q)+I(U_1,U_2;V_1|Q)+I(U_1,V_1,U_2;Y_1|Q).
\end{eqnarray}}
Note that there are some redundant inequalities in
\eqref{eq_prob_xi31}-\eqref{eq_prob_xi8}: \eqref{eq_prob_xi32} is implied by
\eqref{eq_prob_xi61}; \eqref{eq_prob_xi33} is implied by \eqref{eq_prob_xi51};
\eqref{eq_prob_xi42} is implied by \eqref{eq_prob_xi71}; \eqref{eq_prob_xi43}
is implied by \eqref{eq_prob_xi51}; \eqref{eq_prob_xi34},
\eqref{eq_prob_xi44}, \eqref{eq_prob_xi52}, \eqref{eq_prob_xi62}, and
\eqref{eq_prob_xi72} are implied by \eqref{eq_prob_xi8}. By combining with the
error analysis at the encoder, we can recast the rate constraints
\eqref{eq_prob_xi31}-\eqref{eq_prob_xi8} as:
\setlength{\arraycolsep}{2pt}{\small
\begin{eqnarray*}
R_{11} &\leq& I(U_1;U_2|Q)+I(U_1,U_2;V_1|Q)+I(V_1;Y_1|U_1,U_2,Q)-I(V_1;S|Q),\\
R_{10} &\leq& I(U_1;U_2|Q)+I(U_1,U_2;V_1|Q)+I(U_1;Y_1|V_1,U_2,Q)-I(U_1;S|Q),\\
R_{10}+R_{11} &\leq& I(U_1;U_2|Q)+I(U_1,U_2;V_1|Q)+I(U_1,V_1;Y_1|U_2,Q)-I(U_1;S|Q)-I(V_1;S|Q),\\
R_{11}+R_{20} &\leq& I(U_1;U_2|Q)+I(U_1,U_2;V_1|Q)+I(V_1,U_2;Y_1|U_1,Q)-I(V_1;S|Q)-I(U_2;S|Q),\\
R_{10}+R_{20} &\leq& I(U_1;U_2|Q)+I(U_1,U_2;V_1|Q)+I(U_1,U_2;Y_1|V_1,Q)-I(U_1;S|Q)-I(U_2;S|Q),\\
R_{10}+R_{11}+R_{20} &\leq& I(U_1;U_2|Q)+I(U_1,U_2;V_1|Q)+I(U_1,V_1,U_2;Y_1|Q)-I(U_1;S|Q)-I(V_1;S|Q)-I(U_2;S|Q).
\end{eqnarray*}}

The error analysis for transmitter $2$ and decoder $2$ is similar to the above
procedures and is omitted here. Correspondingly, \eqref{eq_rate_constraint_21}
to \eqref{eq_rate_constraint_26} show the rate constraints for user $2$.
In addition, the right sides of the inequalities \eqref{eq_rate_constraint_11}
to \eqref{eq_rate_constraint_26} are guaranteed to be non-negative when
choosing the probability distribution. As long as
\eqref{eq_rate_constraint_11} to \eqref{eq_rate_constraint_26} are satisfied,
the probability of error can be bounded by the sum of the error probability at
the encoders and the decoders, which goes to $0$ as $n\to\infty$.

\section{Proof For Theorem \ref{theorem_2}}\label{appendix_2}
The achievable coding scheme for Theorem \ref{theorem_2} can be described as follows:

Codebook generation: Fix the probability distribution
$p(q)p(u_1|s,q)p(v_1|u_1,s,q)p(u_2|s,q)p(v_2|u_2,s,q)$. First generate the
time-sharing sequence $q^n\sim\prod_{i=1}^{n} p_{Q}(q_i)$. For the $j$th user,
$u_j^{n}(m_{j0},l_{j0})$ is randomly and conditionally independently generated
according to $\prod_{i=1}^{n}p_{U_j|Q}(u_{ji}|q_i)$, for $m_{j0} \in
\{1,2,\cdots,2^{nR_{j0}}\}$ and $l_{j0}\in \{1,2,\cdots,2^{nR_{j0}'}\}$. For
each $u_j^n(m_{j0},l_{j0})$, $v_{j}^{n}(m_{j0},l_{j0},m_{jj},l_{jj})$ is
randomly and conditionally independently generated according to
$\prod_{i=1}^{n}p_{V_j|U_j,Q}(v_{ji}|u_{ji},q_{i})$, for $m_{jj} \in
\{1,2,\cdots,2^{nR_{jj}}\}$ and $l_{jj} \in \{1,2,\cdots,2^{nR_{jj}'}\}$.

Encoding: To send the message $m_j=(m_{j0},m_{jj})$, the $j$th encoder first
tries to find $l_{j0}$ such that $(q^n,u_j^n(m_{j0},l_{j0}),\\s^n) \in
T_\epsilon^{(n)}$ holds. Then for this specific $l_{j0}$, find $l_{jj}$ such
that $(q^n,u_j^n(m_{j0},l_{j0}),v_j^n(m_{j0},l_{j0},m_{jj},l_{jj}),s^n) \in
T_\epsilon^{(n)}$ holds. If successful, the $j$th encoder sends
$v_{j}^{n}(m_{j0},l_{j0},m_{jj},l_{jj})$. If not, the $j$th encoder transmits
$v_{j}^{n}(m_{j0},1,m_{jj},1)$.

Decoding: Decoder $1$ finds the unique message pair
$(\hat{m}_{10},\hat{m}_{11})$ such that
$(q^n,u_1^n(\hat{m}_{10},\hat{l}_{10}),u_2^n(\hat{m}_{20},\hat{l}_{20}),v_1^n(\hat{m}_{10},\\\hat{l}_{10},\hat{m}_{11},\hat{l}_{11}),y_1^n)
\in T_{\epsilon}^{(n)}$ for some $\hat{l}_{10} \in
\{1,2,\cdots,2^{nR_{10}'}\}$, $\hat{m}_{20} \in
\{1,2,\cdots,2^{nR_{20}}\}$,$\hat{l}_{20} \in \{1,2,\cdots,2^{nR_{20}'}\}$,
and $\hat{l}_{11} \in \{1,2,\cdots,2^{nR_{11}'}\}$. If no such unique pair
exists, the decoder declares an error. Decoder $2$ determines the unique
message pair $(\hat{m}_{20},\hat{m}_{22})$ similarly.

Analysis of probability of error: Similar to the proof in Theorem
\ref{theorem_1}, we assume message $(1,1)$ and $(1,1)$ are sent for both
transmitters. First we consider the encoding error probability at transmitter
$1$. Define the following error events:
\setlength{\arraycolsep}{2pt}{\small
\begin{eqnarray*} \xi_1' &=&
\left\{\left(q^n,u_1^n\left(1,l_{10}\right),s^n\right) \notin T_\epsilon^{(n)}
\textrm{ for all } l_{10} \in
\{1,2,\cdots,2^{nR_{10}'}\}\right\}, \\
\xi_2' &=&
\left\{\left(q^n,u_1^n(m_{10},l_{10}),v_1^n\left(1,l_{10},1,l_{11}\right),s^n\right)
\notin T_\epsilon^{(n)} \textrm{ for all } l_{11} \in
\{1,2,\cdots,2^{nR_{11}'}\} \textrm{ and previously found typical }
l_{10}\big|\bar{\xi}_1'\right\}.
\end{eqnarray*}}

The probability of the error event $\xi_1'$ can be bounded as:
\begin{eqnarray*}
P(\xi_1') &=& \prod_{l_{10}=1}^{2^{nR_{10}'}} \left(1-P\left(\left\{\left(q^n,u_1^n\left(1,l_{10}\right),s^n\right) \in T_\epsilon^{(n)}\right\}\right)\right)\\
&\leq& \left(1-2^{-n\left(I(U_1;S|Q)+\delta_1'(\epsilon)\right)}\right)^{2^{nR_{10}'}}\\
&\leq& e^{-2^{n\left(R_{10}'-I(U_1;S|Q)+\delta_1'(\epsilon)\right)}},
\end{eqnarray*}
where $\delta_1'(\epsilon)\to 0$ as $\epsilon \to 0$. Therefore, the
probability of $\xi_1'$ goes to $0$ as $n\to\infty$ if
\begin{equation}\label{eq_prob_hatxi1}
R_{10}' \geq I(U_1;S|Q).
\end{equation}
Similarly, for the previously found typical $l_{10}$, the probability of
$\xi_2'$ can be upper-bounded as:
\begin{eqnarray*}
P(\xi_2') &=& \prod_{l_{11}=1}^{2^{nR_{11}'}} \left(1-P\left(\left\{\left(q^n,u_1^n\left(1,l_{10}\right),v_1^n\left(1,l_{10},1,l_{11}\right),s^n\right) \in T_\epsilon^{(n)}\right\}\right)\right)\\
&\leq& \left(1-2^{n\left(H(Q,U_1,V_1,S)-H(Q,U_1,S)-H(V_1|U_1,Q)-\delta_2'(\epsilon)\right)}\right)^{2^{nR_{11}'}}\\
&\leq& \left(1-2^{-n\left(I(V_1;S|U_1,Q)+\delta_2'(\epsilon)\right)}\right)^{2^{nR_{11}'}}\\
&\leq& e^{-2^{n\left(R_{11}'-I(V_1;S|U_1,Q)+\delta_2'(\epsilon)\right)}},
\end{eqnarray*}
where $\delta_2'(\epsilon)\to 0$ as $\epsilon \to 0$. Therefore, the
probability of $\xi_2'$ goes to $0$ as $n\to\infty$ if
\begin{equation}\label{eq_prob_hatxi2}
R_{11}' \geq I(V_1;S|U_1,Q).
\end{equation}

The encoding error probability at transmitter $1$ can be calculated as:
\[
P_{\textrm{enc}1} = P(\xi_1') + P(\xi_2'),
\]
which goes to $0$ as $n\to \infty$ if \eqref{eq_prob_hatxi1} and
\eqref{eq_prob_hatxi2} are satisfied.

Now we consider the error analysis at the decoder $1$. Denote the right
Gel'fand-Pinsker coding indices chosen by the encoders as $(L_{10},L_{11})$
and $(L_{20},L_{22})$. Define the following error events:
\setlength{\arraycolsep}{2pt}{\small
\begin{eqnarray*}
\xi_{31}' &=&  \left\{\left(q^n,u_1^n\left(1,L_{10}\right),u_2^n\left(1,L_{20}\right),v_1^n\left(1,L_{10},m_{11},l_{11}\right),y_1^n\right) \in T_\epsilon^{(n)} \textrm{ for } m_{11} \neq 1, \textrm{ and some } l_{11}\right\},\\
\xi_{32}' &=&  \left\{\left(q^n,u_1^n\left(1,L_{10}\right),u_2^n\left(1,l_{20}\right),v_1^n\left(1,L_{10},m_{11},l_{11}\right),y_1^n\right) \in T_\epsilon^{(n)} \textrm{ for } m_{11} \neq 1, \textrm{ and some } l_{11},l_{20}\neq L_{20}\right\},\\
\xi_{33}' &=&  \left\{\left(q^n,u_1^n\left(1,l_{10}\right),u_2^n\left(1,L_{20}\right),v_1^n\left(1,l_{10},m_{11},l_{11}\right),y_1^n\right) \in T_\epsilon^{(n)} \textrm{ for } m_{11} \neq 1, \textrm{ and some } l_{11},l_{10}\neq L_{10}\right\},\\
\xi_{34}' &=&  \left\{\left(q^n,u_1^n\left(1,l_{10}\right),u_2^n\left(1,l_{20}\right),v_1^n\left(1,l_{10},m_{11},l_{11}\right),y_1^n\right) \in T_\epsilon^{(n)} \textrm{ for } m_{11} \neq 1, \textrm{ and some } l_{11},l_{10}\neq L_{10},l_{20}\neq L_{20}\right\},\\
\xi_{41}' &=& \left\{\left(q^n,u_1^n\left(m_{10},l_{10}\right),u_2^n\left(1,L_{20}\right),v_1^n\left(m_{10},l_{10},1,L_{11}\right),y_1^n\right) \in T_\epsilon^{(n)} \textrm{ for } m_{10} \neq 1, \textrm{ and some } l_{10}\right\},\\
\xi_{42}' &=& \left\{\left(q^n,u_1^n\left(m_{10},l_{10}\right),u_2^n\left(1,l_{20}\right),v_1^n\left(m_{10},l_{10},1,L_{11}\right),y_1^n\right) \in T_\epsilon^{(n)} \textrm{ for } m_{10} \neq 1, \textrm{ and some } l_{10},l_{20}\neq L_{20}\right\},\\
\xi_{43}' &=& \left\{\left(q^n,u_1^n\left(m_{10},l_{10}\right),u_2^n\left(1,L_{20}\right),v_1^n\left(m_{10},l_{10},1,l_{11}\right),y_1^n\right) \in T_\epsilon^{(n)} \textrm{ for } m_{10} \neq 1, \textrm{ and some } l_{10},l_{11}\neq L_{11}\right\},\\
\xi_{44}' &=& \left\{\left(q^n,u_1^n\left(m_{10},l_{10}\right),u_2^n\left(1,l_{20}\right),v_1^n\left(m_{10},l_{10},1,l_{11}\right),y_1^n\right) \in T_\epsilon^{(n)} \textrm{ for } m_{10} \neq 1, \textrm{ and some } l_{10},l_{20}\neq L_{20},l_{11}\neq L_{11}\right\},\\
\xi_{51}' &=& \left\{\left(q^n,u_1^n\left(m_{10},l_{10}\right),u_2^n\left(1,L_{20}\right),v_1^n\left(m_{10},l_{10},m_{11},l_{11}\right),y_1^n\right) \in T_\epsilon^{(n)} \textrm{ for } m_{10} \neq 1,\ m_{11} \neq 1, \textrm{ and some } l_{10},l_{11}\right\},\\
\xi_{52}' &=& \left\{\left(q^n,u_1^n\left(m_{10},l_{10}\right),u_2^n\left(1,l_{20}\right),v_1^n\left(m_{10},l_{10},m_{11},l_{11}\right),y_1^n\right) \in T_\epsilon^{(n)} \textrm{ for } m_{10} \neq 1,\ m_{11} \neq 1, \textrm{ and some } l_{10},l_{11},l_{20}\neq L_{20}\right\},\\
\xi_{61}' &=& \left\{\left(q^n,u_1^n\left(1,L_{10}\right),u_2^n\left(m_{20},l_{20}\right),v_1^n\left(1,L_{10},m_{11},l_{11}\right),y_1^n\right) \in T_\epsilon^{(n)} \textrm{ for } m_{20} \neq 1,\ m_{11} \neq 1, \textrm{ and some } l_{20},l_{11}\right\},\\
\xi_{62}' &=& \left\{\left(q^n,u_1^n\left(1,l_{10}\right),u_2^n\left(m_{20},l_{20}\right),v_1^n\left(1,l_{10},m_{11},l_{11}\right),y_1^n\right) \in T_\epsilon^{(n)} \textrm{ for } m_{20} \neq 1,\ m_{11} \neq 1, \textrm{ and some } l_{20},l_{11},l_{10}\neq L_{10}\right\},\\
\xi_{71}' &=& \left\{\left(q^n,u_1^n\left(m_{10},l_{10}\right),u_2^n\left(m_{20},l_{20}\right),v_1^n\left(m_{10},l_{10},1,L_{11}\right),y_1^n\right) \in T_\epsilon^{(n)} \textrm{ for } m_{10} \neq 1,\ m_{20} \neq 1, \textrm{ and some } l_{10},l_{20}\right\},\\
\xi_{72}' &=& \left\{\left(q^n,u_1^n\left(m_{10},l_{10}\right),u_2^n\left(m_{20},l_{20}\right),v_1^n\left(m_{10},l_{10},1,l_{11}\right),y_1^n\right) \in T_\epsilon^{(n)} \textrm{ for } m_{10} \neq 1,\ m_{20} \neq 1, \textrm{ and some } l_{10},l_{20},l_{11}\neq L_{11}\right\},\\
\xi_8' &=& \big\{\left(q^n,u_1^n\left(m_{10},l_{10}\right),u_2^n\left(m_{20},l_{20}\right),v_1^n\left(m_{10},l_{10},m_{11},l_{11}\right),y_1^n\right) \in T_\epsilon^{(n)} \textrm{ for } m_{10} \neq 1,\ m_{20} \neq 1,\ m_{11}\neq 1, \\ && \textrm{ and some } l_{10},l_{20},l_{11}\big\}.
\end{eqnarray*}
}
The probability of $\xi_{31}'$ can be bounded as follows:
\begin{eqnarray*}
P(\xi_{31}') &=& \sum_{m_{11}=2}^{2^{nR_{11}}}\ \sum_{l_{11}=1}^{2^{R_{11}'}} P\left(\{\left(q^n,u_1^n\left(1,L_{10}\right),u_2^n\left(1,L_{20}\right),v_1^n\left(1,L_{10},m_{11},l_{11}\right),y_1^n\right) \in T_\epsilon^{(n)}\}\right)\\
&\leq& 2^{n\left(R_{11}+R_{11}'\right)} \sum_{(q^n,u_1^n,u_2^n,v_1^n,y_1^n)\in T_\epsilon^{(n)}} p(q^n)p(u_1^n|q^n)p(u_2^n|q^n)p(v_1^n|u_1^n,q^n)p(y_1^n|u_1^n,u_2^n,q^n)\\
&\leq& 2^{n\left(R_{11}+R_{11}'\right)} 2^{-n\left(H(Q,U_1,V_1)+H(U_2|Q)+H(Y_1|U_1,U_2,Q)-H(Q,U_1,U_2,V_1,Y_1)-\delta_3'(\epsilon)\right)}\\
&\leq& 2^{n\left(R_{11}+R_{11}'\right)} 2^{-n\left(I(U_1,V_1;U_2|Q)+I(V_1;Y_1|U_1,U_2,Q)-\delta_3'(\epsilon)\right)},
\end{eqnarray*}
where $\delta_3'(\epsilon)\to 0$ as $\epsilon \to 0$. Obviously, the probability that $\xi_{31}'$ happens goes to $0$ if
\begin{equation}\label{eq_prob_hatxi31}
R_{11}+R_{11}' \leq I(U_1,V_1;U_2|Q)+I(V_1;Y_1|U_1,U_2,Q).
\end{equation}
Similarly, the error probability corresponding to the other
error events goes to $0$, respectively, if
\setlength{\arraycolsep}{2pt}{\small
\begin{eqnarray}
\label{eq_prob_hatxi32} R_{11}+R_{11}'+R_{20}' &\leq& I(U_1,V_1;U_2|Q)+I(V_1,U_2;Y_1|U_1,Q),\\
\label{eq_prob_hatxi33} R_{11}+R_{10}'+R_{11}' &\leq& I(U_1,V_1;U_2|Q)+I(U_1,V_1;Y_1|U_2,Q),\\
\label{eq_prob_hatxi34} R_{11}+R_{10}'+R_{11}'+R_{20}' &\leq& I(U_1,V_1;U_2|Q)+I(U_1,V_1,U_2;Y_1|Q),\\
\label{eq_prob_hatxi41} R_{10}+R_{10}' &\leq& I(U_1,V_1;U_2|Q)+I(U_1,V_1;Y_1|U_2,Q),\\
\label{eq_prob_hatxi42} R_{10}+R_{10}'+R_{20}' &\leq& I(U_1,V_1;U_2|Q)+I(U_1,V_1,U_2;Y_1|Q),\\
\label{eq_prob_hatxi43} R_{10}+R_{10}'+R_{11}' &\leq& I(U_1,V_1;U_2|Q)+I(U_1,V_1;Y_1|U_2,Q),\\
\label{eq_prob_hatxi44} R_{10}+R_{10}'+R_{11}'+R_{20}' &\leq& I(U_1,V_1;U_2|Q)+I(U_1,V_1,U_2;Y_1|Q),\\
\label{eq_prob_hatxi51} R_{10}+R_{11}+R_{10}'+R_{11}' &\leq& I(U_1,V_1;U_2|Q)+I(U_1,V_1;Y_1|U_2,Q), \\
\label{eq_prob_hatxi52} R_{10}+R_{11}+R_{10}'+R_{11}'+R_{20}' &\leq& I(U_1,V_1;U_2|Q)+I(U_1,V_1,U_2;Y_1|Q), \\
\label{eq_prob_hatxi61} R_{11}+R_{20}+R_{11}'+R_{20}' &\leq& I(U_1,V_1;U_2|Q)+I(V_1,U_2;Y_1|U_1,Q), \\
\label{eq_prob_hatxi62} R_{11}+R_{20}+R_{10}'+R_{11}'+R_{20}' &\leq& I(U_1,V_1;U_2|Q)+I(U_1,V_1,U_2;Y_1|Q), \\
\label{eq_prob_hatxi71} R_{10}+R_{20}+R_{10}'+R_{20}' &\leq& I(U_1,V_1;U_2|Q)+I(U_1,V_1,U_2;Y_1|Q), \\
\label{eq_prob_hatxi72} R_{10}+R_{20}+R_{10}'+R_{11}'+R_{20}' &\leq& I(U_1,V_1;U_2|Q)+I(U_1,V_1,U_2;Y_1|Q), \\
\label{eq_prob_hatxi8} R_{10}+R_{11}+R_{20}+R_{10}'+R_{11}'+R_{20}' &\leq& I(U_1,V_1;U_2|Q)+I(U_1,V_1,U_2;Y_1|Q).
\end{eqnarray}}
Note that there are some redundant inequalities in
\eqref{eq_prob_hatxi31}-\eqref{eq_prob_hatxi8}: \eqref{eq_prob_hatxi32} is
implied by \eqref{eq_prob_hatxi61}; \eqref{eq_prob_hatxi33} is implied by
\eqref{eq_prob_hatxi51}; \eqref{eq_prob_hatxi41} is implied by
\eqref{eq_prob_hatxi43}; \eqref{eq_prob_hatxi42} is implied by
\eqref{eq_prob_hatxi71}; \eqref{eq_prob_hatxi43} is implied by
\eqref{eq_prob_hatxi51}; \eqref{eq_prob_hatxi34}, \eqref{eq_prob_hatxi44},
\eqref{eq_prob_hatxi52}, \eqref{eq_prob_hatxi62}, \eqref{eq_prob_hatxi71}, and
\eqref{eq_prob_hatxi72} are implied by \eqref{eq_prob_hatxi8}. By combining
with the error analysis at the encoder, we can recast the rate constraints
\eqref{eq_prob_hatxi31}-\eqref{eq_prob_hatxi8} as:
\setlength{\arraycolsep}{2pt}{\small
\begin{eqnarray*}
R_{11} &\leq& I(U_1,V_1;U_2|Q)+I(V_1;Y_1|U_1,U_2,Q)-I(V_1;S|U_1,Q),\\
R_{10}+R_{11} &\leq& I(U_1,V_1;U_2|Q)+I(U_1,V_1;Y_1|U_2,Q)-I(U_1,V_1;S|Q),\\
R_{11}+R_{20} &\leq& I(U_1,V_1;U_2|Q)+I(V_1,U_2;Y_1|U_1,Q)-I(V_1;S|U_1,Q)-I(U_2;S|Q),\\
R_{10}+R_{11}+R_{20} &\leq& I(U_1,V_1;U_2|Q)+I(U_1,V_1,U_2;Y_1|Q)-I(U_1,V_1;S|Q)-I(U_2;S|Q).
\end{eqnarray*}}

The error analysis for transmitter $2$ and decoder $2$ is similar to the above
procedures and is omitted here. Correspondingly,
\eqref{eq_rate_constraint_2_21} to \eqref{eq_rate_constraint_2_24} show the
rate constraints for user $2$. Furthermore, the right-hand sides of the
inequalities \eqref{eq_rate_constraint_2_11} to
\eqref{eq_rate_constraint_2_24} are guaranteed to be non-negative when
choosing the probability distribution. As long as
\eqref{eq_rate_constraint_2_11} to \eqref{eq_rate_constraint_2_24} are
satisfied, the probability of error can be bounded by the sum of the error
probability at the encoders and the decoders, which goes to $0$ as
$n\to\infty$.

\newpage

\end{document}